\newtheorem{thm}{Theorem}
\newtheorem{assumption}{Assumption}
\newtheorem{definition}{Definition}
\newtheorem{lemma}{Lemma}
\newtheorem{corollary}{Corollary}
\newtheorem{proposition}{Proposition}
\begin{document}
\title{On Discrete Ambiguity Functions of Random Communication Waveforms}
\author{
Ying Zhang,~\IEEEmembership{Graduate Student Member,~IEEE}, Fan Liu,~\IEEEmembership{Senior Member,~IEEE},
Yifeng Xiong,~\IEEEmembership{Member,~IEEE}, \\Weijie Yuan,~\IEEEmembership{Senior Member,~IEEE}, Shuangyang Li,~\IEEEmembership{Member,~IEEE}, 
Le Zheng,~\IEEEmembership{Senior Member,~IEEE}, \\Tony Xiao Han,~\IEEEmembership{Senior Member,~IEEE}, Christos Masouros,~\IEEEmembership{Fellow,~IEEE}, and Shi Jin,~\IEEEmembership{Fellow,~IEEE}

\thanks{This work was supported in part by the National Science and Technology Major Projects of China under Grant 2025ZD1302000, and in part by the National Natural Science Foundation of China (NSFC) under Grant 62522107. (\textit{Corresponding authors: Fan Liu; Yifeng Xiong})}
\thanks{Y. Zhang and  W. Yuan are with the School of Automation and Intelligent Manufacturing, Southern University of Science
and Technology, Shenzhen 518055, China. (email: zhangying2024@mail.sustech.edu.cn, yuanwj@sustech.edu.cn).}
\thanks{F. Liu and S. Jin are with the National Mobile Communications Research Laboratory,
Southeast University, Nanjing 210096, China (e-mail: fan.liu@seu.edu.cn, jinshi@seu.edu.cn).}
\thanks{Y. Xiong is with the School of Information and Electronic Engineering,
Beijing University of Posts and Telecommunications, Beijing 100876, China. (e-mail: yifengxiong@bupt.edu.cn)}
\thanks{S. Li is with the Electrical Engineering and Computer Science Department, Technical University of Berlin, 10623 Berlin, Germany (e-mail: shuangyang.li@tu-berlin.de).}
\thanks{L. Zheng is with the Radar Research Laboratory, School of Information and Electronics, Beijing Institute of Technology, Beijing 100081, China (e-mail: le.zheng.cn@gmail.com).}
\thanks{T.-X. Han is with the Wireless Technology Lab, Huawei Technologies Co., Ltd., Shenzhen, 518129, China (e-mail: tony.hanxiao@huawei.com). }
\thanks{C. Masouros is with the Department of Electronic and Electrical Engineering, University College London, London, WC1E 7JE, UK (e-mail: chris.masouros@ieee.org).}
}

\maketitle
\begin{abstract} 
This paper provides a fundamental characterization of the discrete ambiguity functions (AFs) of random communication waveforms under arbitrary orthonormal modulation with random constellation symbols, which serve as a key metric for evaluating the delay-Doppler sensing performance in future ISAC applications. A unified analytical framework is developed for two types of AFs, namely the discrete periodic AF (DP-AF) and the fast-slow time AF (FST-AF), where the latter may be seen as a small-Doppler approximation of the DP-AF. By analyzing the expectation of squared AFs, we derive exact closed-form expressions for both the expected sidelobe level (ESL) and the expected integrated sidelobe level (EISL) under the DP-AF and FST-AF formulations. For the DP-AF, we prove that the normalized EISL is identical for all orthogonal waveforms. To gain structural insights, we introduce a matrix representation based on the finite Weyl–Heisenberg (WH) group, where each delay-Doppler shift corresponds to a WH operator acting on the ISAC signal. This WH-group viewpoint yields sharp geometric constraints on the lowest sidelobes: The minimum ESL can only occur along a one-dimensional cut or over a set of widely dispersed delay-Doppler bins. Consequently, no waveform can attain the minimum ESL over any compact two-dimensional region, leading to a no-optimality (no-go) result under the DP-AF framework. For the FST-AF, the closed-form ESL and EISL expressions reveal a constellation-dependent regime governed by its kurtosis: The Orthogonal Frequency Division Multiplexing (OFDM) modulation achieves the minimum ESL for sub-Gaussian constellations, whereas the Orthogonal Time Frequency Space (OTFS) waveform becomes optimal for super-Gaussian constellations. Finally, four representative waveforms, namely, Single-Carrier (SC), OFDM, OTFS, and Affine Frequency Division Multiplexing (AFDM), are examined under both frameworks, and all theoretical results are verified through numerical examples.

 %Finally, we validate the theoretical findings through extensive numerical simulations.

%This paper studies the ambiguity function (AF) characteristics of communication-centric integrated sensing and communication (ISAC) waveforms, where the AF serves as a key metric for sensing performance. To this end, we develop two general analytical frameworks to evaluate the AF of ISAC signals constructed from arbitrary orthonormal bases and random data symbols. Within the proposed two frameworks, sensing performance is respectively assessed using the discrete periodic ambiguity function (DP-AF) and the range-Doppler map (RDM) AF, with the expected integrated sidelobe level (EISL) and expected sidelobe level introduced as performance metrics. We derive general closed-form expressions for both EISL and the expected sidelobe level that apply to arbitrary modulation constellations and communication-centric ISAC signals. For four representative waveform structures—single-carrier (SC), orthogonal frequency-division multiplexing (OFDM), orthogonal time-frequency space (OTFS), and affine frequency-division multiplexing (AFDM)—explicit expressions are provided for both frameworks.
%For the DP-AF, we further show that the normalized EISL remains invariant across all constellations and random ISAC waveforms. For the FST-AF, we prove that under sub-Gaussian and super-Gaussian constellations, OFDM and OTFS respectively achieve the globally minimal EISL and per-bin sidelobe levels in the delay-Doppler domain. All theoretical findings are validated through extensive numerical simulations.

\end{abstract}

\begin{IEEEkeywords} Integrated Sensing and Communication (ISAC), ambiguity function (AF), sidelobe level, Weyl-Heisenberg group.

\end{IEEEkeywords}

\vspace{-0.5em}
\section{Introduction}
\IEEEPARstart{I}{ntegrated} Sensing and Communication (ISAC) has emerged as a transformative design paradigm for the sixth-generation (6G) wireless networks \cite{challengesfor6G2023, 11206742}. In 2023, the International Telecommunication Union (ITU) released its Recommendation Framework for IMT-2030 and identified ISAC as one of the six primary usage scenarios for 6G \cite{ITU2023}. Unlike conventional systems where sensing and communication are treated as independent functionalities, ISAC envisions a tightly coupled integration of the two to support a wide range of emerging applications, including Smart Cities and Homes, Vehicular Networks, and the Low-Altitude Economy \cite{11206742}. Such integration enables the shared use of frequency and power resources, significantly enhancing spectral and energy efficiency \cite{MengkaitaoTWC,zhang2021overview}, while also facilitating mutual performance gains through sensing-assisted communications and communication-assisted sensing \cite{Liufan2023seventyyears}.

One of the fundamental challenges in ISAC is to design waveforms that can simultaneously enable accurate target sensing and maintain reliable communication performance. Existing waveform design strategies are generally grouped into three categories: Sensing-centric, communication-centric, and joint design approaches \cite{7855671, Liufan2018optimalwavedesign}. Sensing-centric methods embed communication data into conventional radar signals, while communication-centric schemes repurpose standardized communication waveforms to perform sensing tasks \cite{R.M.M_1963, MR_2003, 8424569, Sturm2011}. Joint design approaches construct novel ISAC waveforms with the objective of achieving a balanced tradeoff between sensing accuracy and communication efficiency \cite{liu-2021-cramer, 9924202, 9724205}. Owing to their inherent compatibility with current cellular protocols, communication-centric designs can be seamlessly integrated into existing network infrastructures without significant hardware modifications, thereby enabling practical ISAC deployment with minimal impact on established communication performance.

In current communication-centric ISAC designs, sensing functionality is largely supported by a small set of physical layer reference signals, such as channel state information reference signals (CSI-RS) and sounding reference signals (SRS) \cite{9805471,10904093,10630588}. These signaling elements are deterministic and well structured, which provides desirable auto-and cross-correlation characteristics for delay-Doppler estimation tasks. Nevertheless, they typically account for only around 10\% of the entire frame, leaving the majority of resources unused from the sensing perspective. Given the increasingly stringent sensing accuracy requirements foreseen for 6G networks, relying solely on these signals is no longer adequate. It is therefore natural to consider exploiting the remaining 90\% time-frequency resources, which primarily transmit communication data payloads, in order to reinforce sensing capability and improve overall resource efficiency. This strategy, however, introduces a fundamental conflict: While deterministic signals are ideal for radar sensing, data payloads are inherently random, and such randomness can significantly degrade sensing performance \cite{LushihangNetwork, LushihangTSP}. This critical issue has recently been identified as the {\it{deterministic-random tradeoff (DRT)}} in ISAC systems \cite{xiong2023fundamental, Xiong2024Magazine}. Understanding how sensing behaves under such randomness is therefore essential. Accordingly, a careful theoretical analysis is required to evaluate the sensing capability when the waveform carries random data.

The ambiguity function (AF), first introduced by Woodward in 1953 \cite{woodward1953probability}, characterizes the two-dimensional (2D) self-convolution of a deterministic radar waveform in the delay-Doppler domain. For periodic signals, or signals equipped with a cyclic prefix (CP), the AF can be further generalized to the periodic AF \cite{303781,144564}. As a foundational tool in radar sensing, the AF enables the evaluation of two key performance metrics: delay-Doppler resolution, governed by the mainlobe width, and mutual interference among targets, determined by the sidelobe level. Consequently, shaping the AF or its zero-Doppler slice (also known as the auto-correlation function (ACF)) to achieve desirable ambiguity characteristics has been a longstanding research topic since the early development of radar systems \cite{11003993,10935731,1056980}. Beyond waveform design, the AF is also closely linked to fundamental estimation-theoretic limits. Under a single-target scenario with additive white Gaussian noise (AWGN), maximum likelihood estimation (MLE) reduces to identifying the peak of the AF \cite{kay1998fundamentals1}. Furthermore, the curvature of the AF mainlobe is proportional to the Fisher information and thus directly linked to the Cram\'er–Rao bound (CRB) \cite{van2004detection}. In addition, the Ziv–Zakai bound (ZZB), widely recognized as a tighter bound than the CRB \cite{556118,10884827}, can be expressed as a nonlinear functional of the AF in the single-target AWGN case \cite{6894212}.

In addition to Woodward’s conventional AF definition, the fast-slow-time (FST) ambiguity function (FST-AF) also plays a critical role in sensing performance assessment. Specifically, the fast-time dimension corresponds to the samples within a single radar pulse and is used for delay estimation by measuring their arrival times. The slow-time dimension, in contrast, spans consecutive pulses and enables Doppler estimation through phase variations across pulses \cite{9492001, 10242003, 8793214}. In radar signal processing, the FST representation typically aligns with the widely adopted ``stop-and-go'' model, where the Doppler shift is assumed constant within each pulse and only varies across pulses \cite{Richards2005Fundamentals}. This approximation is valid in scenarios with small or moderate Doppler frequencies, where inter-carrier interference (ICI) is negligible. The FST-AF is defined as the two-dimensional discrete Fourier transform (2D-DFT) of the element-wise squared signal in the frequency–slow-time domain. At the receiver, the 2D-DFT operation over the echo signal yields the multi-target delay-Doppler profile, which may be interpreted as a noisy linear combination of multiple time-frequency shifted FST-AFs.

While the AF theory has been extensively developed for deterministic radar waveforms, its direct application to communication-centric ISAC signals remains challenging. Due to the inherent randomness of communication data symbols, the resulting AF becomes a random function, whose statistical characteristics strongly depend on the underlying modulation waveform and constellation alphabet. Consequently, assessing the mainlobe and sidelobe behavior based on a single realization of data sequence is no longer meaningful. Instead, the AF must be evaluated in an statistical sense, which forms the central motivation of this work. 

In our previous study \cite{liu2024ofdmachieveslowestranging}, we derived closed-form expressions for the average squared ACF of random communication signals under both periodic and aperiodic configurations. The results rigorously proved that, among all orthogonal waveforms employing Quadrature Amplitude Modulation (QAM) or Phase Shift Keying (PSK) constellations, OFDM achieves the lowest ranging sidelobe level, thereby enabling superior ranging accuracy. These findings were further extended to pulse-shaped signals in \cite{11037613}, where a metaphorical ``iceberg hidden in the sea'' structure was introduced to describe the statistical behavior of the ACF. Specifically, the ``iceberg'' corresponds to the squared mean of the ACF, determined by the squared ACF of the adopted pulse-shaping filter, whereas the ``sea level'', arising from the randomness of data payloads, represents the variance of the ACF. However, these studies were restricted to the zero-Doppler characteristics of the AF. A complete characterization of the ambiguity properties of random communication signals in the full delay-Doppler domain therefore remains largely unexplored. This necessitates a deeper investigation into the statistical ambiguity properties of communication-centric ISAC waveforms, beyond conventional deterministic waveform analysis for radar systems.

To bridge this gap, this work focuses on analyzing the discrete AFs of orthogonal communication waveforms. Specifically, we aim to address the following two fundamental questions:
\begin{itemize}
    \item How to evaluate the average delay-Doppler sidelobe level of communication waveforms carrying random symbols?
    \item What is the optimal communication-centric waveform for joint delay and Doppler estimation?
\end{itemize}

To answer the first question, we investigate both the discrete periodic AF (DP-AF) and the FST-AF of communication waveforms with independently and identically distributed (i.i.d.) constellation symbols. In particular, we derive generic closed-form expressions for the average squared DP-AF and FST-AF under arbitrary modulation bases. We further examine several representative waveform designs commonly employed in communication systems, including SC, OFDM, OTFS, and AFDM.

To answer the second question, we introduce the expected sidelobe level (ESL) and the expected integrated sidelobe level (EISL) as sensing performance metrics for random communication waveforms. We show that the normalized EISL of the DP-AF remains unchanged across different modulation schemes and constellations. This is a direct consequence of the volume invariance of classical AFs. In addition, we prove that no orthogonal waveform can achieve the minimum ESL over any compact two-dimensional region in the delay-Doppler plane. At best, the lowest sidelobes may occur either along a one-dimensional cut, such as zero-delay or zero-Doppler cuts, or at a set of well-separated delay-Doppler bins. Therefore, there does not exist any waveform that is globally or locally optimal for minimizing the ESL in the DP-AF framework. In contrast, under the FST-AF formulation, OFDM achieves the minimum EISL and average sidelobe level when using sub-Gaussian constellations such as QAM and PSK, while OTFS attains the minimum values when employing super-Gaussian constellations.

The remainder of this paper is organized as follows. Sec. \ref{sec_2} introduces the system model and formally defines the ESL and EISL metrics for both AF formulations. Secs. \ref{sec_3} and \ref{sec_4} analyze the DP-AF and FST-AF, respectively, and further investigate several representative communication waveforms. Finally, Sec. \ref{sec_5} concludes the paper.

{\emph{Notations}}: Matrices are denoted by bold uppercase letters (e.g., $\mathbf{U}$), vectors are represented by bold lowercase letters (e.g., $\mathbf{x}$), and scalars are denoted by normal font (e.g., $N$); The $n$th entry of a vector $\mathbf{s}$, and the $(m,n)$-th entry of a matrix $\mathbf{A}$ are denoted as $s_n$ and $a_{m.n}$, respectively; $\otimes$, $\odot$ and $\operatorname{vec}\left(\cdot\right)$ denote the Kronecker product, the Hadamard product and the vectorization, $\left(\cdot\right)^T$, $\left(\cdot\right)^H$, and $\left(\cdot\right)^\ast$ stand for transpose, Hermitian transpose, and the complex conjugate of the matrix; The entry-wise square of a matrix $\mathbf{X}$ is denoted as $\left| \mathbf{X} \right|^2 = \mathbf{X} \odot \mathbf{X}^\ast$; The $\ell_p$ norm and Frobenius norm are written as $\left\| \cdot\right\|_p$ and $\left\| \cdot\right\|_F$, and $\mathbb{E}(\cdot)$ represents the expectation operation; $\circledast$ denotes the circular convolution; The notation $ \mathrm{Diag}(\mathbf{a})$ denotes the diagonal matrix obtained by placing the entries of $\mathbf{a}$ on its main diagonal; $\mathbf{1}_{N,N}$ and $\mathbf{1}_{N}$ represent the all-one matrix with size $N \times N$ and the all-one vector with length $N$; $\mathrm{Card}\left(\cdot\right)$ denotes the cardinality of a set; $\langle \cdot \rangle_N$ denotes the modulo $N$ operation; The symbol $\delta_{m,n}$ denotes the Kronecker-Delta function:
\begin{equation*}
    \delta_{m,n} = \left\{ 
    \begin{aligned}
        & 0, \quad m \ne n; \\
        & 1, \quad m=n.
    \end{aligned}
    \right.
\end{equation*}

\section{System Model and Performance Metrics}\label{sec_2}

\subsection{Communication-Centric ISAC Signal Model}

In this section, we introduce two types of communication-centric ISAC signal models, namely, the one-dimensional (1D) and two-dimensional (2D) models. While the 1D signaling naturally models the consecutive transmission of the time-domain signal, in practical communication systems, multiple constellation symbols are often grouped and mapped into two dimensions, thereby forming a 2D signal model. Specifically, we adopt the DP-AF and FST-AF to analyze the 1D and 2D models, respectively.

\subsubsection{1D Signaling}
Let us consider the transmission of $N$ i.i.d. communication symbols $\mathbf{s} = \left[s_1, s_2, \dots, s_N \right]^T \in \mathbb{C}^{N}$, which are drawn from a proper constellation $\mathcal{S}$, and modulated over an orthonormal basis in the time domain. Such a basis can be represented by a unitary matrix $\mathbf{U} = [\mathbf{u}_1, \mathbf{u}_2,...,\mathbf{u}_N] \in \mathcal{U}(N)$, where $\mathcal{U}(N)$ denotes the unitary group of degree $N$. Consequently, the discrete time-domian signal can be given by
\begin{equation} \label{eq:1-D signal}
    \mathbf{x} = \mathbf{U} \mathbf{s} = \sum_{n=1}^N s_n \mathbf{u}_n\in\mathbb{C}^N.
\end{equation}
This general model captures a wide range of practical communication signaling schemes, including four representative examples: SC, OFDM, OTFS, and AFDM, given as
\begin{subequations}\label{1D signal of four typical signals}
\begin{align}
        &\mathbf{x}_{\rm{SC}} = \mathbf{I}_N \cdot \mathbf{s},\\
        &\mathbf{x}_{\rm{OFDM}} = \mathbf{F}_N^H \cdot \mathbf{s}, \\
        &\mathbf{x}_{\rm{OTFS}} = \left( \mathbf{F}_{N_1}^H \otimes \mathbf{I}_{N_2} \right) \cdot \mathbf{s}, \quad {N_1}{N_2} = N,\\
        &\mathbf{x}_{\rm{AFDM}} = \left( \mathbf{\Lambda}_{c_1}^H \mathbf{F}_N^H \mathbf{\Lambda}_{c_2}^H \right) \cdot \mathbf{s},
\end{align}
\end{subequations}
where $\mathbf{F}_N$ is the normalized discrete Fourier transform (DFT) matrix of size $N$, with its $(m,n)$-th entry being defined as $\frac{1}{\sqrt{N}}e^{-\frac{j2\pi(m-1)(n-1)}{N}}$, and
\begin{align}
    &\mathbf{\Lambda}_{c} = \mathrm{Diag} \left( \left[ 1, e^{-j2\pi c}, ...,e^{-j2\pi c (N-1)^2} \right]^T\right):=\mathrm{Diag} \left( \mathbf{c}\right),
%    &\mathbf{\Lambda}_{c_2} = \mathrm{Diag} \left( \left[ 1, e^{-j2\pi c_2}, ...,e^{-j2\pi c_2 (N-1)^2} \right]^T\right)=\mathrm{Diag} \left( \mathbf{c}_2\right),
\end{align}
where $c$ can be either $c_1$ or $c_2$, and $2Nc_1\in\mathbb{Z}$.

To proceed, let us further impose some generic constraints on the considered constellation symbols.
\begin{assumption}[Unit Power and Rotational Symmetry]
    We focus on constellations with unit power, zero mean, and zero pseudo-variance, defined as
    \begin{equation}
        \mathbb{E}( \left|s \right|^2) = 1, \quad\mathbb{E}(s) = 0,\quad\mathbb{E}(s^2) = 0,\quad s\in\mathcal{S}.
    \end{equation}
\end{assumption}

\noindent The unit-power normalization in Assumption 1 enables fair comparison of sensing and communication performance under varying constellation formats. Moreover, most practical constellations, such as the PSK/QAM families, satisfy the zero mean and zero pseudo-variance properties, whereas BPSK and 8-QAM are two outliers that do not satisfy the latter.

We now define the \textit{kurtosis} of a constellation, which serves as a key component in shaping the AF of communication signals, and is given by
\begin{equation}
    \kappa = \frac{\mathbb{E}\left\{|s-\mathbb{E}(s)|^4\right\}}{\mathbb{E}^2\left\{|s-\mathbb{E}(s)|^2\right\}}.
\end{equation}
Under unit power and zero mean conditions, kurtosis reduces to the fourth-order moment $\kappa= \mathbb{E}(|s|^4)$. Using the power mean inequality yields
\begin{equation}
    \mathbb{E} \left( \left|s \right|^4 \right) \geq \mathbb{E}^2\left( \left|s\right|^2\right) = 1.
\end{equation}
The standard complex Gaussian constellation, which has a kurtosis of 2, also satisfies the above criteria. Taking the Gaussian constellation as the baseline, we may define the following two categories of constellations.
\begin{definition}
    [Sub-Gaussian Constellation] A constellation is sub-Gaussian when its kurtosis is less than 2, subject to Assumption 1.
\end{definition}
\begin{definition}
    [Super-Gaussian Constellation] A constellation is sub-Gaussian when its kurtosis is greater than 2, subject to Assumption 1.
\end{definition}
\noindent Specifically, PSK exhibit $\kappa = 1$, while all QAM constellations have $1\leq \kappa < 2$, both of which are sub-Gaussian.

\subsubsection{2D Signaling}
Let us consider a generic 2D modulation scheme, which may be represented as an $N\times M$ time-domain signal matrix as
\begin{equation} \label{eq:2-D signal}
    \mathbf{X} = \mathbf{U} \mathbf{S} \mathbf{V}\in\mathbb{C}^{N\times M},
\end{equation}
where $\mathbf{S}\in\mathbb{C}^{N\times M}$ contains i.i.d. constellation symbols, and $\mathbf{U}\in\mathcal{U}\left(N\right)$ and $\mathbf{V}\in\mathcal{U}\left(M\right)$ are unitary matrices with dimensions $N$ and $M$, respectively. Without the loss of generality, we name each row and column of $\mathbf{X}$ as a fast-time and a slow-time sample, respectively. Accordingly, $\mathbf{U}$ and $\mathbf{V}$ may be respectively treated as fast- and slow-time signaling bases. The fast-time domain is sometimes referred to as the delay domain, whereas the slow-time domain is the Fourier dual of the Doppler domain. Note that in practical systems, the time-domain signal is transmitted in a serial manner, i.e.,
\begin{equation}
    \operatorname{vec}\left(\mathbf{X}\right) = \left(\mathbf{V}^T\otimes\mathbf{U}\right)\operatorname{vec}\left(\mathbf{S}\right)\in\mathbb{C}^{NM\times 1},
\end{equation}
which may be realized through simple parallel-to-serial transform over $\mathbf{X}$.
Accordingly, the 2D counterparts of the four typical modulation schemes in \eqref{1D signal of four typical signals} may be formulated as
\begin{subequations}
\begin{align}\label{2D signal of four typical signals}
    &\mathbf{X}_{\rm SC} = \mathbf{I}_N \cdot \mathbf{S} \cdot \mathbf{I}_M, \\
    &\mathbf{X}_{\rm OFDM} = \mathbf{F}_N^H\cdot\mathbf{S}\cdot\mathbf{I}_M,\\
    &\mathbf{X}_{\rm OTFS} = \mathbf{I}_N\cdot\mathbf{S}\cdot\mathbf{F}_M^H,\\
    &\mathbf{X}_{\rm AFDM} = \left(\mathbf{\Lambda}_{c_1}^{H} \mathbf{F}_N^H \mathbf{\Lambda}_{c_2}^H \right)\cdot\mathbf{S}\cdot\mathbf{I}_M,
\end{align}
\end{subequations}
both of which occupy $N$ fast-time and $M$ slow-time slots.

\subsection{Sensing Signal Processing and Performance Metrics}
\subsubsection{Preliminaries}
Before proceeding with the technical development, we introduce several matrix operators and key concepts that will be used throughout the analysis. Consider a discrete-time signal of length $N$. A time shift by $k$ samples and a Doppler frequency shift indexed by $q$ can be represented via matrix multiplications. In particular, the aperiodic (non-circular) time-shift matrix is denoted as
\begin{align}
    \tilde{\mathbf{J}}_{N,k} := \left[ 
    \begin{matrix}
        \mathbf{0} &\mathbf{0}\\
        \mathbf{I}_{N-k} & \mathbf{0}
    \end{matrix}
    \right]\in\mathbb{R}^{N\times N},
\end{align}
which implements a downward shift by $k$ samples with zero-padding.
Moreover, the frequency-shift matrix is defined by
\begin{align}
    \mathbf{D}_{N,q}:&\nonumber =\mathrm{Diag}\left(1,e^{j\frac{2\pi q}{N}},\ldots,e^{j\frac{2\pi q(N-1)}{N}}\right) \\
    &= \sqrt{N}\mathrm{Diag}(\mathbf{f}_{N,q+1}^\ast),
\end{align}
where $\mathbf{f}_{N,q+1}$ is the $q$-the column of the size-$N$ DFT matrix. 

In multi-carrier modulation systems, a CP is typically appended at the transmitter to mitigate delay spread caused by multipath propagation and to maintain subcarrier orthogonality. Let the CP length be $N_{\text{CP}}\leq N$. The CP may be appended to a length-$N$ signal by multiplying with:
\begin{equation}
    \mathbf{A}_{\rm{CP}}: = \left[ 
    \begin{matrix}
        \mathbf{0}_{ N_{\text{CP}} \times (N-N_{\text{CP}}) }, \mathbf{I}_{N_{\text{CP}}}\\
         \mathbf{I}_N
    \end{matrix}
    \right] \in \mathbb{R}^{(N+N_{\text{CP}})\times N}.
\end{equation}
After reception, the CP is discarded via the CP-removal matrix
\begin{equation}
    \mathbf{R}_{\rm{CP}}:= \left[ \begin{matrix}
        \mathbf{0}_{N \times N_{\text{CP}}},\mathbf{I}_N
    \end{matrix} \right] \in \mathbb{R}^{N\times(N+N_{\text{CP}})}.
\end{equation}
A well-known identity among the above matrices is
\begin{equation}
   \mathbf{R}_{\rm{CP}} \tilde{\mathbf{J}}_{{N+N_{\text{CP}},k}}\mathbf{A}_{\rm{CP}}  = \left[ \begin{matrix}
        \mathbf{0} & \mathbf{I}_k \\
        \mathbf{I}_{N-k} & \mathbf{0}
    \end{matrix}\right]:=\mathbf{J}_{N,k},
\end{equation}
where $\mathbf{J}_{N,k}\in\mathbb{R}^{N\times N}$ denotes the periodic time-shift matrix. This identity shows that CP insertion and removal effectively convert an aperiodic time shift into its periodic counterpart. 

Define $\mathbf{G}_{k,q}:=\mathbf{D}_{N,q}\mathbf{J}_{N,k}$, where $k,q\in\mathbb{Z}_N$. Here, $\mathbb{Z}_N$ denotes the cyclic group of integers modulo $N$, i.e., $\mathbb{Z}_N:=\left\{0,1,\ldots,N-1\right\}$, with all operations performed modulo $N$. It can be shown that the set  
\begin{equation}
    \mathcal{H}\left(N\right):=\left\{e^{j\ell\frac{2\pi}{N}}\mathbf{G}_{k,q}\right\}_{(\ell,k,q)\in\mathbb{Z}_N^3}
\end{equation} 
constitutes a unitary representation of the finite dimensional Weyl–Heisenberg (WH) group. A fundamental property of the WH group is the non-commutation relation between time- and frequency-shifts, which is
\begin{align}\label{non_commutation_property}
    \mathbf{D}_{N,q} \mathbf{J}_{N,k} = e^{-\frac{j 2\pi qk}{N}} \mathbf{J}_{N,k} \mathbf{D}_{N,q}.
\end{align}
With these matrix definitions in place, we now proceed to formally derive the DP-AF and FST-AF.

\subsubsection{DP-AF}
Let us commence with the simplest case where a time-domain signal $\mathbf{x} = \mathbf{U}\mathbf{s}\in\mathbb{C}^{N}$ is appended with a CP, and is then propagated through a single-target channel with delay-Doppler indices $k_0,\tilde{q}_0$, as well as the complex channel gain $\tilde{\beta}_0\in\mathbb{C}$. The noiseless version of the received signal is expressed as
\begin{equation}
    \tilde{\mathbf{x}}  = \tilde{\beta}_0\mathbf{D}_{{N+N_{\text{CP}},\tilde{q}}_0}\tilde{\mathbf{J}}_{{N+N_{\text{CP}},k_0}}\mathbf{A}_{\rm{CP}} \mathbf{x}.
\end{equation}
The CP-removal operation at the receiver yields
\begin{align}
    &\nonumber \mathbf{R}_{\rm{CP}}\tilde{\mathbf{x}} = \tilde{\beta}_0\mathbf{R}_{\rm{CP}}\mathbf{D}_{{N+N_{\text{CP}},\tilde{q}_0}}\tilde{\mathbf{J}}_{{N+N_{\text{CP}},k_0}}\mathbf{A}_{\rm{CP}} \mathbf{x}\\
    &\nonumber = \tilde{\beta}_0\mathrm{Diag}\left(e^{j \frac{2 \pi\tilde{q}_0 N_{\rm{CP}}}{N+N_{\text{CP}}}},..., e^{j \frac{2 \pi\tilde{q}_0 (N+N_{\rm{CP}}-1)}{N + N_{\text{CP}}}}\right)
        \mathbf{J}_{N,k_0} \mathbf{x}\\
    &= \tilde{\beta}_0e^{j \frac{2 \pi\tilde{q}_0 N_{\rm{CP}}}{N+N_{\text{CP}}}} \mathrm{Diag}\left(1,..., e^{j \frac{2 \pi\tilde{q}_0 (N-1)}{N + N_{\text{CP}}}}\right)
        \mathbf{J}_{N,k_0} \mathbf{x}.
\end{align}
By letting $\beta_0 = \tilde{\beta}_0e^{j \frac{2 \pi \tilde{q}_0 N_{\rm{CP}}}{N+N_{\text{CP}}}}$, $q_0=\frac{N\tilde{q}_0}{N+N_{\text{CP}}}$, \footnote{Note that in practical signal processing, such equivalence can be established through down-sampling operation.} we have
\begin{equation}
    \mathbf{R}_{\rm{CP}}\tilde{\mathbf{x}}=\beta_0\mathbf{D}_{N,q}\mathbf{J}_{N,k}\mathbf{x}.
\end{equation}

The above single-path noiseless case may be readily extended to a generic scenario where $L$ targets are located at different ranges and moving with different velocities. Each target is characterized by a complex amplitude $\beta_\ell$, a delay $\tau_\ell$, and a Doppler $\nu_\ell$, $\ell = 1,2,\ldots,L$. After CP removal at the receiver, the received echo signal may be given as
\begin{equation}
     \mathbf{y} = \sum_{\ell=1}^L \beta_\ell \mathbf{D}_{N,q_\ell}\mathbf{J}_{N,k_\ell}\mathbf{x} + \mathbf{z},
\end{equation}
where $\mathbf{z}$ denotes the AWGN. Upon receiving $\mathbf{y}$, a common practice is to matched-filter (MF) $\mathbf{y}$ by a time-frequency shifted counterpart of $\mathbf{x}$, yielding the output:
\begin{align}
    &\nonumber \tilde{y}_{k,q}^{\rm{MF}} = \mathbf{x}^H \mathbf{J}_{N,k}^T \mathbf{D}_{N,q}^\ast \mathbf{y} \\
    &\nonumber = \sum_{\ell=1}^L\beta_\ell \mathbf{x}^H \mathbf{J}_{N,k}^T \mathbf{D}_{N,q}^\ast\mathbf{D}_{N,q_\ell}\mathbf{J}_{N,k_\ell}\mathbf{x} + \tilde{z}_{k,q}\\
    &\nonumber = \sum_{\ell=1}^L\beta_\ell \mathbf{x}^H \mathbf{J}_{N,k}^T \mathbf{D}_{N,{q-q_\ell}}^\ast\mathbf{J}_{N,k_\ell}\mathbf{x}+ \tilde{z}_{k,q}\\
    &\nonumber = \sum_{\ell=1}^L\beta_\ell e^{\frac{j 2\pi k_\ell(q-q_{\ell})}{N}} \mathbf{x}^H \mathbf{J}_{N,k}^T \mathbf{J}_{N,k_\ell}\mathbf{D}_{N,{q-q_\ell}}^\ast\mathbf{x}+ \tilde{z}_{k,q}\\
    &=\sum_{\ell=1}^L\beta_\ell e^{\frac{j 2\pi k_\ell(q-q_{\ell})}{N}} \mathbf{x}^H \mathbf{J}_{N,k-k_\ell}^T \mathbf{D}_{N,{q-q_\ell}}^\ast\mathbf{x}+ \tilde{z}_{k,q},
\end{align}
where $\tilde{z}_{k,q} = \mathbf{x}^H \mathbf{J}_{N,k}^T \mathbf{D}_{N,q}^\ast\mathbf{z}$. The DP-AF may be accordingly defined as
\begin{align}
    \mathcal{A}_{\rm{DP}}(k,q) 
    &=: \mathbf{x}^H \mathbf{J}_{N,k}^T \mathbf{D}_{N,q}^\ast \mathbf{x}, \quad \left(k,q\right)\in\mathbb{Z}_N^2,
\end{align}
under which the MF output can be rewritten as
\begin{equation}
    \tilde{y}_{k,q}^{\rm{MF}} =\sum_{\ell=1}^L\beta_\ell e^{\frac{j 2\pi k_\ell(q-q_{\ell})}{N}} \mathcal{A}_{\rm{DP}}(k-k_\ell,q-q_\ell)+ \tilde{z}_{k,q},
\end{equation}
which is a linear combination of time-frequency shifted DP-AFs. In order to facilitate multi-target detection, it is desirable that the squared output $|\tilde{y}_{k,q}^{\rm{MF}}|^2$ exhibits pronounced peaks at $(k,q)=(k_\ell,q_\ell)$ and low sidelobe levels elsewhere. This is strongly influenced by the overall structure of $\mathcal{A}_{\rm{DP}}(k,q)$, motivating the definition of sidelobe level of $\mathcal{A}_{\rm{DP}}(k,q)$ as
\begin{align}\label{2D AF}
    \nonumber |\mathcal{A}_{\rm{DP}}(k,q)|^2 &= |\mathbf{x}^H \mathbf{J}_{N,k}^T \mathbf{D}_{N,q}^\ast \mathbf{x}|^2, \\
    & = |\mathbf{x}^H\mathbf{D}_{N,q}\mathbf{J}_{N,k} \mathbf{x}|^2,\quad \forall (k,q) \ne (0,0).
%    & = |\sqrt{N} \mathbf{x}^H \mathrm{Diag} (\mathbf{f}_{N,q+1}^{*}) \mathbf{J}_{N,k} \mathbf{x}|^2, \quad \forall k,q \ne 0.
\end{align}
The mainlobe of DP-AF is $|\mathcal{A}_{\rm{DP}}(0,0)|^2 = |\mathbf{x}^H \mathbf{x}|^2 = \|\mathbf{x}\|_2^4$. It follows that the integrated sidelobe level (ISL) can be expressed as
\begin{align}
    {\rm{ISL}}_{\text{DP}} = \sum_{k=0}^{N-1} \sum_{q=0}^{N-1} |{\mathcal{A}_{\rm{DP}}}(k,q)|^2 - |\mathcal{A}_{\rm{DP}}(0,0)|^2.
\end{align}
Due to the random nature of the signal, the DP-AF is a random function. Therefore, we define the ESL as a sensing metric to characterize the average performance, which is expressed as
\begin{align} \label{the expected sidelobe level}
    & \mathbb{E}(\left| \mathcal{A}_{\rm{DP}}(k,q) \right|^2)= \mathbb{E}(|\mathbf{x}^H \mathbf{D}_{N,q}\mathbf{J}_{N,k}  \mathbf{x}|^2), \;\;(k,q)\ne (0,0),
\end{align}
where the expectation is taken over the random symbol vector $\mathbf{s}$. In sight of $|\mathcal{A}_{\rm{DP}}(0,0)|^2  = \|\mathbf{x}\|_2^4$, the average mainlobe level may be computed by \cite{liu2024ofdmachieveslowestranging} 
    \begin{equation}\label{the average mainlobe of AF}
        \mathbb{E}(\left| \mathcal{A}_{\rm{DP}}(0,0) \right|^2) = \mathbb{E}(\|\mathbf{x}\|_2^4) =N^2 + (\kappa - 1)N,
    \end{equation}
and the corresponding EISL is given by
\begin{equation}\label{EISL-DP}
\begin{aligned}
    \mathrm{EISL}_{\rm{DP}} %&= \sum_{k=1}^{N-1} \sum_{q=1}^{N-1} \mathbb{E} (|\mathcal{A}(k,q)|^2)\\
    & =\sum_{k=0}^{N-1} \sum_{q=0}^{N-1} \mathbb{E}(|\mathcal{A}_{\rm{DP}}(k,q)|^2) - \mathbb{E}(|\mathcal{A}_{\rm{DP}}(0,0)|^2).
\end{aligned}
\end{equation}

\subsubsection{FST-AF}
Let us consider the transmission of a size-$MN$ ISAC signal $\mathbf{x} = \operatorname{vec}\left(\mathbf{X}\right)\in\mathbb{C}^{MN\times 1}$. By assuming that the Doppler frequency is sufficiently small, the resulting phase shift can be considered constant over blocks of $N$ samples, changing only between them. This allows the signal $\mathbf{x}$ to be partitioned into $M$ column vectors, forming the 2D FST representation $\mathbf{X} = \left[\mathbf{x}_1, \mathbf{x}_2, ..., \mathbf{x}_M \right]\in\mathbb{C}^{N\times M}$. At the transmitter, a CP is added to each slow-time block (each column) to cover the maximum target delay. Suppose the length of CP is $N_{\rm{CP}}$, the signal with CP can be given by
\begin{align}
    \mathbf{X}_{\rm{CP}} = \mathbf{A}_{\rm{CP}} \mathbf{X} \in\mathbb{C}^{(N+N_{\text{CP}})\times M}.
\end{align}
To transmit the signal in the physical channel, a parallel-to-serial transform is performed over $\mathbf{X}_{\rm{CP}}$, leading to
\begin{align}\label{FST_add_CP}
    \mathbf{x}_{\rm{CP}} = \operatorname{vec} (\mathbf{X}_{\rm{CP}}) = (\mathbf{I}_M \otimes \mathbf{A}_{\rm{CP}}) \mathbf{x}.
\end{align}
For simplicity, we initially focus on a single-target scenario, with delay-Doppler indices $k_0,q_0$ and channel gain $\tilde{\beta}_0\in\mathbb{C}$. After propagating through the physical channel, the received noiseless signal is
\begin{align}
    \tilde{\mathbf{x}} = \tilde{\beta}_{0} \mathbf{D}_{M(N+N_{\rm{CP}}), q_0}\tilde{\mathbf{J}}_{M(N+N_{\rm{CP}}), k_0} \mathbf{x}_{\rm{CP}},
\end{align}
where $k_0 < N_{\rm{CP}}$. Assuming that the fast-time Doppler is sufficiently small, such that $\frac{2\pi q_0N}{M(N+N_{\rm{CP}})} \ll 1$, one may employ the approximation
\begin{align}
    &e^{\frac{j2\pi q_0\left[\left(m-1\right)(N+N_{\rm{CP}})+(n-1)\right]}{M(N+N_{\rm{CP}})}} 
    \nonumber = e^{\frac{j2\pi q_0\left(m-1\right)(N+N_{\rm{CP}})}{M(N+N_{\rm{CP}})}}e^{\frac{j2\pi q_0 (n-1)}{M(N+N_{\rm{CP}})}}\\
    &\approx e^{\frac{j2\pi q_0(m-1)}{M}}, \;\; m = 1,2,\ldots,M, \;\; n = 1,2,\ldots,N+N_{\rm CP}.
\end{align}
This approximation implies that each fast-time sample within the $m$-th slow-time block experiences the same Doppler phase shift, aligning with the ``stop-and-go'' model in pulse-Doppler radar signal processing. \footnote{Note that this assumption may still hold in large-Doppler scenarios when the CP is replaced by zero padding (ZP). In this case, the ZP guard interval can be extended as much as needed, whereas the CP length is inherently limited to at most $N$. A sufficiently long ZP suppresses the Doppler-induced phase variation within each fast-time symbol, which is analogous to a pulsed radar operating with a small duty cycle.} This suggests that the Doppler shift matrix can be approximated as
\begin{align}
    \mathbf{D}_{M(N+N_{\rm{CP}}), q_0} \approx \mathbf{D}_{M, q_0} \otimes \mathbf{I}_{N+N_{\rm{CP}}},
\end{align}
up to a global phase shift. Consequently, the received signal can be approximated as
\begin{align}
    \tilde{\mathbf{x}} \approx  \beta_0\left(\mathbf{D}_{M, q_0}\otimes \mathbf{I}_{N+N_{\rm{CP}}}\right) \tilde{\mathbf{J}}_{M(N+N_{\rm{CP}}), k_0} \mathbf{x}_{\rm{CP}},
\end{align}
where the global phase shift is incorporated in the channel gain $\beta_0$. By recalling \eqref{FST_add_CP}, and using the identity
\begin{equation}
    \left( \mathbf{I}_M \otimes \mathbf{R}_{\rm{CP}} \right)  \tilde{\mathbf{J}}_{M(N+N_{\rm{CP}}),k_0}(\mathbf{I}_M \otimes \mathbf{A}_{\rm{CP}}) = \mathbf{I}_M\otimes\mathbf{J}_{N,k_0},
\end{equation}
the received signal after CP removal can be simplified to
\begin{align}
    \mathbf{y}  = \left( \mathbf{I}_M \otimes \mathbf{R}_{\rm{CP}} \right) \tilde{\mathbf{x}} \approx  \beta_0\left( \mathbf{D}_{M, q_0} \otimes  \mathbf{J}_{N,k_0} \right) \mathbf{x}.
    % &\nonumber = \sqrt{M} \beta_0\left( \mathbf{I}_M \otimes \mathbf{R}_{\rm{CP}} \right) \left(\mathbf{D}_{M, q_0}^H \otimes \mathbf{I}_{N+N_{\rm{CP}}}\right) \tilde{\mathbf{J}}_{M(N+N_{\rm{CP}}), k_0} \mathbf{x}_{\rm{CP}}\\
    % & = \sqrt{M} \beta_0\left(\mathbf{D}_{M, q_0}^H \otimes \mathbf{I}_{N+N_{\rm{CP}}}\right)\left( \mathbf{I}_M \otimes \mathbf{R}_{\rm{CP}} \right)  \tilde{\mathbf{J}}_{M(N+N_{\rm{CP}}), k_0} \mathbf{x}_{\rm{CP}}.
\end{align}
Similarly, in the $L$-target scenario, the received signal becomes
\begin{equation}
    \mathbf{y} \approx \sum_{\ell=1}^L \beta_\ell \left(\mathbf{D}_{N,q_\ell}\otimes\mathbf{J}_{N,k_\ell}\right)\mathbf{x} + \mathbf{z}.
\end{equation}

By performing MF over $\mathbf{y}$, we have
\begin{align}
    &\nonumber \tilde{y}_{k,q}^{\rm{MF}} = \mathbf{x}^H \left(\mathbf{D}_{M,q}^\ast\otimes\mathbf{J}_{N,k}^T\right)\mathbf{y}  \\ 
    & \approx  \sum_{\ell = 1}^L\beta_\ell\mathbf{x}^H\left(\mathbf{D}_{M,q-q_\ell}^\ast\otimes\mathbf{J}_{N,k-k_\ell}^T\right)\mathbf{x} + \tilde{z}_{k,q},
\end{align}
where $\tilde{z}_{k,q} = \mathbf{x}^H \left(\mathbf{D}_{M,q}^\ast\otimes\mathbf{J}_{N,k}^T\right)\mathbf{z}$. One may then define the FST-AF as
\begin{equation}
    \mathcal{A}_{\rm{FST}}\left(k,q\right) :=  \mathbf{x}^H\left(\mathbf{D}_{M,q}^\ast\otimes\mathbf{J}_{N,k}^T\right)\mathbf{x},\quad k\in\mathbb{Z}_N,q\in\mathbb{Z}_M,
\end{equation}
using which the MF output can be expressed again as the linear combination of $L$ time-frequency-shifted FST-AFs.

To proceed with a more compact formulation, let us represent the frequency-slow-time domain signal of the FST representation as
\begin{equation}
    \mathbf{X}_{\text{FT}}: = \mathbf{F}_N\mathbf{X} = \mathbf{F}_N\mathbf{U}\mathbf{S}\mathbf{V}.
\end{equation}
Given the fact that the periodic time shift matrix can be diagonalized by the DFT matrix, i.e.,
\begin{equation}
    \mathbf{J}_{N,k} = \mathbf{F}_N^H\mathbf{D}_{N,k}^\ast\mathbf{F}_N,
\end{equation}
the FST-AF can be further written as
\begin{align}
    \nonumber\mathcal{A}_{\rm{FST}}\left(k,q\right) &= \text{vec}\left(\mathbf{X}\right)^H\left(\mathbf{D}_{M,q}^\ast\otimes\mathbf{J}_{N,k}^T\right)\text{vec}\left(\mathbf{X}\right)\\
    & \nonumber = \text{vec}\left(\mathbf{X}\right)^H\text{vec}\left(\mathbf{J}_{N,k}^T\mathbf{X}\mathbf{D}_{M,q}^\ast\right)\\
    & \nonumber = \text{vec}\left(\mathbf{F}_N^H\mathbf{X}_{\text{FT}}\right)^H\text{vec}\left(\mathbf{F}_N^H\mathbf{D}_{N,k}\mathbf{X}_{\text{FT}}\mathbf{D}_{M,q}^\ast\right)\\
    %& \nonumber = \text{Tr}\left(\mathbf{X}_{\text{FT}}^H\mathbf{F}_N\mathbf{F}_N^H\mathbf{D}_{N,k}\mathbf{X}_{\text{FT}}\mathbf{D}_{M,q}^\ast\right)\\
    & \nonumber =\text{Tr}\left(\mathbf{D}_{N,k}\mathbf{X}_{\text{FT}}\mathbf{D}_{M,q}^\ast\mathbf{X}_{\text{FT}}^H\right)\\
    & 
    = \sqrt{MN}\mathbf{f}_{N,k+1}^H\left|\mathbf{X}_{\text{FT}}\right|^2\mathbf{f}_{M,q+1}. \label{FSTAF_simplified}
\end{align}
By noting \eqref{FSTAF_simplified}, the FST-AF may be naturally represented in a matrix form as 
\begin{equation}\label{A_FST}
    \mathbf{A}_{\rm{FST}}: = \mathbf{F}_N^H\left|\mathbf{X}_{\text{FT}}\right|^2\mathbf{F}_M,
\end{equation}
with its $(n,m)$-th element being
\begin{equation}
    a_{n,m}: = \mathcal{A}_{\rm{FST}}\left(n-1,m-1\right)
\end{equation}

\textbf{Remark 1}. The formulation in \eqref{FSTAF_simplified} admits a clear interpretation. Under the FST representation, where the fast-time Doppler shift is assumed negligible, the delay and Doppler contributions become decoupled and act only along the fast-time and slow-time axes, respectively. This corresponds to a small-Doppler approximation of the DP-AF. In this setting, the delay-Doppler plane serves as the Fourier-dual domain of the frequency-slow-time representation in the 2D-DFT sense. Since the AF can be interpreted as the two-dimensional circular self-convolution of the delay-Doppler-domain signal, it follows that the AF is equivalently obtained by taking the 2D-DFT of the squared magnitude of the FT-domain signal.

Again, due to the random nature of the FST-AF, we are interested in characterizing its EISL and ESL. The expectation of the squared Frobenius norm of $\mathbf{A}$ can be calculated as
\begin{align}\label{FST_AF_volume}
     &\nonumber \mathbb{E}\left(\left\|\mathbf{A}_{\rm{FST}}\right\|_F^2\right) = MN\mathbb{E}\left(\left\|\mathbf{F}_N^H\left|\mathbf{F}_N\mathbf{U}\mathbf{S}\mathbf{V}\right|^2\mathbf{F}_M\right\|_F^2\right)\\
    & = MN\mathbb{E}\left(\left\|\left|\mathbf{F}_N\mathbf{U}\mathbf{S}\mathbf{V}\right|^2\right\|_F^2\right) = MN\mathbb{E}\left(\left\|\mathbf{F}_N\mathbf{U}\mathbf{S}\mathbf{V}\right\|_4^4\right),    
\end{align}
where the expectation is over the random symbol matrix $\mathbf{S}$. 
Accordingly, the ESL of the FST-AF is 
\begin{align}
    &\nonumber \mathbb{E}\left(|\mathcal{A}_{\rm{FST}}\left(k,q\right)|^2\right)=\mathbb{E}(|a_{k+1,q+1}|^2) 
    \\&= MN \mathbb{E}\left(\left|\mathbf{f}_{N,k+1}^H\left|\mathbf{F}_N\mathbf{U}\mathbf{S}\mathbf{V}\right|^2\mathbf{f}_{M,q+1}\right|^2\right), \quad (k,q)\ne (0,0).
\end{align}
The EISL is therefore given by
\begin{align}\label{EISL_FST_definition}
   \nonumber {\rm EISL}_{\rm{FST}}  &=\sum_{k=0}^{N-1} \sum_{q=0}^{N-1} \mathbb{E}(|\mathcal{A}_{\rm{FST}}(k,q)|^2) - \mathbb{E}(|\mathcal{A}_{\rm{FST}}(0,0)|^2)\\
    &=\mathbb{E}\left(\left\|\mathbf{A}_{\rm{FST}}\right\|_F^2\right) - \mathbb{E}(|a_{1,1}|^2),
\end{align}
where the expected mainlobe level is
\begin{equation}\label{FST_mainlobe}
    \mathbb{E}(|\mathcal{A}_{\rm{FST}}(0,0)|^2) = \mathbb{E}(|a_{1,1}|^2) = M^2N^2 + (\kappa-1)MN.
\end{equation}

\section{Statistical Characterization of the DP-AF for Random Communication Signals} \label{sec_3}

In this section, we derive the closed-form expressions of the EISL and ESL of the DP-AF for random communication signals, and further establish several fundamental properties of the DP-AF.

\subsection{Main Results}
Let us first investigate the invariance of the EISL for DP-AF.
\begin{proposition}[Invariance of EISL] For all constellations and modulation schemes, the normalized $\rm EISL_{\rm{DP}}$ is a constant value $N-1$. \label{invariance_EISL_prop}
%\begin{equation} \label{The exception of sidelobe}
%    \sum_{k=0}^{N-1} \sum_{q=0}^{N-1}\mathbb{E}(\left| \mathcal{A}(k,q) \right|^2) = 
%    N^3 + (\kappa-1)N^2,
%\end{equation}
\begin{equation}
    \frac{\sum_{k=0}^{N-1} \sum_{q=0}^{N-1}\mathbb{E}(\left| \mathcal{A}_{\rm{DP}}(k,q) \right|^2) - \mathbb{E}(\left| \mathcal{A}_{\rm{DP}}(0,0) \right|^2)}{\mathbb{E}(\left| \mathcal{A}_{\rm{DP}}(0,0) \right|^2)} = N -1.
\end{equation}
\end{proposition}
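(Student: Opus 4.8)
The plan is to reduce the normalized $\mathrm{EISL}_{\rm DP}$ to a ratio of two quantities that are already under control: the total volume $\sum_{k,q}\mathbb{E}(|\mathcal{A}_{\rm DP}(k,q)|^2)$ and the mainlobe $\mathbb{E}(|\mathcal{A}_{\rm DP}(0,0)|^2)$, the latter being given in closed form by \eqref{the average mainlobe of AF}. Since the numerator of the normalized EISL is exactly the total volume minus the mainlobe, the entire claim collapses to a single volume-invariance identity: for every realization of $\mathbf{x}=\mathbf{U}\mathbf{s}$,
\[
\sum_{k=0}^{N-1}\sum_{q=0}^{N-1}|\mathcal{A}_{\rm DP}(k,q)|^2 = N\,\|\mathbf{x}\|_2^4,
\]
independently of the modulation basis $\mathbf{U}$ and the symbol vector $\mathbf{s}$.

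To prove this identity I would exploit the Weyl--Heisenberg structure set up in the preliminaries. Writing $\mathcal{A}_{\rm DP}(k,q)=\mathbf{x}^H\mathbf{G}_{k,q}\mathbf{x}=\operatorname{Tr}(\mathbf{G}_{k,q}\mathbf{x}\mathbf{x}^H)$ with $\mathbf{G}_{k,q}=\mathbf{D}_{N,q}\mathbf{J}_{N,k}$, the first step is the orthogonality relation $\operatorname{Tr}(\mathbf{G}_{k,q}^H\mathbf{G}_{k',q'})=N\delta_{k,k'}\delta_{q,q'}$, which follows by reducing $\mathbf{G}_{k,q}^H\mathbf{G}_{k',q'}$ to $\mathbf{J}_{N,k'-k}\mathbf{D}_{N,q'-q}$ (via the cyclic property of the trace) and noting that a nontrivial cyclic permutation has zero diagonal while $\operatorname{Tr}(\mathbf{D}_{N,q'-q})=N\delta_{q,q'}$. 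Because there are $N^2$ pairs $(k,q)\in\mathbb{Z}_N^2$ and $\dim\mathbb{C}^{N\times N}=N^2$, the set $\{\mathbf{G}_{k,q}^H/\sqrt{N}\}$ is an orthonormal basis of $\mathbb{C}^{N\times N}$ under the Frobenius inner product. Expanding the rank-one matrix $\mathbf{R}=\mathbf{x}\mathbf{x}^H$ in this basis, its coordinates are exactly $\mathcal{A}_{\rm DP}(k,q)/\sqrt{N}$, so Parseval's identity yields $\sum_{k,q}|\mathcal{A}_{\rm DP}(k,q)|^2=N\|\mathbf{R}\|_F^2=N\|\mathbf{x}\|_2^4$, using $\|\mathbf{x}\mathbf{x}^H\|_F^2=\|\mathbf{x}\|_2^4$. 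As an independent cross-check that avoids the operator basis, one can sum over $q$ first: for fixed $k$, the map $q\mapsto\mathbf{x}^H\mathbf{D}_{N,q}\mathbf{J}_{N,k}\mathbf{x}$ is the length-$N$ DFT of the sequence $n\mapsto x_n^\ast(\mathbf{J}_{N,k}\mathbf{x})_n$, so DFT-Parseval gives $\sum_q|\mathbf{x}^H\mathbf{D}_{N,q}\mathbf{J}_{N,k}\mathbf{x}|^2=N\sum_n|x_n|^2|x_{\langle n-k\rangle_N}|^2$; summing over $k$ and factoring the resulting double sum collapses it to $N\|\mathbf{x}\|_2^4$.

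Taking expectations and invoking \eqref{the average mainlobe of AF} finishes the argument: $\sum_{k,q}\mathbb{E}(|\mathcal{A}_{\rm DP}(k,q)|^2)=N\,\mathbb{E}(\|\mathbf{x}\|_2^4)=N\bigl(N^2+(\kappa-1)N\bigr)=N^2(N+\kappa-1)$, while the mainlobe is $\mathbb{E}(|\mathcal{A}_{\rm DP}(0,0)|^2)=N(N+\kappa-1)$. Substituting into the normalized EISL gives $\frac{N^2(N+\kappa-1)-N(N+\kappa-1)}{N(N+\kappa-1)}=N-1$, where the constellation- and basis-dependent factor $N+\kappa-1$ cancels identically, which is precisely why the value is universal.

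The main obstacle I anticipate is the volume-invariance identity itself, and within it the clean verification that the shifted-modulated operators $\{\mathbf{G}_{k,q}\}$ are mutually Frobenius-orthogonal. Once that orthogonality (equivalently, the ``volume under the squared AF is constant'' property) is in hand, everything else is bookkeeping, and the universality over constellations and modulations reduces to the cancellation of the common factor $N+\kappa-1$.
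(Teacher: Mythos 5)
Your proposal is correct, and it reaches the same pivotal identity as the paper --- the per-realization volume invariance $\sum_{k,q}|\mathcal{A}_{\rm DP}(k,q)|^2=N\|\mathbf{x}\|_2^4$ --- but by a different primary route. The paper proves this identity by a direct computation: it sums over $q$ first using the resolution of identity $\sum_{q}\mathbf{f}_{N,q+1}\mathbf{f}_{N,q+1}^H=\mathbf{I}_N$, then sums over $k$ to collapse the result to $N\|\mathbf{x}\|_2^4$; this is exactly the ``DFT-Parseval cross-check'' you sketch in your second paragraph. Your main argument instead establishes the Frobenius orthogonality $\operatorname{Tr}(\mathbf{G}_{k,q}^H\mathbf{G}_{k',q'})=N\delta_{k,k'}\delta_{q,q'}$, recognizes $\{\mathbf{G}_{k,q}/\sqrt{N}\}$ as an orthonormal basis of $\mathbb{C}^{N\times N}$, and applies Parseval to $\mathbf{x}\mathbf{x}^H$. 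This is a more structural view that the paper itself only introduces later (the same orthogonality relation appears in the proof of Proposition~3, where the paper states it with the normalization constant $N$ omitted), so your argument buys a cleaner conceptual link between the volume invariance and the Weyl--Heisenberg machinery used for Theorem~1, at the cost of first verifying the trace orthogonality; the paper's computation is more elementary and self-contained. Two cosmetic remarks: since $\mathcal{A}_{\rm DP}(k,q)=\mathbf{x}^H\mathbf{G}_{k,q}^H\mathbf{x}$ in the paper's convention, your expansion coefficients are actually $\overline{\mathcal{A}_{\rm DP}(k,q)}/\sqrt{N}$, which is immaterial for the squared sum; and the final cancellation does not require the closed form $N(N+\kappa-1)$ at all, since the ratio $(N V - V)/V=N-1$ holds for $V=\mathbb{E}(\|\mathbf{x}\|_2^4)$ whatever its value --- the paper exploits exactly this shortcut by writing $\mathrm{EISL}_{\rm DP}=(N-1)\,\mathbb{E}(|\mathcal{A}_{\rm DP}(0,0)|^2)$ directly.
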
 
\begin{proof}
According to \eqref{2D AF}, the volume of the DP-AF is
\begin{align}\label{volume of the DP-AF}
    &\nonumber \sum_{k=0}^{N-1} \sum_{q=0}^{N-1} \left| \mathcal{A}_{\rm{DP}}(k,q) \right|^2  \\
    %= \sum_{k=0}^{N-1} \sum_{q=0}^{N-1} \left|\sqrt{N} \mathbf{x}^H \mathbf{J}_{N,k}^T \mathrm{Diag}(\mathbf{f}_{N,q+1}) \mathbf{x} \right|^2 \\
    %&\nonumber = N  \sum_{k=0}^{N-1} \sum_{q=0}^{N-1} \mathbf{x}^H \mathbf{J}_{N,k}^T \mathrm{Diag}(\mathbf{f}_{N,q+1}) \mathbf{x} \mathbf{x}^H \mathrm{Diag}(\mathbf{f}_{N,q+1}^\ast) \mathbf{J}_{N,k} \mathbf{x} \\
    &\nonumber = N  \sum_{k=0}^{N-1} \sum_{q=0}^{N-1} \mathbf{x}^H \mathbf{J}_{N,k}^T \mathrm{Diag}(\mathbf{x}) \mathbf{f}_{N,q+1} \mathbf{f}_{N,q+1}^H \mathrm{Diag}(\mathbf{x}^\ast) \mathbf{J}_{N,k} \mathbf{x} \\
    %&\nonumber = N  \sum_{k=0}^{N-1} \mathbf{x}^H \mathbf{J}_{N,k}^T \mathrm{Diag}(\mathbf{x}) \left( \sum_{q=0}^{N-1} \mathbf{f}_{N,q+1} \mathbf{f}_{N,q+1}^H \right) \mathrm{Diag}(\mathbf{x}^\ast) \mathbf{J}_{N,k} \mathbf{x} \\
    %&\nonumber = N  \sum_{k=0}^{N-1} \mathbf{x}^H \mathbf{J}_{N,k}^T \mathrm{Diag}(\mathbf{x}) \mathrm{Diag}(\mathbf{x}^\ast) \mathbf{J}_{N,k} \mathbf{x} \\
    &\nonumber = N  \sum_{k=0}^{N-1} (\mathbf{J}_{N,k} \mathbf{x})^H \mathrm{Diag}(\mathbf{x}) \mathrm{Diag}(\mathbf{x}^\ast) \mathbf{J}_{N,k} \mathbf{x} \\
    &\nonumber = N \mathbf{x}^T \left[ \sum_{k=0}^{N-1} \mathrm{Diag}(\mathbf{J}_{N,k} \mathbf{x})^H \mathrm{Diag}(\mathbf{J}_{N,k} \mathbf{x}) \right] \mathbf{x}^\ast\\
    %& = N (\mathbf{x}^H \mathbf{x})^2 
    &= N \| \mathbf{x} \|_2^4 
    = N \left| \mathcal{A}_{\rm{DP}}(0,0) \right|^2.
\end{align}
It is observed that the volume of DP-AF is $N$ times that of its mainlobe, which directly follows from the invariance of the volume of Woodward's AF \cite{woodward1953probability}. Then we have
\begin{align}
    \mathrm{EISL}_{\rm{DP}} &\nonumber=(N-1) \mathbb{E}(|\mathcal{A}_{\rm{DP}}(0,0)|^2)\\
    & = (N^2-N) \left(N+\kappa - 1\right).
\end{align}
Therefore, $\rm EISL_{\rm{DP}}$ remains invariant regardless of the modulation waveforms employed.
\end{proof}
According to Propositon \ref{invariance_EISL_prop}, communication-centric ISAC signals constructed by arbitrary constellation and modulation basis yield the same normalized $\rm{EISL}_{\rm{DP}}$. In such a sense, no optimal waveform exists in terms of minimizing the summation of all the ESL. 

We next derive a closed-form expression for the average squared DP-AF.
\begin{proposition}\label{prop_ESL_DP_AF}
The closed-form expression of the average squared DP-AF is
    \begin{align}\label{the average sidelobe}
        &\nonumber \mathbb{E}(\left| \mathcal{A}_{\rm{DP}}(k,q) \right|^2) \\
        & = N + (\kappa - 2)  \sum_{n=1}^N \left| \mathbf{u}_n^H \mathbf{D}_{N,q}\mathbf{J}_{N,k}\mathbf{u}_n \right|^2 + N^2\delta_{k,0} \delta_{q,0}.
    \end{align}
where $(k,q)\in\mathbb{Z}_N^2$.

\begin{proof}
    See Appendix \ref{proof of squared AF}
\end{proof}
\end{proposition}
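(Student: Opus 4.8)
The plan is to reduce the random quadratic form to a deterministic matrix identity by absorbing the unitary modulation basis into the shift operator. Writing $\mathbf{x} = \mathbf{U}\mathbf{s}$ and setting $\mathbf{M} := \mathbf{U}^H \mathbf{D}_{N,q}\mathbf{J}_{N,k}\mathbf{U}$, the DP-AF becomes the scalar quadratic form $\mathcal{A}_{\rm{DP}}(k,q) = \mathbf{s}^H\mathbf{M}\mathbf{s}$, so that $\mathbb{E}(|\mathcal{A}_{\rm{DP}}(k,q)|^2) = \mathbb{E}(|\mathbf{s}^H\mathbf{M}\mathbf{s}|^2)$. The problem then decouples into a moment calculation for $\mathbf{s}$ that is independent of the waveform, and an evaluation of a few unitary-invariant functionals of $\mathbf{M}$.

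First I would expand $|\mathbf{s}^H\mathbf{M}\mathbf{s}|^2 = \sum_{i,j,p,l} m_{ij} m_{pl}^{\ast}\, s_i^{\ast} s_j s_p s_l^{\ast}$ and take the expectation term by term. Since the symbols are i.i.d. with $\mathbb{E}(s)=0$, any index appearing exactly once forces a vanishing factor, so only configurations in which the four indices coincide pairwise (or all four agree) survive. Enumerating the pairings of the two conjugated positions $\{i,l\}$ and the two plain positions $\{j,p\}$, I expect exactly three surviving patterns: $i=j=p=l$, contributing the fourth moment $\kappa$; the pairing $i=j,\ p=l$ with $i\neq p$; and the pairing $i=p,\ j=l$ with $i\neq j$. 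The remaining pairing $i=l,\ j=p$ is annihilated by the zero-pseudo-variance hypothesis $\mathbb{E}(s^2)=0$ of Assumption 1, since it produces factors $\mathbb{E}(s^{\ast 2})$ and $\mathbb{E}(s^2)$. This cancellation is the one genuinely delicate point, and correctly tracking which positions are conjugated is what makes the bookkeeping the main obstacle.

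Assembling the surviving terms, I would rewrite the result in a basis-free form as
\begin{equation*}
\mathbb{E}(|\mathbf{s}^H\mathbf{M}\mathbf{s}|^2) = \left\|\mathbf{M}\right\|_F^2 + |\tr(\mathbf{M})|^2 + (\kappa-2)\sum_{n=1}^N |m_{nn}|^2,
\end{equation*}
using $\sum_{i\neq p} m_{ii}m_{pp}^{\ast} = |\tr(\mathbf{M})|^2 - \sum_n|m_{nn}|^2$ and $\sum_{i\neq j}|m_{ij}|^2 = \left\|\mathbf{M}\right\|_F^2 - \sum_n|m_{nn}|^2$ to fold the off-diagonal sums into clean traces. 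This is the form that makes the waveform dependence transparent: only the diagonal term carries the kurtosis.

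Finally I would evaluate the three functionals. Because $\mathbf{U}$ is unitary and $\mathbf{G}_{k,q} = \mathbf{D}_{N,q}\mathbf{J}_{N,k}$ is unitary (a unimodular diagonal times a permutation), the cyclic invariance of the trace gives $\left\|\mathbf{M}\right\|_F^2 = \tr(\mathbf{G}_{k,q}^H\mathbf{G}_{k,q}) = N$ and $\tr(\mathbf{M}) = \tr(\mathbf{G}_{k,q})$. Since $\mathbf{J}_{N,k}$ has an all-zero diagonal unless $k=0$, while $\tr(\mathbf{D}_{N,q}) = \sum_{a=0}^{N-1} e^{j2\pi q a/N}$ vanishes unless $q=0$, one obtains $\tr(\mathbf{G}_{k,q}) = N\delta_{k,0}\delta_{q,0}$, hence $|\tr(\mathbf{M})|^2 = N^2\delta_{k,0}\delta_{q,0}$. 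The diagonal sum is precisely $\sum_n |m_{nn}|^2 = \sum_n |\mathbf{u}_n^H\mathbf{D}_{N,q}\mathbf{J}_{N,k}\mathbf{u}_n|^2$. Substituting these three evaluations yields the claimed expression.
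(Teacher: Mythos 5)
Your proof is correct and arrives at exactly the paper's formula, but it organizes the computation differently from Appendix~\ref{proof of squared AF}. The paper vectorizes the outer product as $\tilde{\mathbf{s}}=\mathbf{s}^\ast\otimes\mathbf{s}$, imports the decomposition $\mathbb{E}(\tilde{\mathbf{s}}\tilde{\mathbf{s}}^H)=\mathbf{I}_{N^2}+\mathbf{S}_1+\mathbf{S}_2$ from prior work, and then evaluates the three resulting quadratic forms by expanding over the cyclically shifted rows of $\mathbf{U}^H$ and carrying out the exponential sums explicitly as in \eqref{first term}--\eqref{2nd_term}. Your index-pairing enumeration is the same fourth-moment bookkeeping in disguise: the $\mathbf{I}_{N^2}$, $\mathbf{S}_1$, and $\mathbf{S}_2$ blocks correspond precisely to your pairings $i=p,\,j=l$, the all-equal kurtosis term, and $i=j,\,p=l$, while the pairing $i=l,\,j=p$ is killed by $\mathbb{E}(s^2)=0$ in both treatments. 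What your route buys is the intermediate basis-free identity $\mathbb{E}(|\mathbf{s}^H\mathbf{M}\mathbf{s}|^2)=\|\mathbf{M}\|_F^2+|\operatorname{Tr}(\mathbf{M})|^2+(\kappa-2)\sum_{n}|m_{nn}|^2$, which the paper never states: it isolates the waveform dependence in the diagonal of $\mathbf{M}$ at a glance, and it lets you finish by unitary invariance ($\|\mathbf{M}\|_F^2=N$, $\operatorname{Tr}(\mathbf{M})=\operatorname{Tr}(\mathbf{G}_{k,q})=N\delta_{k,0}\delta_{q,0}$) instead of explicit Fourier sums; it also makes the $(0,0)$ mainlobe value \eqref{the average mainlobe of AF} an immediate corollary. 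Two details are worth stating explicitly if you write this up: since the paper defines $\mathcal{A}_{\rm{DP}}(k,q)=\mathbf{x}^H\mathbf{J}_{N,k}^T\mathbf{D}_{N,q}^\ast\mathbf{x}$, your $\mathbf{s}^H\mathbf{M}\mathbf{s}$ is its complex conjugate, which is harmless because all three functionals are invariant under $\mathbf{M}\mapsto\mathbf{M}^H$; and configurations in which one symbol value appears three times are annihilated by the leftover singleton via $\mathbb{E}(s)=0$, so no assumption on $\mathbb{E}(s|s|^2)$ is needed and your list of surviving patterns is exhaustive.
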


According to the Cauchy-Schwarz inequality, we have 
\begin{align} 
&\nonumber\left| \mathbf{u}_n^H \mathbf{D}_{N,q} \mathbf{J}_{N,k} \mathbf{u}_n \right|^2 \leq \left\| \mathbf{u}_n\right\|^2 \left\| \mathbf{D}_{N,q} \mathbf{J}_{N,k} \mathbf{u}_n\right\|^2 = 1.
\end{align}
This implies
\begin{equation}
    0\leq\sum_{n=1}^N \left| \mathbf{u}_n^H \mathbf{D}_{N,q}\mathbf{J}_{N,k} \mathbf{u}_n \right|^2\leq N,
\end{equation}
which indicates that for sub-Gaussian constellations ($\kappa < 2$), the ESL of the DP-AF is bounded by
\begin{equation}\label{ESL_LB}
    \left(\kappa-1\right)N\leq\mathbb{E}(\left| \mathcal{A}_{\rm{DP}}(k,q) \right|^2)\leq N ,\quad \forall (k,q) \ne (0,0).
\end{equation}
Since the kurtosis can be made arbitrarily close to $1$ through constellation shaping \cite{10685511}, the lower bound can in principle be made arbitrarily close to zero. Moreover, note that
\begin{align}
    \left\|\mathbf{U}^H\mathbf{D}_{N,q} \mathbf{J}_{N,k}\mathbf{U}\right\|_F^2 = N,
\end{align}
and that $\mathbf{u}_n^H \mathbf{D}_{N,q}\mathbf{J}_{N,k} \mathbf{u}_n$ corresponds to the $n$th diagonal element of $\mathbf{U}^H\mathbf{D}_{N,q} \mathbf{J}_{N,k}\mathbf{U}$. It then follows that the lower bound in \eqref{ESL_LB} is achieved if and only if $\mathbf{D}_{N,q} \mathbf{J}_{N,k}$ is diagonalizable by $\mathbf{U}$. Two trivial examples here are the matrices set $\left\{\mathbf{D}_{N,q} \mathbf{J}_{N,0}\right\}_{q \in\mathbb{Z}_N}$ and $\left\{\mathbf{D}_{N,0} \mathbf{J}_{N,k}\right\}_{k\in\mathbb{Z}_N}$, which are respectively diagonalized by $\mathbf{U} =  \mathbf{I}_N$ and $\mathbf{U} =  \mathbf{F}_N^H$. This implies that the SC and OFDM waveforms achieve the minimum ESL at the zero-Doppler and zero-delay cuts of the DP-AF, respectively.

We now examine how these lowest sidelobes are distributed over the delay-Doppler domain under different waveform designs. Since no orthogonal waveform can minimize the overall EISL of the DP-AF, a more meaningful question is whether one can design a waveform that minimizes the sidelobe level within a prescribed delay-Doppler region. As illustrated by the SC and OFDM examples above, such minimization is indeed achievable when the region of interest is constrained to a one-dimensional subset of the delay-Doppler plane, such as the zero-delay or zero-Doppler cuts. In practical settings, however, it is often more desirable to concentrate the lowest sidelobes over a genuinely 2D region formed by contiguous delay-Doppler bins. This motivates the concept of a low-ambiguity zone \cite{10935731}, namely a designated connected 2D region in the delay-Doppler domain where the waveform is designed to exhibit minimized ESL. Waveforms constructed in this manner may thus be regarded as ``locally'' optimal with respect to the specified 2D region. As we shall see, however, even such locally optimal waveforms do not generally exist, because the geometry of the lowest sidelobes is intrinsically constrained by the algebraic structure of the WH group. This is formalized in the following theorem.
\begin{thm}\label{thm_no_LAZ}
Given an orthogonal modulation basis $\mathbf{U}\in\mathcal{U}(N)$, define
    \begin{equation}
        \mathcal{G}_{\mathbf{U}}:=\left\{e^{j\ell\frac{2\pi}{N}}\mathbf{G}_{k,q}|\mathbf{U}^H\mathbf{G}_{k,q}\mathbf{U}\;\text{is diagonal,}\;\;(\ell,k,q)\in\mathbb{Z}_N^3\right\},
    \end{equation}
    as well as the index set
    \begin{equation}
        \mathcal{I}_{\mathbf{U}}:= \left\{\left(k,q\right)\in\mathbb{Z}_N^2|\mathbf{G}_{k,q}\in\mathcal{G}_{\mathbf{U}}\right\}. 
    \end{equation}
    For any 
    $(a,b) \in \mathbb{Z}_N^2$, it is impossible that
    \[
    (a,b),\;\;(a,\langle b+1 \rangle_N),\;\;(\langle a+1 \rangle_N,b)
    \]
    all belong to $\mathcal{I}_\mathbf{U}$ simultaneously.
\end{thm}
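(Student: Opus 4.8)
The plan is to convert the geometric impossibility statement about the index set $\mathcal{I}_{\mathbf{U}}$ into an algebraic commutativity question on the WH operators $\mathbf{G}_{k,q}$, and then to show that the three prescribed bins force mutually incompatible arithmetic constraints modulo $N$.

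First I would unpack the definition: a pair $(k,q)$ lies in $\mathcal{I}_{\mathbf{U}}$ exactly when $\mathbf{U}^H\mathbf{G}_{k,q}\mathbf{U}$ is diagonal, i.e.\ when $\mathbf{G}_{k,q}$ is diagonalized by $\mathbf{U}$ (the global phase $e^{j\ell 2\pi/N}$ in the definition of $\mathcal{G}_{\mathbf{U}}$ is irrelevant to diagonalizability). The decisive elementary fact is that two matrices simultaneously diagonalized by the same unitary $\mathbf{U}$ must commute, since diagonal matrices commute. Consequently, if two indices $(k,q)$ and $(k',q')$ both belong to $\mathcal{I}_{\mathbf{U}}$, then $\mathbf{G}_{k,q}$ and $\mathbf{G}_{k',q'}$ commute.

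Second I would make the commutation condition explicit. Writing $\mathbf{G}_{k,q}=\mathbf{D}_{N,q}\mathbf{J}_{N,k}$ and repeatedly invoking the non-commutation relation \eqref{non_commutation_property}, together with $\mathbf{D}_{N,q}\mathbf{D}_{N,q'}=\mathbf{D}_{N,\langle q+q'\rangle_N}$ and $\mathbf{J}_{N,k}\mathbf{J}_{N,k'}=\mathbf{J}_{N,\langle k+k'\rangle_N}$, one collects the accumulated phases into the single relation
\begin{equation}
    \mathbf{G}_{k,q}\mathbf{G}_{k',q'}=e^{\frac{j2\pi(kq'-k'q)}{N}}\,\mathbf{G}_{k',q'}\mathbf{G}_{k,q}.
\end{equation}
Hence $\mathbf{G}_{k,q}$ and $\mathbf{G}_{k',q'}$ commute if and only if the symplectic form $kq'-k'q\equiv 0\pmod N$; combined with the first step, joint membership in $\mathcal{I}_{\mathbf{U}}$ forces $kq'-k'q\equiv 0\pmod N$.

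Finally I would argue by contradiction. Assuming $(a,b)$, $(a,\langle b+1\rangle_N)$ and $(\langle a+1\rangle_N,b)$ all lie in $\mathcal{I}_{\mathbf{U}}$, evaluating the symplectic form on the three pairs yields respectively $a\equiv 0$, $b\equiv 0$, and $a+b+1\equiv 0\pmod N$. These are plainly inconsistent: the first two give $a=b=0$, whereupon $a+b+1=1\not\equiv 0\pmod N$ for $N\ge 2$. Equivalently, the three cyclic symplectic forms sum to $-1\pmod N$, reflecting that the triangle of bins encloses unit symplectic area, which can never vanish modulo $N$. This contradiction establishes the claim. The only step requiring genuine care is the phase bookkeeping in the second step; the conceptual crux is the first step, where simultaneous diagonalizability is turned into commutativity, reducing the analytic no-go statement to a transparent impossibility in the finite WH group.
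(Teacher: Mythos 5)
Your proof is correct, and it reaches the conclusion by a genuinely lighter route than the paper's. Both arguments rest on the same pivotal observation — matrices diagonalized by the same unitary $\mathbf{U}$ commute because diagonal matrices do — but they diverge after that. The paper first establishes that $\mathcal{G}_{\mathbf{U}}$ is an Abelian \emph{subgroup} of $\mathcal{H}(N)$ (closure, inverses, identity), then argues that if the three bins were all in $\mathcal{I}_{\mathbf{U}}$, the products $\mathbf{G}_{a,b}^{-1}\mathbf{G}_{a,\langle b+1\rangle_N}$ and $\mathbf{G}_{a,b}^{-1}\mathbf{G}_{\langle a+1\rangle_N,b}$ would place phase multiples of $\mathbf{G}_{0,1}$ and $\mathbf{G}_{1,0}$ inside $\mathcal{G}_{\mathbf{U}}$; since these generate all of $\mathcal{H}(N)$, one would get $\mathcal{G}_{\mathbf{U}}=\mathcal{H}(N)$, contradicting Abelianness. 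You skip the subgroup machinery entirely: pairwise co-diagonalizability already forces pairwise commutation, commutation of $\mathbf{G}_{k,q}$ and $\mathbf{G}_{k',q'}$ is equivalent to $kq'-k'q\equiv 0\pmod N$, and the three pairwise conditions $a\equiv 0$, $b\equiv 0$, $a+b+1\equiv 0$ are jointly unsatisfiable for $N\ge 2$ (their cyclic sum is $-1$, the unit symplectic area of the triangle). Your phase bookkeeping and the symplectic-form characterization match what the paper itself derives in the proof of its Proposition~4, so you have effectively ported that tool forward to Theorem~1. What your version buys is brevity and self-containedness — you never need closure of $\mathcal{G}_{\mathbf{U}}$ under multiplication or inversion, only commutativity of any two of its members. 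What the paper's version buys is reuse: the Abelian-subgroup lemma it proves here is invoked again in Propositions~3 and~4, and the generation argument gives the structural picture that a $2\times 2$ corner of low sidelobes would force the entire WH group to be simultaneously diagonalizable. The only (shared, implicit) caveat is the degenerate case $N=1$, where both arguments and the statement itself are vacuous.
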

\begin{proof}
    See Appendix \ref{appendix_thm_1}.
\end{proof}
Theorem 1 suggests that no orthogonal waveform can simultaneously minimize the sidelobe levels over a $2\times 2$ region in the delay-Doppler domain. Consequently, it is impossible for any waveform to generate a valid low-ambiguity zone pattern that achieves the minimal ESL in any specified 2D compact region over the delay-Doppler plane.

We further show that, for any waveform $\mathbf{U}$, the number of lowest sidelobes in the DP-AF does not exceed $N-1$, which follows from the proposition below.
\begin{proposition}\label{prop_size_S}
    Given an orthogonal modulation basis $\mathbf{U}\in\mathcal{U}(N)$, it holds that
    \begin{equation}
        \mathrm{Card}\left(\mathcal{I}_{\mathbf{U}}\right) \leq N.
    \end{equation}
\end{proposition}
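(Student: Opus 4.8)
The plan is to recognize $\mathcal{I}_{\mathbf{U}}$ as a subgroup of the abelian group $\mathbb{Z}_N^2$ that is \emph{isotropic} with respect to the symplectic pairing naturally attached to the WH group, and then to bound the cardinality of any isotropic subgroup by a counting argument resting on the nondegeneracy of that pairing. As a preliminary reduction, I would note that the operators $\{\mathbf{G}_{k,q}\}_{(k,q)\in\mathbb{Z}_N^2}$ are pairwise linearly independent (no two are scalar multiples of one another), so membership $\mathbf{G}_{k,q}\in\mathcal{G}_{\mathbf{U}}$ forces the phase label $\ell$ to vanish. Consequently $(k,q)\in\mathcal{I}_{\mathbf{U}}$ is equivalent to the plain condition that $\mathbf{U}^H\mathbf{G}_{k,q}\mathbf{U}$ is diagonal, i.e.\ that the columns of $\mathbf{U}$ are common eigenvectors of $\mathbf{G}_{k,q}$.

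Next I would establish the group structure. Starting from the non-commutation identity \eqref{non_commutation_property} together with $\mathbf{D}_{N,q}\mathbf{D}_{N,q'}=\mathbf{D}_{N,\langle q+q'\rangle_N}$ and $\mathbf{J}_{N,k}\mathbf{J}_{N,k'}=\mathbf{J}_{N,\langle k+k'\rangle_N}$, one obtains the composition law
\[
\mathbf{G}_{k,q}\mathbf{G}_{k',q'}=e^{j\frac{2\pi q'k}{N}}\,\mathbf{G}_{\langle k+k'\rangle_N,\,\langle q+q'\rangle_N}.
\]
Because a product or an inverse of matrices that are all diagonalized by the same $\mathbf{U}$ is again diagonalized by $\mathbf{U}$, and because the scalar phase does not affect diagonality, this identity shows that $\mathcal{I}_{\mathbf{U}}$ contains $(0,0)$ and is closed under addition and negation modulo $N$. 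Hence $\mathcal{I}_{\mathbf{U}}$ is a subgroup of $\mathbb{Z}_N^2$.

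From the same composition law, the commutator of two WH operators reduces to the scalar $e^{j2\pi(kq'-k'q)/N}$, so $\mathbf{G}_{k,q}$ and $\mathbf{G}_{k',q'}$ commute exactly when $kq'-k'q\equiv 0 \pmod N$. Since any two elements of $\mathcal{I}_{\mathbf{U}}$ are simultaneously diagonalized by $\mathbf{U}$ and therefore commute, the subgroup $\mathcal{I}_{\mathbf{U}}$ is isotropic for the symplectic form $\omega\big((k,q),(k',q')\big):=kq'-k'q \bmod N$ on $\mathbb{Z}_N^2$. Finally, I would close the bound by nondegeneracy: for any subgroup $H\le\mathbb{Z}_N^2$ the orthogonal complement $H^{\perp}$ satisfies $\mathrm{Card}(H)\,\mathrm{Card}(H^{\perp})=N^2$, and isotropy gives $H\subseteq H^{\perp}$, whence $\mathrm{Card}(\mathcal{I}_{\mathbf{U}})^2\le N^2$, i.e.\ $\mathrm{Card}(\mathcal{I}_{\mathbf{U}})\le N$.

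The step I expect to be the main obstacle, and the one requiring the most care, is the complement identity $\mathrm{Card}(H)\,\mathrm{Card}(H^{\perp})=N^2$, since $\mathbb{Z}_N$ is only a ring (not a field) when $N$ is composite, so a naive ``Lagrangian has half the dimension'' argument is unavailable. I would justify it via Pontryagin duality for finite abelian groups: the map $v\mapsto \omega(v,\cdot)$ is an isomorphism $\mathbb{Z}_N^2\to\widehat{\mathbb{Z}_N^2}$, under which $H^{\perp}$ is identified with the annihilator of $H$ and therefore has order $N^2/\mathrm{Card}(H)$, valid for every $N$. This keeps the counting argument independent of any field assumption and completes the bound.
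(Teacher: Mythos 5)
Your proof is correct, but it takes a genuinely different route from the paper's. The paper argues purely by linear algebra: the operators $\{\mathbf{G}_{k,q}\}_{(k,q)\in\mathbb{Z}_N^2}$ are orthonormal in $\mathbb{C}^{N\times N}$ under the trace inner product $\langle\mathbf{A},\mathbf{B}\rangle=\operatorname{Tr}(\mathbf{A}^H\mathbf{B})$, conjugation by the unitary $\mathbf{U}$ preserves this orthogonality, and the conjugated operators indexed by $\mathcal{I}_{\mathbf{U}}$ all lie in the $N$-dimensional subspace $\mathcal{D}(N)$ of diagonal matrices; an $N$-dimensional inner-product space admits at most $N$ mutually orthogonal nonzero vectors, hence $\mathrm{Card}(\mathcal{I}_{\mathbf{U}})\leq N$. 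That argument needs neither the subgroup structure of $\mathcal{I}_{\mathbf{U}}$ nor any commutation property --- it would bound any index set whose conjugates are simultaneously diagonal. Your route instead exploits exactly that structure: you show $\mathcal{I}_{\mathbf{U}}$ is an isotropic subgroup of $\mathbb{Z}_N^2$ for the symplectic form $\omega$ and invoke the annihilator identity $\mathrm{Card}(H)\cdot\mathrm{Card}(H^{\perp})=N^2$. All the individual steps check out: the composition law and the commutator scalar $e^{j2\pi(kq'-k'q)/N}$ agree with the paper's \eqref{non_commutation_property}, $\omega$ is nondegenerate (test against $(1,0)$ and $(0,1)$), and your Pontryagin-duality justification of the counting identity is indeed the right way to avoid the ``Lagrangian has half the dimension'' trap when $N$ is composite. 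The trade-off is clear: the paper's proof is shorter and more elementary, while yours delivers as a byproduct the stronger structural fact that $\mathcal{I}_{\mathbf{U}}$ is an isotropic subgroup --- information the paper establishes separately and uses in Theorem~\ref{thm_no_LAZ} and Proposition~\ref{dispersion_properties}, whose collinearity/triangle-area dichotomy is precisely the isotropy condition $\omega=0$. Your preliminary remark about the phase label $\ell$ is harmless but unnecessary, since the defining condition of $\mathcal{G}_{\mathbf{U}}$ already depends only on $(k,q)$.
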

\begin{proof}
    See Appendix \ref{proof_prop_size_S}.
\end{proof}
Proposition \ref{prop_size_S} indicates that at most $N$ delay-Doppler matrices are diagonalizable by $\mathbf{U}$. By noting the fact that $\mathbf{G}_{0,0} = \mathbf{I}_N\in\mathcal{G}_{\mathbf{U}}$ contributes solely to the mainlobe, i.e., $(0,0)$, the remaining elements in $\mathcal{I}_{\mathbf{U}}$ correspond to at most $N-1$ lowest sidelobes.

Next, we use the following proposition to characterize the distribution of these minimum sidelobes over the delay-Doppler plane.
\begin{proposition}\label{dispersion_properties}
    Any three points in the index set $\mathcal{I}_{\mathbf{U}}$ are either collinear modulo $N$, or form a triangle with an area no less than $N/2$.
\end{proposition}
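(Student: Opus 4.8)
The plan is to reduce this geometric statement to a single algebraic constraint coming from the Weyl--Heisenberg commutation relation, and then invoke the elementary formula for the area of a lattice triangle. The whole argument hinges on the observation that membership of two indices in $\mathcal{I}_{\mathbf{U}}$ forces the corresponding WH operators to commute.

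First I would record that observation precisely. If $(k_1,q_1),(k_2,q_2)\in\mathcal{I}_{\mathbf{U}}$, then both $\mathbf{U}^H\mathbf{G}_{k_1,q_1}\mathbf{U}$ and $\mathbf{U}^H\mathbf{G}_{k_2,q_2}\mathbf{U}$ are diagonal; since diagonal matrices always commute, so do $\mathbf{G}_{k_1,q_1}$ and $\mathbf{G}_{k_2,q_2}$. Next I would translate this commutation into a congruence. Using $\mathbf{D}_{N,q_1}\mathbf{D}_{N,q_2}=\mathbf{D}_{N,q_1+q_2}$, $\mathbf{J}_{N,k_1}\mathbf{J}_{N,k_2}=\mathbf{J}_{N,k_1+k_2}$, together with the non-commutation identity \eqref{non_commutation_property}, a short computation gives
\begin{equation}
\mathbf{G}_{k_1,q_1}\mathbf{G}_{k_2,q_2}=e^{j\frac{2\pi q_2 k_1}{N}}\mathbf{G}_{k_1+k_2,q_1+q_2},
\end{equation}
and the same expression with the two indices swapped carries the phase $e^{j2\pi q_1 k_2/N}$. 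Hence the two operators commute if and only if the symplectic congruence
\begin{equation}
k_1 q_2 - k_2 q_1 \equiv 0 \pmod N
\end{equation}
holds, so every pair of points in $\mathcal{I}_{\mathbf{U}}$ satisfies it.

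I would then pass from pairs to triples through the standard determinant/area identity. For three points $(k_i,q_i)$, $i=1,2,3$, twice the signed Euclidean area of the triangle they span equals
\begin{equation}
\Delta := (k_1 q_2 - k_2 q_1)+(k_2 q_3 - k_3 q_2)+(k_3 q_1 - k_1 q_3).
\end{equation}
Each bracketed term is the symplectic form of a pair drawn from $\mathcal{I}_{\mathbf{U}}$, hence $\equiv 0 \pmod N$, so $\Delta\equiv 0\pmod N$, i.e. $\Delta=mN$ for some integer $m$. If $m=0$ then $\Delta=0$ and the three points are collinear; otherwise $|\Delta|\ge N$ and the triangle area $|\Delta|/2$ is at least $N/2$. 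This is precisely the claimed dichotomy.

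The computation in the second step is mechanical once \eqref{non_commutation_property} is in hand, so the only point requiring genuine care is the bookkeeping of the phase factor and the resulting sign of the symplectic form; I would sanity-check it against the cases $\mathbf{U}=\mathbf{I}_N$ and $\mathbf{U}=\mathbf{F}_N^H$ used earlier, for which $\mathcal{I}_{\mathbf{U}}$ is the zero-Doppler and zero-delay cut and the congruence holds trivially. The subtlety I would flag explicitly is that $\Delta$, and therefore the Euclidean area, depends on the choice of representatives $k_i,q_i\in\{0,\dots,N-1\}$, whereas only $\Delta\bmod N$ is intrinsic; the ``collinear modulo $N$'' branch corresponds to $\Delta=0$ for these canonical representatives, and the area bound refers to the triangle spanned by them.
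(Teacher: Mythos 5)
Your proof is correct, and it follows the same essential strategy as the paper's: simultaneous diagonalizability of two WH operators forces them to commute, commutation is equivalent to the symplectic congruence $k_1q_2-k_2q_1\equiv 0\pmod N$, and the doubled triangle area is therefore a multiple of $N$. The one place you diverge is the final step. The paper invokes the fact (established in its Appendix B) that $\mathcal{G}_{\mathbf{U}}$ is an Abelian \emph{subgroup} of $\mathcal{H}(N)$, uses closure to conclude that the difference indices $\mathbf{b}-\mathbf{a}$ and $\mathbf{c}-\mathbf{a}$ again lie in $\mathcal{I}_{\mathbf{U}}$, and then applies the symplectic congruence once to that single pair of difference vectors. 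You instead expand twice the signed area as the cyclic sum $(k_1q_2-k_2q_1)+(k_2q_3-k_3q_2)+(k_3q_1-k_1q_3)$ and observe that each summand is separately divisible by $N$. This buys you a slightly more self-contained argument: you need only pairwise commutation of the original three operators, not the group-theoretic closure property. It also handles a small imprecision in the paper's write-up, where the symplectic form is defined as the mod-$N$ residue $\langle np-mq\rangle_N$ yet is then inserted into a Euclidean area formula; by keeping the unreduced integer $\Delta$ and using only the congruence $\Delta\equiv 0\pmod N$, you make the dichotomy ($\Delta=0$ gives collinearity, $|\Delta|\ge N$ gives area at least $N/2$) airtight. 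Your closing remark about the dependence of $\Delta$ on the choice of representatives in $\{0,\dots,N-1\}$ is a legitimate caveat that the paper glosses over.
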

\begin{proof}
    See Appendix \ref{dispersion_proof}.
\end{proof}
\textbf{Remark 2}. In a nutshell, Theorem \ref{thm_no_LAZ}, Proposition \ref{prop_size_S}, and Proposition \ref{dispersion_properties} reveal a fundamental limitation in shaping the DP-AF sidelobe pattern. Theorem \ref{thm_no_LAZ} rules out the possibility of an orthogonal waveform that attains the minimum ESL over any $2\times 2$ delay-Doppler block, thereby excluding the existence of a concentrated low-ambiguity zone. Proposition \ref{prop_size_S} further shows that, for any modulation basis $\mathbf{U}$, there are at most $N-1$ lowest sidelobes in the entire delay-Doppler plane. Proposition \ref{dispersion_properties} then constrains their geometry, implying that these minimum sidelobes must either lie on a single mod-$N$ line or be highly dispersed in the sense that any three non-collinear points enclose an area of at least $N/2$. These results jointly indicate that no waveform can globally minimize the ESL of the DP-AF, nor can one concentrate the minimum sidelobes over a compact 2D region. At best, the minimum ESL can be attained along a one-dimensional cut (such as zero-delay or zero-Doppler cuts) or within a set of widely separated delay-Doppler bins.

\subsection{Case Study}
In this subsection, we derive closed-form expressions for the average squared DP-AF corresponding to four representative communication waveforms. These results provide further quantitative evidence supporting the conclusions discussed above.
\subsubsection{Average Squared DP-AF for OFDM}
\begin{corollary}
    The average squared DP-AF of OFDM is
\begin{align}\label{ofdm_dp_af}
    \mathbb{E}\left(\left| \mathcal{A}_{\rm{DP}}^{\rm{OFDM}}(k,q) \right|^2\right) = N + (\kappa - 2)N  \delta_{q,0} + N^2 \delta_{k,0} \delta_{q,0}.
\end{align}
\end{corollary}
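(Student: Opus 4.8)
The plan is to obtain this corollary as a direct specialization of the general formula in Proposition \ref{prop_ESL_DP_AF} to the OFDM modulation basis $\mathbf{U} = \mathbf{F}_N^H$. In that formula only the middle term $\sum_{n=1}^N |\mathbf{u}_n^H \mathbf{D}_{N,q}\mathbf{J}_{N,k}\mathbf{u}_n|^2$ is waveform-dependent, so the entire task reduces to evaluating this single quantity when $\mathbf{u}_n$ denotes the $n$-th column of $\mathbf{F}_N^H$, i.e., $\mathbf{u}_n = \mathbf{F}_N^H \mathbf{e}_n$.

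First I would exploit the diagonalization identity $\mathbf{J}_{N,k} = \mathbf{F}_N^H \mathbf{D}_{N,k}^\ast \mathbf{F}_N$ already established in the excerpt. Writing $\mathbf{J}_{N,k}\mathbf{u}_n = \mathbf{F}_N^H \mathbf{D}_{N,k}^\ast \mathbf{F}_N \mathbf{F}_N^H \mathbf{e}_n = \mathbf{F}_N^H \mathbf{D}_{N,k}^\ast \mathbf{e}_n$ shows that each DFT column $\mathbf{u}_n$ is an eigenvector of the shift matrix, with eigenvalue equal to the $n$-th diagonal entry of $\mathbf{D}_{N,k}^\ast$, namely $e^{-j 2\pi k(n-1)/N}$, a quantity of unit modulus. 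Consequently $\mathbf{u}_n^H \mathbf{D}_{N,q}\mathbf{J}_{N,k}\mathbf{u}_n = e^{-j 2\pi k(n-1)/N}\, \mathbf{u}_n^H \mathbf{D}_{N,q}\mathbf{u}_n$, and the delay-dependent phase factor will simply drop out once the modulus squared is taken.

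Next I would evaluate the residual quadratic form $\mathbf{u}_n^H \mathbf{D}_{N,q}\mathbf{u}_n$. Since $\mathbf{D}_{N,q}$ is diagonal with entries $e^{j 2\pi q(m-1)/N}$ and each entry of a DFT column has uniform magnitude $1/\sqrt{N}$, this form collapses to $\frac{1}{N}\sum_{m=1}^N e^{j 2\pi q(m-1)/N}$, a geometric sum that equals $\delta_{q,0}$ for $q\in\mathbb{Z}_N$. Therefore $|\mathbf{u}_n^H \mathbf{D}_{N,q}\mathbf{J}_{N,k}\mathbf{u}_n|^2 = \delta_{q,0}$ independently of both $n$ and $k$, and summing over the $N$ columns yields $N\delta_{q,0}$. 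Substituting this into Proposition \ref{prop_ESL_DP_AF} produces $N + (\kappa-2)N\delta_{q,0} + N^2 \delta_{k,0}\delta_{q,0}$, exactly the claimed expression.

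I do not anticipate a substantial obstacle, as the corollary is essentially a clean specialization; the only points requiring care are the index bookkeeping between the one-indexed matrix entries and the modulo-$N$ convention, and confirming that the unit-modulus eigenvalue of $\mathbf{J}_{N,k}$ cancels cleanly so that the sidelobe expression depends on the Doppler index $q$ only—through $\delta_{q,0}$—and is entirely independent of the delay index $k$.
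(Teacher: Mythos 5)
Your proposal is correct and follows essentially the same route as the paper: both specialize Proposition~\ref{prop_ESL_DP_AF} to $\mathbf{U}=\mathbf{F}_N^H$ and reduce the waveform-dependent term to $N\delta_{q,0}$ via the diagonalization $\mathbf{J}_{N,k}=\mathbf{F}_N^H\mathbf{D}_{N,k}^\ast\mathbf{F}_N$ (the paper phrases this as a trace identity, $\operatorname{Tr}(|\mathbf{J}_{N,q}\mathbf{D}_{N,k}^\ast|^2)=N\delta_{q,0}$, while you evaluate the diagonal quadratic forms entrywise, which is an equivalent calculation).
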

\begin{proof}
    By letting $\mathbf{U} = \mathbf{F}_N^H$, it is straightforward to see that
    \begin{align}\label{ofdm_waveform_term}
        &\nonumber\sum_{n=1}^N \left| \mathbf{u}_n^H \mathbf{D}_{N,q}\mathbf{J}_{N,k}\mathbf{u}_n \right|^2 = \operatorname{Tr}\left(\left|\mathbf{U}^H\mathbf{D}_{N,q}\mathbf{J}_{N,k}\mathbf{U}\right|^2\right)\\
        &\nonumber = \operatorname{Tr}\left(\left|\mathbf{F}_N\mathbf{D}_{N,q}\mathbf{F}_N^H\mathbf{D}_{N,k}^\ast\mathbf{F}_N\mathbf{F}_N^H\right|^2\right)\\
        &\nonumber = \operatorname{Tr}\left(\left|\mathbf{F}_N\mathbf{D}_{N,q}\mathbf{F}_N^H\mathbf{D}_{N,k}^\ast\right|^2\right) \\&= \operatorname{Tr}\left(\left|\mathbf{J}_{N,q}\mathbf{D}_{N,k}^\ast\right|^2\right) = N\delta_{q,0}. 
    \end{align}
Substituting \eqref{ofdm_waveform_term} into \eqref{the average sidelobe} yields \eqref{ofdm_dp_af}.
\end{proof}
The average squared zero-Doppler slice of OFDM is
\begin{align}
    \mathbb{E}\left(\left| \mathcal{A}_{\rm{DP}}^{\rm{OFDM}}(k,0) \right|^2\right)= (\kappa - 1)N.
\end{align}
The average squared zero-delay slice of OFDM is
\begin{equation}
\begin{aligned}
    \mathbb{E}\left(\left| \mathcal{A}_{\rm{DP}}^{\rm{OFDM}}(0,q) \right|^2\right)= N.
\end{aligned}
\end{equation}
Moreover, the ESL of OFDM, excluding the zero-Doppler and zero-delay slices ($k\ne 0, q\ne 0$), is given by
\begin{align}
    \mathbb{E}\left(\left| \mathcal{A}_{\rm{DP}}^{\rm{OFDM}}(k,q) \right|^2\right) = N.
\end{align}
As indicated in the previous discussion, for sub-Gaussian constellations ($1 \leq \kappa <2$), the $N-1$ lowest sidelobes are all placed in the zero-Doppler slice of the DP-AF of OFDM. This is consistent with the results in \cite{liu2024ofdmachieveslowestranging}, that CP-OFDM achieves the lowest average ranging sidelobe among all orthogonal waveforms with CP. 

\subsubsection{Average Squared DP-AF for SC}
\begin{corollary}
    The average squared DP-AF of SC is
\begin{align}\label{SC_DP_AF}
    \mathbb{E}\left(\left| \mathcal{A}_{\rm{DP}}^{\rm{SC}}(k,q) \right|^2\right) = N + (\kappa - 2)N \delta_{k,0} + N^2 \delta_{k,0} \delta_{q,0}.
\end{align}
\end{corollary}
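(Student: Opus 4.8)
The plan is to invoke the general closed-form expression of Proposition~\ref{prop_ESL_DP_AF} and merely specialize its waveform-dependent term to the single-carrier basis $\mathbf{U}=\mathbf{I}_N$. Since the additive constant $N$ and the mainlobe correction $N^2\delta_{k,0}\delta_{q,0}$ in \eqref{the average sidelobe} are common to every orthogonal waveform, the only quantity I need to evaluate is
\begin{equation*}
    T_{\rm SC}(k,q):=\sum_{n=1}^N\left|\mathbf{u}_n^H\mathbf{D}_{N,q}\mathbf{J}_{N,k}\mathbf{u}_n\right|^2,
\end{equation*}
after which substitution into \eqref{the average sidelobe} delivers the claim directly, exactly as in the preceding OFDM corollary.

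For the single-carrier waveform we have $\mathbf{u}_n=\mathbf{e}_n$, the $n$th standard basis vector, so each quadratic form $\mathbf{u}_n^H\mathbf{D}_{N,q}\mathbf{J}_{N,k}\mathbf{u}_n$ simply extracts the $(n,n)$ diagonal entry of $\mathbf{D}_{N,q}\mathbf{J}_{N,k}$. Consequently $T_{\rm SC}(k,q)=\operatorname{Tr}\left(\left|\mathbf{D}_{N,q}\mathbf{J}_{N,k}\right|^2\right)$, the sum of squared magnitudes of the diagonal elements. Because $\mathbf{D}_{N,q}$ is diagonal with unit-modulus entries and $\mathbf{J}_{N,k}$ is a $\{0,1\}$ permutation matrix, left-multiplication by $\mathbf{D}_{N,q}$ scales rows by unit-modulus factors and leaves the moduli of the entries of $\mathbf{J}_{N,k}$ unchanged; hence $\left|\mathbf{D}_{N,q}\mathbf{J}_{N,k}\right|^2=\mathbf{J}_{N,k}$ and $T_{\rm SC}(k,q)=\operatorname{Tr}\left(\mathbf{J}_{N,k}\right)$.

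It then remains to read off the trace of the periodic time-shift matrix. For $k=0$ one has $\mathbf{J}_{N,0}=\mathbf{I}_N$, giving $\operatorname{Tr}(\mathbf{J}_{N,0})=N$; for $k\neq 0$ the nontrivial cyclic shift is fixed-point-free, so every diagonal entry vanishes and $\operatorname{Tr}(\mathbf{J}_{N,k})=0$. This yields $T_{\rm SC}(k,q)=N\delta_{k,0}$, and inserting it into \eqref{the average sidelobe} reproduces \eqref{SC_DP_AF}. There is no genuine obstacle in this argument; the only point worth stating carefully is the fixed-point-free property of the nontrivial cyclic shift, which is precisely what localizes the $(\kappa-2)$ term onto the zero-delay cut $k=0$. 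In this sense the result is the exact delay/Doppler dual of the OFDM corollary, where the analogous term was instead confined to the zero-Doppler cut $q=0$.
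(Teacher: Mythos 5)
Your proposal is correct and follows essentially the same route as the paper: specialize Proposition~\ref{prop_ESL_DP_AF} to $\mathbf{U}=\mathbf{I}_N$ and evaluate $\sum_{n}|\mathbf{u}_n^H\mathbf{D}_{N,q}\mathbf{J}_{N,k}\mathbf{u}_n|^2=\operatorname{Tr}\bigl(\left|\mathbf{D}_{N,q}\mathbf{J}_{N,k}\right|^2\bigr)=N\delta_{k,0}$. You simply spell out the intermediate steps (unit-modulus rows, fixed-point-free cyclic shift) that the paper leaves implicit, so there is nothing to correct.
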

\begin{proof} 
Let $\mathbf{U} = \mathbf{I}_N$. It follows that
\begin{equation}
    \nonumber\sum_{n=1}^N \left| \mathbf{u}_n^H \mathbf{D}_{N,q}\mathbf{J}_{N,k}\mathbf{u}_n \right|^2 = \operatorname{Tr}\left(\left|\mathbf{D}_{N,q}\mathbf{J}_{N,k}\right|^2\right) = N\delta_{k,0},
\end{equation}
which results in \eqref{SC_DP_AF}.
\end{proof}
The average squared zero-Doppler slice of SC is
\begin{equation}
    \mathbb{E}\left(\left| \mathcal{A}_{\rm{DP}}^{\rm{SC}}(k,0) \right|^2\right) = N.
\end{equation}
The average squared zero-delay slice of SC is
\begin{equation}
    \mathbb{E}\left(\left| \mathcal{A}_{\rm{DP}}^{\rm{SC}}(0,q) \right|^2\right) =
    (\kappa - 1)N.
\end{equation}
The ESL of SC, excluding the zero-Doppler and zero-delay slices ($k\ne 0, q\ne 0$), is
\begin{equation}
    \mathbb{E}\left(\left| \mathcal{A}_{\rm{DP}}^{\rm{SC}}(k,q) \right|^2\right) = N.
\end{equation}
In contrast to the OFDM modulation, for sub-Gaussian constellations, SC achieves the lowest Doppler sidelobe levels.

\subsubsection{Average Squared DP-AF for OTFS}
\begin{proposition}\label{the average squared DP-AF of OTFS}
    The average squared DP-AF of OTFS is
    \begin{align}\label{DP_AF_OTFS}
    \nonumber& \mathbb{E}\left(\left| \mathcal{A}_{\rm{DP}}^{\rm{OTFS}}(k,q) \right|^2\right) \\
        &= N + (\kappa - 2)N \cdot \delta_{\langle 
     k\rangle_{N_2}, 0} \cdot \delta_{\langle q \rangle_{N_1}, 0} + N^2 \delta_{k,0} \delta_{q,0}.
    \end{align}
\end{proposition}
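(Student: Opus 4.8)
The plan is to invoke the general closed-form of Proposition~\ref{prop_ESL_DP_AF} and reduce the problem to evaluating the waveform-dependent term $T(k,q):=\sum_{n=1}^N \left|\mathbf{u}_n^H\mathbf{G}_{k,q}\mathbf{u}_n\right|^2$ for the OTFS basis $\mathbf{U}=\mathbf{F}_{N_1}^H\otimes\mathbf{I}_{N_2}$, exactly as was done for OFDM and SC. Once I establish that $T(k,q)=N\,\delta_{\langle k\rangle_{N_2},0}\,\delta_{\langle q\rangle_{N_1},0}$, substituting into \eqref{the average sidelobe} yields \eqref{DP_AF_OTFS} immediately. To exploit the Kronecker structure I would index every length-$N$ vector by a coarse--fine pair: write the row index as $m=m_1N_2+m_2$ with $m_1\in\mathbb{Z}_{N_1}$, $m_2\in\mathbb{Z}_{N_2}$, likewise $n=(n_1,n_2)$, and split the delay as $k=k_1N_2+k_2$ with $k_2=\langle k\rangle_{N_2}$. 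Under this convention the $n$-th OTFS column is the Kronecker product of the $n_1$-th column of $\mathbf{F}_{N_1}^H$ with the $n_2$-th standard basis vector $\mathbf{e}_{n_2}\in\mathbb{C}^{N_2}$, so its $(m_1,m_2)$-entry equals $\tfrac{1}{\sqrt{N_1}}e^{j2\pi m_1 n_1/N_1}\,\delta_{m_2,n_2}$, i.e. a Fourier vector in the coarse index supported on a single fine index $n_2$.

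Next I would write the diagonal entry explicitly through the pointwise action of the Weyl--Heisenberg operator, $(\mathbf{G}_{k,q}\mathbf{x})_i=e^{j2\pi qi/N}\,x_{\langle i-k\rangle_N}$, giving $\mathbf{u}_n^H\mathbf{G}_{k,q}\mathbf{u}_n=\sum_i \overline{[\mathbf{u}_n]_i}\,e^{j2\pi qi/N}\,[\mathbf{u}_n]_{\langle i-k\rangle_N}$. The delta factor in the column forces the fine index of $i$ to equal $n_2$, while the second factor forces the fine index of $\langle i-k\rangle_N$ to equal $n_2$ as well; since subtracting $k$ shifts the fine index by $-k_2$ modulo $N_2$, the two constraints are compatible only when $k_2=\langle k\rangle_{N_2}=0$. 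Hence if $\langle k\rangle_{N_2}\neq0$, every diagonal entry vanishes and $T(k,q)=0$, which already accounts for the factor $\delta_{\langle k\rangle_{N_2},0}$.

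It then remains to treat $k=k_1N_2$. Setting $i_2=n_2$ collapses the sum onto the coarse index, and the crucial simplification is that the two Fourier factors combine so that the cross term $e^{j2\pi i_1 n_1/N_1}$ cancels, leaving $\mathbf{u}_n^H\mathbf{G}_{k,q}\mathbf{u}_n=e^{j2\pi qn_2/N}\,e^{-j2\pi k_1 n_1/N_1}\cdot\tfrac{1}{N_1}\sum_{i_1=0}^{N_1-1}e^{j2\pi qi_1/N_1}$. The geometric sum equals $N_1$ when $\langle q\rangle_{N_1}=0$ and $0$ otherwise, producing the factor $\delta_{\langle q\rangle_{N_1},0}$; and when both congruences hold, each of the $N$ diagonal entries has unit modulus, so $T(k,q)=N$. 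Collecting the three cases gives $T(k,q)=N\,\delta_{\langle k\rangle_{N_2},0}\,\delta_{\langle q\rangle_{N_1},0}$, and the stated expression follows by substitution.

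I expect the main obstacle to be bookkeeping rather than conceptual: correctly propagating the borrow when reducing $\langle i-k\rangle_N$ into its coarse and fine parts, and keeping the Kronecker ordering ($\mathbf{F}_{N_1}^H$ on the coarse block, $\mathbf{I}_{N_2}$ on the fine block) consistent throughout the phase manipulations. As a consistency check I would note that when $\langle k\rangle_{N_2}=0$ one has the clean factorization $\mathbf{J}_{N,k}=\mathbf{J}_{N_1,k_1}\otimes\mathbf{I}_{N_2}$, which makes the reduction to a size-$N_1$ problem transparent and recovers the SC and OFDM results as the degenerate cases $N_2=N$ and $N_2=1$, respectively.
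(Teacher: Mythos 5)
Your proposal is correct and follows essentially the same route as the paper's proof in Appendix~E: both substitute the OTFS basis $\mathbf{F}_{N_1}^H\otimes\mathbf{I}_{N_2}$ into the general formula of Proposition~\ref{prop_ESL_DP_AF}, extract the factor $\delta_{\langle k\rangle_{N_2},0}$ from the sparsity pattern of the columns under the cyclic shift, and obtain $\delta_{\langle q\rangle_{N_1},0}$ from a Fourier sum over the coarse index. The only difference is presentational: you evaluate each diagonal entry $\mathbf{u}_n^H\mathbf{G}_{k,q}\mathbf{u}_n$ in closed form via the coarse--fine index split (correctly handling the mod-$N_2$ borrow), whereas the paper first aggregates $\sum_n \mathrm{Diag}(\mathbf{u}_n^\ast)\mathbf{J}_{N,k}\mathbf{u}_n\mathbf{u}_n^H\mathbf{J}_{N,k}^T\mathrm{Diag}(\mathbf{u}_n)$ into the matrix $\tfrac{1}{N_1}\left(\mathbf{1}_{N_1,N_1}\otimes\mathbf{I}_{N_2}\right)\delta_{\langle k\rangle_{N_2},0}$ and then evaluates a quadratic form with $\mathbf{f}_{N,q+1}$; the two computations are equivalent.
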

\begin{proof}
    See Appendix \ref{the proof of OTFS}.
\end{proof}

The average squared zero-Doppler slice of OTFS is
\begin{equation}
        \mathbb{E}\left(\left| \mathcal{A}_{\rm{DP}}^{\rm{OTFS}}(k,0) \right|^2\right) = N + (\kappa - 2)N  \cdot\delta_{\langle k\rangle_{N_2}, 0}.
\end{equation}
The average squared zero-delay slice of OTFS is
\begin{equation}
    \mathbb{E}\left(\left| \mathcal{A}_{\rm{DP}}^{\rm{OTFS}}(0,q) \right|^2\right) = N + (\kappa - 2)  N \cdot \delta_{\langle q\rangle_{N_1}, 0}.
\end{equation}
Excluding the zero-Doppler and zero-delay slices, the ESL ($k \ne 0, q \ne 0$) of OTFS is
\begin{equation}
    \mathbb{E}\left(\left| \mathcal{A}_{\rm{DP}}^{\rm{OTFS}}(k,q) \right|^2\right)  = N + (\kappa - 2)N \cdot \delta_{\langle k\rangle_{N_2}, 0} \cdot \delta_{ \langle q \rangle_{N_1},0}.
\end{equation}

It is worth noting that the $N-1$ lowest sidelobes together with the mainlobe are generally placed over an $N_1\times N_2$ uniform grid in the DP-AF of the OTFS waveform. For any set of four nearest sidelobes, their delay-Doppler indices are either collinear or form a rectangle whose area equals $N$. Consequently, any three nearest sidelobes that are not collinear form a triangle of area $N/2$, which is consistent with the result in Proposition \ref{prop_size_S}.

\subsubsection{Average Squared DP-AF for AFDM}
\begin{proposition}\label{afdm_sp_af_squared}
    The average squared DP-AF of AFDM is
    \begin{align}\label{dp_af_afdm}
        &\nonumber \mathbb{E}\left(\left|\mathcal{A}_{\rm{DP}}^{\rm{AFDM}}(k,q)\right|^2\right)\\
        &= N + (\kappa -2)N \cdot \delta_{\langle 2Nkc_1 -q\rangle_N, 0} + N^2 \delta_{k,0} \delta_{q,0}.
    \end{align}    
\end{proposition}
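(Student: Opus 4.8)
The plan is to deduce the result from the general closed form of Proposition~\ref{prop_ESL_DP_AF}. Since \eqref{the average sidelobe} already isolates the only waveform-dependent quantity, namely
\[
T(k,q):=\sum_{n=1}^N\left|\mathbf{u}_n^H\mathbf{D}_{N,q}\mathbf{J}_{N,k}\mathbf{u}_n\right|^2=\operatorname{Tr}\!\left(\left|\mathbf{U}^H\mathbf{D}_{N,q}\mathbf{J}_{N,k}\mathbf{U}\right|^2\right),
\]
it suffices to evaluate $T(k,q)$ for $\mathbf{U}=\mathbf{\Lambda}_{c_1}^H\mathbf{F}_N^H\mathbf{\Lambda}_{c_2}^H$ and to show that $T(k,q)=N\,\delta_{\langle 2Nkc_1-q\rangle_N,0}$; substituting this into \eqref{the average sidelobe} yields \eqref{dp_af_afdm} at once.

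First I would remove the second chirp. Because $\operatorname{Tr}(|\cdot|^2)$ depends only on the diagonal entries, $T(k,q)=\sum_n|(\mathbf{M})_{nn}|^2$ with $\mathbf{M}=\mathbf{U}^H\mathbf{D}_{N,q}\mathbf{J}_{N,k}\mathbf{U}$. Writing $\mathbf{U}^H=\mathbf{\Lambda}_{c_2}\mathbf{F}_N\mathbf{\Lambda}_{c_1}$ gives $\mathbf{M}=\mathbf{\Lambda}_{c_2}\mathbf{N}\mathbf{\Lambda}_{c_2}^H$, where $\mathbf{N}=\mathbf{F}_N\mathbf{\Lambda}_{c_1}\mathbf{D}_{N,q}\mathbf{J}_{N,k}\mathbf{\Lambda}_{c_1}^H\mathbf{F}_N^H$. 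Since $\mathbf{\Lambda}_{c_2}$ is diagonal and unimodular, conjugation by it leaves each diagonal entry unchanged, i.e. $(\mathbf{M})_{nn}=(\mathbf{N})_{nn}$, so $\mathbf{\Lambda}_{c_2}$ plays no role and $T(k,q)=\sum_n|(\mathbf{N})_{nn}|^2$.

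The core of the argument is a chirp-conjugation identity for $\mathbf{\Lambda}_{c_1}$. As $\mathbf{D}_{N,q}$ and $\mathbf{\Lambda}_{c_1}$ are both diagonal and commute, I would reduce $\mathbf{N}$ to $\mathbf{F}_N\mathbf{D}_{N,q}(\mathbf{\Lambda}_{c_1}\mathbf{J}_{N,k}\mathbf{\Lambda}_{c_1}^H)\mathbf{F}_N^H$ and establish
\[
\mathbf{\Lambda}_{c_1}\mathbf{J}_{N,k}\mathbf{\Lambda}_{c_1}^H=e^{j2\pi c_1k^2}\,\mathbf{D}_{N,-2Nc_1k}\,\mathbf{J}_{N,k}
\]
by an entrywise computation: the nonzero entries sit on $n=\langle m-k\rangle_N$ and carry the quadratic phase $e^{j2\pi c_1(n^2-m^2)}$, which on the non-wrapping band factors exactly as $e^{j2\pi c_1k^2}$ times the linear phase $e^{-j2\pi(2Nc_1k)m/N}$ of $\mathbf{D}_{N,-2Nc_1k}$. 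I expect this to be the main obstacle: the circular shift $\mathbf{J}_{N,k}$ sends the $k$ wrapped rows to index $m-k+N$, and the square $(m-k+N)^2$ generates an extra factor $e^{j2\pi(2Nc_1)(m-k)}e^{j2\pi c_1N^2}$. It is precisely the hypothesis $2Nc_1\in\mathbb{Z}$ that forces $e^{j2\pi(2Nc_1)(m-k)}=1$ and aligns the wrapped phases with the clean band (up to a global unimodular factor), so that the single identity above holds and the quadratic structure collapses to a pure Doppler shift of magnitude $2Nc_1k$.

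Finally I would merge the two diagonal Doppler factors, $\mathbf{D}_{N,q}\mathbf{D}_{N,-2Nc_1k}=\mathbf{D}_{N,q-2Nc_1k}$, and carry out the Fourier conjugation. Inserting $\mathbf{F}_N^H\mathbf{F}_N=\mathbf{I}_N$ and using $\mathbf{F}_N\mathbf{D}_{N,q'}\mathbf{F}_N^H=\mathbf{J}_{N,q'}$ together with $\mathbf{F}_N\mathbf{J}_{N,k}\mathbf{F}_N^H=\mathbf{D}_{N,k}^\ast$ (the latter from $\mathbf{J}_{N,k}=\mathbf{F}_N^H\mathbf{D}_{N,k}^\ast\mathbf{F}_N$) gives $\mathbf{N}=e^{j2\pi c_1k^2}\mathbf{J}_{N,\langle q-2Nc_1k\rangle_N}\mathbf{D}_{N,k}^\ast$. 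This is a unimodular diagonal matrix shifted by $\langle q-2Nc_1k\rangle_N$ rows, so its diagonal is nonzero only when $\langle q-2Nc_1k\rangle_N=0$, in which case all $N$ diagonal entries have unit modulus, while the leading phase $e^{j2\pi c_1k^2}$ is immaterial to $|(\mathbf{N})_{nn}|^2$. Hence $T(k,q)=N\,\delta_{\langle q-2Nc_1k\rangle_N,0}=N\,\delta_{\langle 2Nkc_1-q\rangle_N,0}$, where the last step uses $\langle x\rangle_N=0\Leftrightarrow\langle-x\rangle_N=0$. Substituting into \eqref{the average sidelobe} completes the proof.
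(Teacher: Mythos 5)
Your argument reaches the same waveform-dependent quantity as the paper, namely $T(k,q)=\sum_{n}\left|\mathbf{u}_n^H\mathbf{D}_{N,q}\mathbf{J}_{N,k}\mathbf{u}_n\right|^2$ substituted into \eqref{the average sidelobe}, but evaluates it by a genuinely different route. The paper's proof in Appendix \ref{the proof of AFDM} is entrywise: it writes $\mathbf{u}_n$ as a chirped DFT column, expands $\sum_m f_{m,q+1}d_{m,n}^\ast d_{\langle m-k\rangle_N,n}$, and evaluates the resulting geometric sum in \eqref{calculation of di fq}. You work at the operator level: discard $\mathbf{\Lambda}_{c_2}$ by diagonal conjugation, establish the chirp-conjugation identity $\mathbf{\Lambda}_{c_1}\mathbf{J}_{N,k}\mathbf{\Lambda}_{c_1}^H=e^{j2\pi c_1k^2}\mathbf{D}_{N,-2Nc_1k}\mathbf{J}_{N,k}$, and Fourier-conjugate to land on $\mathbf{N}=e^{j2\pi c_1k^2}\mathbf{J}_{N,\langle q-2Nc_1k\rangle_N}\mathbf{D}_{N,k}^\ast$, whose diagonal is nonzero precisely when $\langle q-2Nc_1k\rangle_N=0$. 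This is cleaner and more structural: it exhibits $\mathbf{U}^H\mathbf{G}_{k,q}\mathbf{U}$ as, up to a phase and a harmless diagonal conjugation, another Weyl--Heisenberg element, which meshes with the group-theoretic viewpoint of Section \ref{sec_3} and makes the mod-$N$ line of lowest sidelobes in \eqref{modN_line} transparent.

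One step deserves more care. The extra phase you isolate on the $k$ wrapped rows, $e^{j2\pi(2Nc_1)(m-k)}e^{j2\pi c_1N^2}$, reduces to $(-1)^{pN}$ with $p=2Nc_1$, and this is \emph{not} a global factor: it multiplies only the wrapped band. Your identity therefore holds exactly only when $pN$ is even; when $p$ and $N$ are both odd the wrapped rows carry a relative sign, the collapse to $\mathbf{D}_{N,-pk}\mathbf{J}_{N,k}$ fails, and so does the stated formula (for $N=3$, $c_1=1/6$, $k=q=1$ a direct computation gives $T(1,1)=1/3$ rather than $3$). To be fair, the paper's own derivation shares this blind spot: in \eqref{calculation of di fq} the factor $e^{-j\frac{\pi}{N}p(\lambda N-k)^2}$ is dropped as an $m$-independent constant, yet its cross term $e^{j\pi p\lambda^2 N}=(-1)^{p\lambda N}$ depends on $\lambda$, hence on $m$, unless $pN$ is even. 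For the usual AFDM setting with $N$ even (e.g., the paper's $N=64$, $c_1=1/32$) both proofs are sound; your phrase ``up to a global unimodular factor'' should simply be replaced by the explicit hypothesis that $2N^2c_1$ is even.
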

\begin{proof}
    See Appendix \ref{the proof of AFDM}.
\end{proof}

Accordingly, the average squared zero-Doppler slice of AFDM is
\begin{equation}
        \mathbb{E}\left(\left| \mathcal{A}_{\rm{DP}}^{\rm{AFDM}}(k,0) \right|^2\right) = N + (\kappa - 2) N \cdot \delta_{\langle 2Nk{c_1} \rangle_N, 0}. %\quad (k = \frac{\alpha}{2 c_1})
\end{equation}
The average squared zero-delay slice of AFDM is
\begin{equation}
    \mathbb{E}\left(\left| \mathcal{A}_{\rm{DP}}^{\rm{AFDM}}(0,q) \right|^2\right) = N.
\end{equation}
The ESL of the remaining sidelobes ($k \ne 0, q \ne 0$) of AFDM is
\begin{equation}
        \mathbb{E}\left(\left| \mathcal{A}_{\rm{DP}}^{\rm{AFDM}}(k,q) \right|^2\right)  = N + (\kappa -2)N \cdot  \delta_{\langle 2Nkc_1 - q\rangle_N, 0}.
\end{equation}

It follows from \eqref{dp_af_afdm} that, for AFDM, the locations of the lowest sidelobes are determined by
\begin{equation}\label{modN_line}
    \big\langle q - 2N c_{1} \cdot k \big\rangle_{N} = 0, \quad (q,k) \in \mathbb{Z}_{N}^{2},
\end{equation}
which represents a line modulo $N$ in the delay-Doppler domain with slope $2Nc_{1} \in \mathbb{Z}$. Since $k$ traverses all values in $\mathbb{Z}_N$, the line \eqref{modN_line} contains exactly $N$ lattice points in $\mathbb{Z}_{N}^{2}$, one of which corresponds to the mainlobe at $(0,0)$. Hence, the AFDM waveform exhibits $N - 1$ lowest sidelobes in the average squared DP-AF. Moreover, by basic planar geometry, any three nearest non-collinear lattice points on this line form a triangle with area $N/2$.

\begin{figure}[!t]
    \label{4_waveforms_af}
    \centering
    \vspace{-0.3cm}
        \subfigure[OFDM]{
        \includegraphics[width = 0.45\columnwidth]{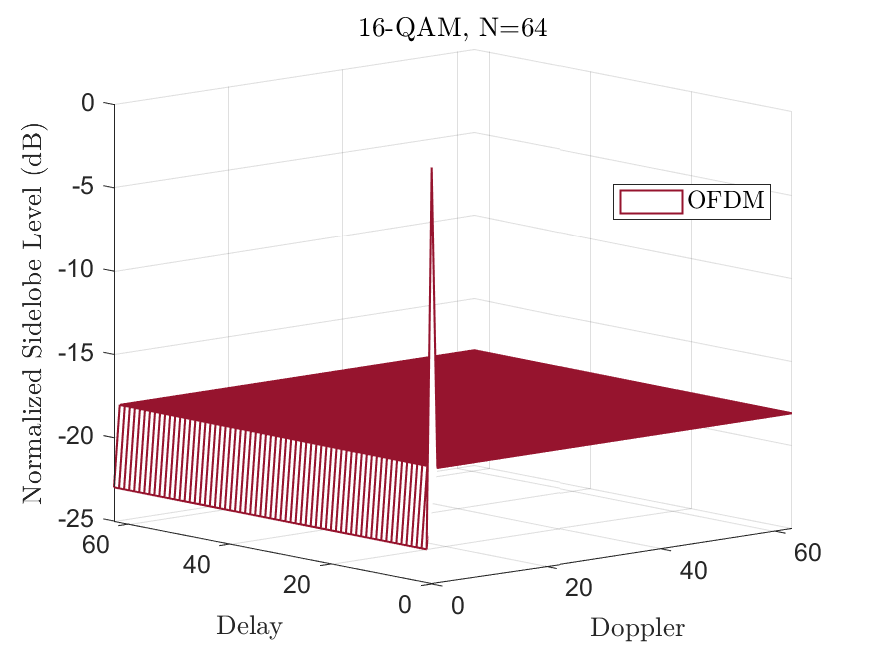} 
        \label{fig: AF of 16QAM under OFDM compare}    
    }
    \subfigure[SC]{
        \includegraphics[width=0.45\columnwidth]{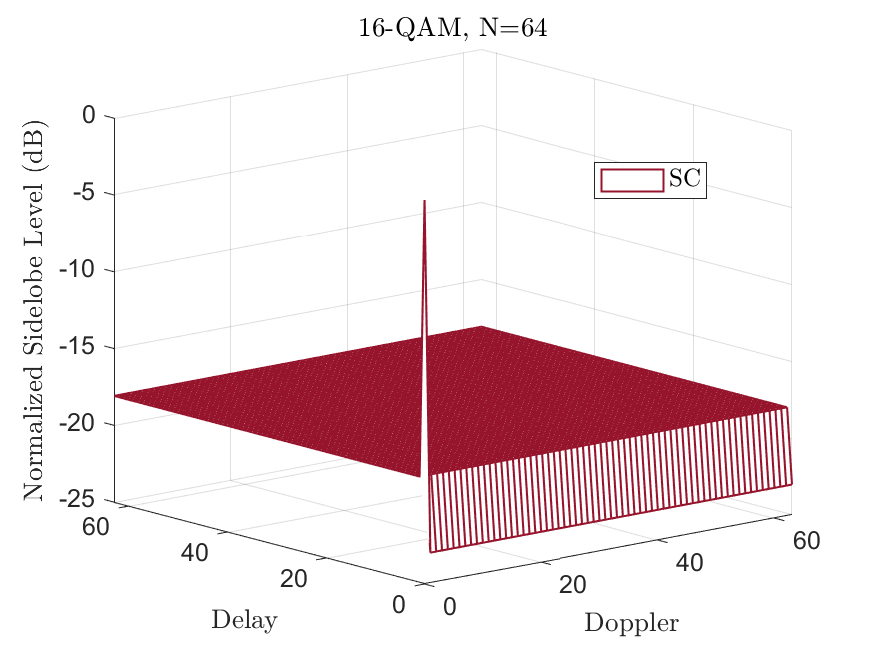}
        \label{fig: AF of 16QAM under SC compare}    
    }

    \subfigure[OTFS]{
        \includegraphics[width = 0.45\columnwidth]{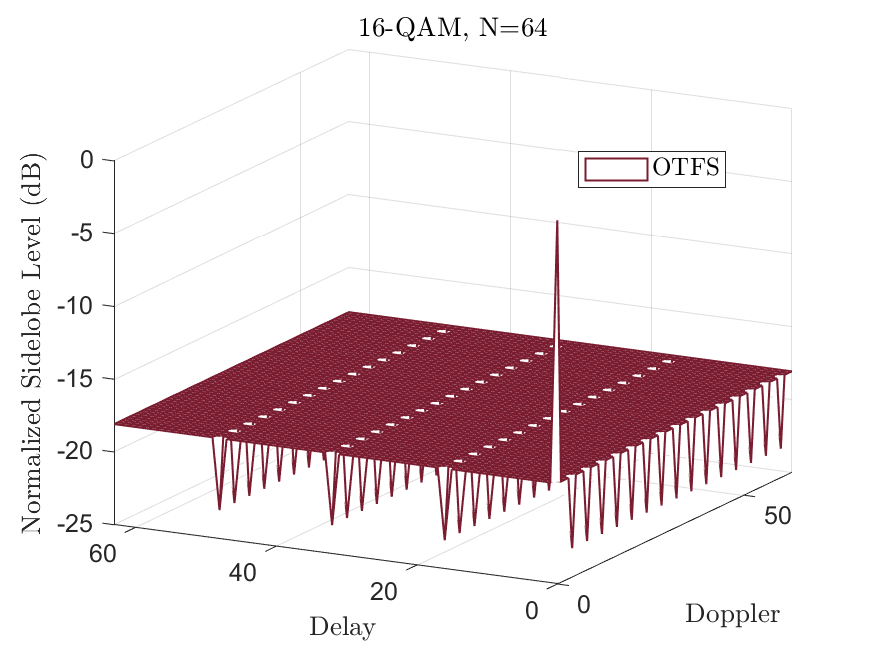} 
        \label{fig: AF of 16QAM under OTFS compare}    
        }
    \subfigure[AFDM]{
        \includegraphics[width=0.45\columnwidth]{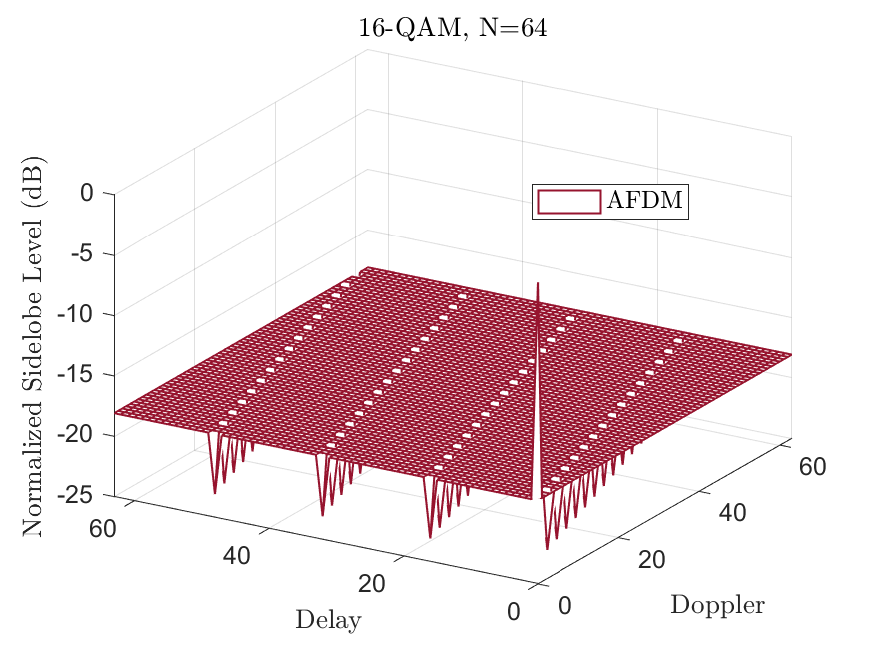}
        \label{fig: AF of 16QAM under AFDM compare}    
        }
    \caption{The average squared DP-AF of OFDM, SC, OTFS and AFDM waveforms with $N = 64$ under 16-QAM constellation.}
    \label{fig: the 2D AF performances compare of OTFS and AFDM}
\end{figure}
\textbf{Remark 3}. Fig. \ref{fig: the 2D AF performances compare of OTFS and AFDM} illustrates the average squared DP-AFs of the four waveforms under comparison, adopting a standard 16-QAM constellation with kurtosis $\kappa = 1.32$. The ESL values are normalized with respect to the average mainlobe level. The number of random symbols is set to $N = 64$. For the OTFS waveform, we adopt $N_1 = 4, N_2 = 16$, and for the AFDM waveform, we use $c_1 = \frac{1}{32}, c_2 = \frac{1}{16}$. As predicted by the theoretical results, the ESL of all DP-AFs are binary valued: The low sidelobes correspond exactly to those delay-Doppler matrices that are diagonalizable by the respective modulation basis. Moreover, each DP-AF contains exactly $63$ such low sidelobes. These sidelobes are either aligned along a line, or, if not collinear, any set of three sidelobes encloses a triangular region with area at least $32$.

\textbf{Remark 4}. It is worth noting that the above discussion on DP-AFs primarily concerns waveforms modulated with sub-Gaussian constellations, which are more commonly used in modern wireless systems. For completeness, we mention that the normalized EISL remains invariant under super-Gaussian constellations as well. However, in contrast to the sub-Gaussian case, the delay-Doppler matrices that are diagonalizable by the modulation basis now yield the {\textit{largest}} ESL. This follows directly from the bounds in \eqref{ESL_LB}: when $\kappa > 2$, the lower and upper bounds reverse. As a consequence, one still cannot construct a compact low-ambiguity region for waveforms modulated carrying super-Gaussian constellation symbols.

\section{Statistical Characterization of the FST-AF for Random Communication Signals}\label{sec_4}
In this section, we characterize the EISL and ESL of the FST-AF for random communication signals, and then establish the optimality of OFDM and OTFS under sub- and super-Gaussian constellations, respectively.
\subsection{Main Results}
Let us first express ${\rm EISL}_{\rm{FST}}$ and the average squared FST-AF in closed forms.

\begin{proposition}\label{prop_EISL_FST_AF}
    The EISL for FST-AF is given by
    \begin{align}\label{the EISL of stop-and-go model}
    {\rm EISL}_{\rm{FST}} &\nonumber=
     M^2N^2-MN \\
     &+ \left(\kappa - 2\right)MN\left(\left\|\mathbf{V}^T\otimes\mathbf{F}_N\mathbf{U}\right\|_4^4 - 1\right).      
    \end{align}
%    \begin{equation}
%    \mathbb{E}\left(\left|a_{1,1}\right|^2\right) = M^2N^2 + \left(\kappa - 1\right)MN.
%\end{equation}
\end{proposition}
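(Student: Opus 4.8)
The plan is to evaluate the two terms in the definition \eqref{EISL_FST_definition}, namely ${\rm EISL}_{\rm{FST}} = \mathbb{E}(\|\mathbf{A}_{\rm{FST}}\|_F^2) - \mathbb{E}(|a_{1,1}|^2)$, separately. The mainlobe term $\mathbb{E}(|a_{1,1}|^2)$ is already supplied by \eqref{FST_mainlobe}, so the entire task reduces to computing the expected Frobenius volume, which by \eqref{FST_AF_volume} equals $MN\,\mathbb{E}(\|\mathbf{F}_N\mathbf{U}\mathbf{S}\mathbf{V}\|_4^4)$. Writing $\mathbf{P} := \mathbf{F}_N\mathbf{U}$ and $\mathbf{Q} := \mathbf{V}$, both unitary, I would set $\mathbf{W} := \mathbf{P}\mathbf{S}\mathbf{Q}$ and reduce everything to the entrywise fourth moment $\mathbb{E}(|w_{ij}|^4)$ summed over all $(i,j)$.

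The core step is a fourth-moment expansion of a single entry $w_{ij} = \sum_{a,b} p_{ia}q_{bj}s_{ab}$, viewed as a linear combination $\sum_{a,b} c_{ab}^{(ij)} s_{ab}$ of the i.i.d. symbols with coefficients $c_{ab}^{(ij)} = p_{ia}q_{bj}$. Expanding $|w_{ij}|^4$ produces a quadruple sum of monomials of the form $s\,s^\ast s\,s^\ast$, and I would classify the index configurations according to which survive under Assumption~1. The zero-mean condition kills every configuration containing an unmatched symbol; the zero pseudo-variance condition $\mathbb{E}(s^2)=0$ is the decisive ingredient, since it annihilates the ``conjugate-pairing'' term in which the two un-starred indices coincide with the two starred indices. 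What remains are exactly three patterns: the fully coincident index, contributing the kurtosis $\kappa$; and the two distinct matched pairings, each contributing unity. Collecting these gives the compact identity $\mathbb{E}(|w_{ij}|^4) = 2 S_1^2 + (\kappa-2) S_2$, where $S_1 = \sum_{a,b}|c_{ab}^{(ij)}|^2$ and $S_2 = \sum_{a,b}|c_{ab}^{(ij)}|^4$ factorize as $S_1 = (\sum_a|p_{ia}|^2)(\sum_b|q_{bj}|^2)$ and $S_2 = (\sum_a|p_{ia}|^4)(\sum_b|q_{bj}|^4)$ thanks to the product structure of the coefficients.

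Next I would invoke unitarity of $\mathbf{P}$ and $\mathbf{Q}$: each row of $\mathbf{P}$ and each column of $\mathbf{Q}$ has unit $\ell_2$-norm, so $S_1 = 1$ for every $(i,j)$. Summing $\mathbb{E}(|w_{ij}|^4)$ over the $N\times M$ grid then yields $\mathbb{E}(\|\mathbf{W}\|_4^4) = 2MN + (\kappa-2)\,\|\mathbf{P}\|_4^4\,\|\mathbf{Q}\|_4^4$, after factoring the double sum of $S_2$ into $(\sum_{i,a}|p_{ia}|^4)(\sum_{j,b}|q_{bj}|^4)$. Finally I would use the multiplicativity of the $\ell_4$-norm under Kronecker products, $\|\mathbf{A}\otimes\mathbf{B}\|_4^4 = \|\mathbf{A}\|_4^4\,\|\mathbf{B}\|_4^4$, together with the transpose-invariance $\|\mathbf{V}^T\|_4^4 = \|\mathbf{V}\|_4^4$, to rewrite $\|\mathbf{P}\|_4^4\,\|\mathbf{Q}\|_4^4 = \|\mathbf{V}^T\otimes\mathbf{F}_N\mathbf{U}\|_4^4$. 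Substituting back into $MN\,\mathbb{E}(\|\mathbf{W}\|_4^4)$, subtracting $\mathbb{E}(|a_{1,1}|^2) = M^2N^2 + (\kappa-1)MN$ from \eqref{FST_mainlobe}, and collecting the terms proportional to $MN$ (noting that $-MN-(\kappa-2)MN = -(\kappa-1)MN$) reproduces exactly the claimed expression. The main obstacle is the bookkeeping in the fourth-moment expansion: one must enumerate the index partitions carefully and make transparent why the pseudo-variance condition removes the otherwise-present cross term, since this is precisely what turns the coefficient of $S_2$ into $\kappa-2$ rather than $\kappa$ and hence controls the sign that later distinguishes the sub- and super-Gaussian regimes.
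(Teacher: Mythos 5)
Your proposal is correct and follows essentially the same route as the paper: both reduce the volume term $\mathbb{E}(\|\mathbf{A}_{\rm FST}\|_F^2)$ to the fourth moment of a unitary linear combination of the i.i.d.\ symbols and arrive at the identity $\mathbb{E}(|\sum_k c_k s_k|^4) = 2\|\mathbf{c}\|_2^4 + (\kappa-2)\|\mathbf{c}\|_4^4$, which with unit $\ell_2$-norm coefficients gives $2 + (\kappa-2)\|\mathbf{c}\|_4^4$ per entry and hence $2M^2N^2 + (\kappa-2)MN\|\mathbf{V}^T\otimes\mathbf{F}_N\mathbf{U}\|_4^4$ after summation. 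The only difference is presentational: you re-derive this identity by enumerating the surviving index pairings (correctly noting that zero pseudo-variance kills the conjugate pairing, leaving the coefficient $\kappa-2$) and factor the Kronecker $\ell_4$-norm at the end, whereas the paper vectorizes the signal and invokes the precomputed decomposition $\tilde{\mathbf{S}} = \mathbf{I} + \tilde{\mathbf{S}}_1 + \tilde{\mathbf{S}}_2$ from its prior work to evaluate the same quadratic form.
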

\begin{proof}
    See Appendix \ref{proof of the EISL of FST-AF}.
\end{proof}

\begin{proposition}\label{prop_ESL_FST_AF}
    The average squared FST-AF is
    \begin{align}\label{the average sidelobe of stop-and-go model}
    &\nonumber \mathbb{E}\left( \left| \mathcal{A}_{\rm{FST}} (k,q) \right|^2 \right) 
    =  MN + M^2N^2\delta_{k,0}\delta_{q,0} \\ & + \left(\kappa - 2\right)MN\left\| \left| \mathbf{V} \right|^2 \mathbf{f}_{M,q+1}^\ast \right\|^2 \left\| \left|\mathbf{U}^H \mathbf{F}_N^H \right|^2 \mathbf{f}_{N,k+1} \right\|^2,
    \end{align}
\end{proposition}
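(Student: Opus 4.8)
The plan is to compute $\mathbb{E}(|\mathcal{A}_{\rm{FST}}(k,q)|^2)$ directly from the simplified scalar form \eqref{FSTAF_simplified}, namely $\mathcal{A}_{\rm{FST}}(k,q) = \sqrt{MN}\,\mathbf{f}_{N,k+1}^H |\mathbf{X}_{\rm FT}|^2 \mathbf{f}_{M,q+1}$, where $\mathbf{X}_{\rm FT} = \mathbf{F}_N\mathbf{U}\mathbf{S}\mathbf{V}$. The quantity to evaluate is therefore $MN\,\mathbb{E}\bigl(|\mathbf{f}_{N,k+1}^H |\mathbf{X}_{\rm FT}|^2 \mathbf{f}_{M,q+1}|^2\bigr)$. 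Writing this as a fourth-order form in the entries of $\mathbf{X}_{\rm FT}$, and expanding $|\mathbf{X}_{\rm FT}|^2 = \mathbf{X}_{\rm FT}\odot\mathbf{X}_{\rm FT}^\ast$ in terms of the i.i.d.\ symbol matrix $\mathbf{S}$, I would reduce the expectation to a sum over moments of the entries of $\mathbf{S}$. Because the entries of $\mathbf{S}$ are i.i.d.\ with zero mean, unit power, zero pseudo-variance (Assumption 1), and fourth moment $\kappa$, only index patterns that leave every symbol matched in conjugate pairs survive the expectation, and the distinct contributions are governed by whether all four factors collapse onto a single symbol (giving the $\kappa$ term) or split into two independent conjugate pairs (giving the Gaussian-like $\kappa=2$ baseline).

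The key computational steps, in order, are as follows. First, I would substitute $\mathbf{X}_{\rm FT}=\mathbf{F}_N\mathbf{U}\mathbf{S}\mathbf{V}$ and express each entry of $|\mathbf{X}_{\rm FT}|^2$ through the mixing matrices $\mathbf{W}:=\mathbf{F}_N\mathbf{U}$ and $\mathbf{V}$, so that the fourth-order moment factorizes into a ``fast-time'' tensor built from $\mathbf{W}$ (equivalently from $|\mathbf{U}^H\mathbf{F}_N^H|^2$) and a ``slow-time'' tensor built from $\mathbf{V}$ (equivalently from $|\mathbf{V}|^2$). Second, I would organize the moment sum into two regimes exactly as in the proof of Proposition~\ref{prop_ESL_DP_AF}: the fully-paired term, which by unitarity of $\mathbf{W}$ and $\mathbf{V}$ and by the DFT phase sums yields the flat baseline $MN$ plus the mainlobe spike $M^2N^2\delta_{k,0}\delta_{q,0}$, and the single-symbol coincidence term, which carries the factor $(\kappa-2)$. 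Third, I would identify the surviving coincidence term: the fast-time phase vectors $\mathbf{f}_{N,k+1}$ contract against $|\mathbf{U}^H\mathbf{F}_N^H|^2$ to produce $\|\,|\mathbf{U}^H\mathbf{F}_N^H|^2\mathbf{f}_{N,k+1}\|^2$, while the slow-time phase vectors $\mathbf{f}_{M,q+1}$ contract against $|\mathbf{V}|^2$ to produce $\|\,|\mathbf{V}|^2\mathbf{f}_{M,q+1}^\ast\|^2$, and the $\sqrt{MN}$ normalizations combine with the single surviving sum to give the prefactor $(\kappa-2)MN$. Collecting the three pieces yields \eqref{the average sidelobe of stop-and-go model}.

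The main obstacle I anticipate is the careful bookkeeping of the index coincidences in the fourth-order expectation, in particular correctly separating the genuine single-symbol coincidence (the $\kappa$-contribution) from the two-pair contributions without double counting, and then verifying that the two-pair contributions reassemble into precisely the $MN + M^2N^2\delta_{k,0}\delta_{q,0}$ baseline via the orthogonality relations $\mathbf{W}^H\mathbf{W}=\mathbf{I}_N$, $\mathbf{V}\mathbf{V}^H=\mathbf{I}_M$ and the geometric DFT sums $\sum_n e^{j2\pi k n/N}=N\delta_{\langle k\rangle_N,0}$. A secondary subtlety is confirming that the Kronecker structure cleanly decouples the fast- and slow-time contractions, so that the coincidence term factorizes into the product of the two separate norms rather than a coupled double sum; this decoupling is what ultimately produces the clean separable form in \eqref{the average sidelobe of stop-and-go model}. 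Once the decoupling and the pairing enumeration are handled, the remaining algebra is routine and parallels Proposition~\ref{prop_ESL_DP_AF}, so I would cross-check the final expression against the mainlobe value \eqref{FST_mainlobe} by setting $(k,q)=(0,0)$ and against the EISL identity \eqref{the EISL of stop-and-go model} by summing over all $(k,q)$.
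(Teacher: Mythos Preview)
Your proposal is correct and follows essentially the same route as the paper: compute the fourth-order moment of the frequency--slow-time signal using the i.i.d.\ symbol statistics, split it into the two-pair baseline (yielding $MN + M^2N^2\delta_{k,0}\delta_{q,0}$) and the single-symbol coincidence (yielding the $(\kappa-2)$ term), and then exploit the Kronecker separability of $\mathbf{V}^T\otimes\mathbf{F}_N\mathbf{U}$ to factor the coincidence term into the product of the two norms. The paper organizes this computation slightly differently---it vectorizes first, writes $\operatorname{vec}(\mathbf{X}_{\rm FT})=\mathbf{Q}^H\mathbf{s}$, and reuses the pre-established decomposition $\tilde{\mathbf{S}}=\mathbf{I}+\tilde{\mathbf{S}}_1+\tilde{\mathbf{S}}_2$ from Appendix~\ref{proof of squared AF} to obtain the compact intermediate expression $MN+|\mathbf{w}^H\mathbf{1}|^2+(\kappa-2)\|\,|\mathbf{Q}|^2\mathbf{w}\|^2$ before applying the Kronecker factorization at the end---but the underlying moment identification and the final separability step are identical to what you describe.
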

where $k \in \mathbb{Z}_N$ and $q \in \mathbb{Z}_M$.
\begin{proof}
    See Appendix \ref{proof ot the average sidelobe of RDM}.
\end{proof}

\begin{comment}
\begin{corollary}The average mainlobe level of the FST-AF is
\begin{equation}
    \mathbb{E}\left(\left|a_{1,1}\right|^2\right) = M^2N^2 + \left(\kappa - 1\right)MN.
\end{equation}
\begin{proof}
Taking $n = 1$ and $m = 1$ into \eqref{the average sidelobe of AFDM ''stop-and-go''}, we can get
    \begin{equation}
    \begin{aligned}
    \mathbb{E}\left(\left|a_{1,1}\right|^2\right)
    & =  MN + M^2N^2 \\ & + \left(\kappa - 2\right)MN\left\| \frac{1}{\sqrt{MN}} \left|
  \mathbf{V}^\ast\otimes\mathbf{U}^H\mathbf{F}_N^H\right|^2 \mathbf{1}_{MN}\right\|^2\\
  %&= MN + M^2N^2 + \left(\kappa - 2\right)MN \\
  &= M^2N^2 + \left(\kappa - 1\right)MN.
    \end{aligned}
    \end{equation}
    Completing the proof.
\end{proof}
\end{corollary}    
\end{comment}

\subsection{Case Study}\label{Case Study of stop-and-go}

We now proceed to analyze the squared FST-AF of four representative communication waveforms, namely, OFDM, SC, OTFS, and AFDM.
\subsubsection{Average Squared FST-AF for OFDM}
\begin{corollary}
    The average squared FST-AF of OFDM and the corresponding EISL are respectively given by
    \begin{align}
        &\mathbb{E}\left( \left| \mathcal{A}_{\rm{FST}}^{\rm{OFDM}} (k,q) \right|^2 \right) =   M^2N^2\delta_{k,0}\delta_{q,0} + \left(\kappa - 1\right)MN,\label{ESL_FST_OFDM}\\
        &{\rm{EISL}}_{\rm{FST}}^{\rm{OFDM}} = MN(MN-1)(\kappa - 1)\label{EISL_FST_OFDM}.
    \end{align}
\end{corollary}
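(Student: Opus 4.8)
The plan is to obtain both identities by directly specializing the general expressions of Proposition~\ref{prop_ESL_FST_AF} and Proposition~\ref{prop_EISL_FST_AF} to the OFDM case, which in the 2D model corresponds to $\mathbf{U}=\mathbf{F}_N^H$ and $\mathbf{V}=\mathbf{I}_M$ (see \eqref{2D signal of four typical signals}). The entire argument reduces to evaluating the two norm factors that carry the waveform dependence in \eqref{the average sidelobe of stop-and-go model}, so no new machinery is needed beyond the unitarity of the DFT matrices.

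For the ESL, first I would simplify the slow-time factor. Since $\mathbf{V}=\mathbf{I}_M$, its entry-wise square satisfies $|\mathbf{V}|^2=\mathbf{I}_M$, hence $\||\mathbf{V}|^2\mathbf{f}_{M,q+1}^\ast\|^2=\|\mathbf{f}_{M,q+1}^\ast\|^2=1$ because every column of the normalized DFT matrix $\mathbf{F}_M$ is a unit vector. Next I would simplify the fast-time factor: with $\mathbf{U}=\mathbf{F}_N^H$ we have $\mathbf{U}^H\mathbf{F}_N^H=\mathbf{F}_N\mathbf{F}_N^H=\mathbf{I}_N$, so $|\mathbf{U}^H\mathbf{F}_N^H|^2=\mathbf{I}_N$ and likewise $\||\mathbf{U}^H\mathbf{F}_N^H|^2\mathbf{f}_{N,k+1}\|^2=\|\mathbf{f}_{N,k+1}\|^2=1$. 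The coupling term in \eqref{the average sidelobe of stop-and-go model} therefore collapses to the $(k,q)$-independent constant $(\kappa-2)MN$; adding the leading $MN$ gives $(\kappa-1)MN$ away from the origin, and reinstating the $M^2N^2\delta_{k,0}\delta_{q,0}$ peak yields exactly \eqref{ESL_FST_OFDM}.

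For the EISL I would invoke Proposition~\ref{prop_EISL_FST_AF}, where the only waveform-dependent quantity is $\|\mathbf{V}^T\otimes\mathbf{F}_N\mathbf{U}\|_4^4$. Using $\mathbf{F}_N\mathbf{U}=\mathbf{F}_N\mathbf{F}_N^H=\mathbf{I}_N$ and $\mathbf{V}^T=\mathbf{I}_M$, this Kronecker product equals $\mathbf{I}_{MN}$, whose $MN$ nonzero entries are all unit-modulus, so $\|\mathbf{I}_{MN}\|_4^4=MN$. Substituting into \eqref{the EISL of stop-and-go model} gives $M^2N^2-MN+(\kappa-2)MN(MN-1)$, and since $M^2N^2-MN=MN(MN-1)$, factoring out $MN(MN-1)$ leaves $(\kappa-1)MN(MN-1)$, which is \eqref{EISL_FST_OFDM}. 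As an independent check I would instead sum \eqref{ESL_FST_OFDM} over the $MN$ delay-Doppler bins, obtaining $\kappa M^2N^2$, and subtract the mainlobe $M^2N^2+(\kappa-1)MN$ from \eqref{FST_mainlobe}; the two routes agree. There is no real obstacle here: the computation is a pure substitution, and the only points requiring care are the cancellation $\mathbf{F}_N\mathbf{F}_N^H=\mathbf{I}_N$ induced by unitarity and the correct evaluation of the $\ell_4$ norm of the identity matrix, together with keeping the index ranges $k\in\mathbb{Z}_N$ and $q\in\mathbb{Z}_M$ consistent when summing.
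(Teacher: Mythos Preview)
Your proposal is correct and follows essentially the same approach as the paper: both substitute $\mathbf{U}=\mathbf{F}_N^H$, $\mathbf{V}=\mathbf{I}_M$ into Propositions~\ref{prop_EISL_FST_AF} and~\ref{prop_ESL_FST_AF}, use $\mathbf{F}_N\mathbf{F}_N^H=\mathbf{I}_N$ to reduce the waveform-dependent terms to $\|\mathbf{I}_{MN}\|_4^4=MN$ and $\|\mathbf{f}_{M,q+1}^\ast\|^2\|\mathbf{f}_{N,k+1}\|^2=1$, and then simplify algebraically. Your additional cross-check via summing \eqref{ESL_FST_OFDM} and subtracting the mainlobe \eqref{FST_mainlobe} is a nice consistency verification that the paper does not include.
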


\begin{proof}
For OFDM, we have $\mathbf{U} = \mathbf{F}_N^H$ and $ \mathbf{V} = \mathbf{I}_M$, taking which into \eqref{the EISL of stop-and-go model} directly leads to
\begin{align}
    &\nonumber\rm{EISL}_{\rm{FST}}^{\rm{OFDM}}\\
    &\nonumber  = M^2N^2-MN + \left(\kappa - 2\right)MN\left(\left\|\mathbf{I}_M\otimes\mathbf{F}_N\mathbf{F}_N^H\right\|_4^4 - 1\right)\\ 
    &\nonumber = M^2N^2-MN + \left(\kappa - 2\right)MN\left(MN - 1\right)\\
    & = MN(MN-1)(\kappa - 1).
\end{align}
Moreover, we have
\begin{align}
    &\nonumber \left\| \left| \mathbf{V} \right|^2 \mathbf{f}_{M,q+1}^\ast \right\|^2 \left\| \left|\mathbf{U}^H \mathbf{F}_N^H \right|^2 \mathbf{f}_{N,k+1} \right\|^2 \\
    &= \left\| \mathbf{f}_{M,q+1}^\ast\right\|^2  \left\|\mathbf{f}_{N,k+1}\right\|^2 = 1,
\end{align}
which results in \eqref{ESL_FST_OFDM}.
\end{proof}
In contrast to its DP-AF counterpart, the average squared FST-AF of OFDM exhibits a constant ESL across the entire delay-Doppler domain, i.e., $(\kappa-1)MN$. We will later show that, for sub-Gaussian constellations, OFDM attains the minimum achievable ESL of the FST-AF among all orthogonal waveforms.

\subsubsection{Average Squared FST-AF for SC}

\begin{corollary}
   The average squared FST-AF of SC and the corresponding EISL are respectively given by
    \begin{align}
        &\nonumber\mathbb{E}\left( \left| \mathcal{A}_{\rm{FST}}^{\rm{SC}} (k,q) \right|^2 \right) \\
        &=   MN + 
         M^2N^2\delta_{k,0}\delta_{q,0} + \left(\kappa - 2\right)MN\delta_{k,0},\label{ESL_FST_SC}\\
        &{\rm{EISL}}_{\rm{FST}}^{\rm{SC}} = M^2N^2-MN + \left(\kappa - 2\right)MN\left(M - 1\right).
    \end{align}
\end{corollary}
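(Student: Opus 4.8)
The plan is to obtain both formulas as direct specializations of Proposition~\ref{prop_ESL_FST_AF} and Proposition~\ref{prop_EISL_FST_AF} to the SC basis, for which $\mathbf{U}=\mathbf{I}_N$ and $\mathbf{V}=\mathbf{I}_M$. The only nontrivial arithmetic is the evaluation of the two norm factors appearing in the general ESL expression \eqref{the average sidelobe of stop-and-go model}, so I would handle those two factors separately before assembling the result.

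First I would treat the slow-time factor. Since $|\mathbf{V}|^2=\mathbf{I}_M$, we have $\||\mathbf{V}|^2\mathbf{f}_{M,q+1}^\ast\|^2=\|\mathbf{f}_{M,q+1}^\ast\|^2=1$, because the columns of the normalized DFT matrix are unit-norm. The fast-time factor is where the SC structure actually enters. With $\mathbf{U}=\mathbf{I}_N$ we have $\mathbf{U}^H\mathbf{F}_N^H=\mathbf{F}_N^H$, and since every entry of $\mathbf{F}_N^H$ has modulus $1/\sqrt N$, its entrywise square is the magnitude-flat matrix $|\mathbf{F}_N^H|^2=\tfrac{1}{N}\mathbf{1}_{N,N}$. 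Multiplying by $\mathbf{f}_{N,k+1}$ then reduces to summing the entries of that DFT column, which equals $\sqrt N\,\delta_{k,0}$ by orthogonality of the DFT (the DC-picking property). Hence $|\mathbf{U}^H\mathbf{F}_N^H|^2\mathbf{f}_{N,k+1}=\tfrac{1}{\sqrt N}\delta_{k,0}\mathbf{1}_N$, whose squared norm is $\delta_{k,0}$. Substituting these two factors into \eqref{the average sidelobe of stop-and-go model} collapses the third term to $(\kappa-2)MN\delta_{k,0}$, giving \eqref{ESL_FST_SC}.

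For the EISL I would specialize \eqref{the EISL of stop-and-go model}, where the only quantity to evaluate is $\|\mathbf{V}^T\otimes\mathbf{F}_N\mathbf{U}\|_4^4=\|\mathbf{I}_M\otimes\mathbf{F}_N\|_4^4$. Using the multiplicativity of the entrywise $\ell_4$ norm under Kronecker products, $\|\mathbf{I}_M\otimes\mathbf{F}_N\|_4^4=\|\mathbf{I}_M\|_4^4\,\|\mathbf{F}_N\|_4^4=M\cdot 1=M$, where $\|\mathbf{F}_N\|_4^4=N^2\cdot N^{-2}=1$ again by the flatness of the DFT. Plugging $\|\mathbf{I}_M\otimes\mathbf{F}_N\|_4^4-1=M-1$ into \eqref{the EISL of stop-and-go model} yields the claimed EISL. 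As a consistency check, I would recover the same value by summing \eqref{ESL_FST_SC} directly over $(k,q)\in\mathbb{Z}_N\times\mathbb{Z}_M$ and subtracting the mainlobe $\mathbb{E}(|a_{1,1}|^2)=M^2N^2+(\kappa-1)MN$ from \eqref{FST_mainlobe}.

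There is no serious obstacle here, as this is a corollary of two already-established propositions and every step is elementary. The one point that deserves care is the fast-time factor: one must recognize that the magnitude-flat structure of $\mathbf{F}_N^H$ turns the matrix-vector product into the DC component of a DFT column, so the third term survives only on the zero-delay cut $k=0$. This is exactly what produces the $\delta_{k,0}$ dependence that distinguishes SC (delay-flat suppression concentrated on $k=0$) from OFDM, whose analogous factor is identically $1$.
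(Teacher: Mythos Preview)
Your proposal is correct and follows essentially the same approach as the paper: both specialize Propositions~\ref{prop_ESL_FST_AF} and~\ref{prop_EISL_FST_AF} to $\mathbf{U}=\mathbf{I}_N$, $\mathbf{V}=\mathbf{I}_M$, compute $\||\mathbf{V}|^2\mathbf{f}_{M,q+1}^\ast\|^2=1$ and $\||\mathbf{F}_N^H|^2\mathbf{f}_{N,k+1}\|^2=\|\tfrac{1}{N}\mathbf{1}_N\mathbf{1}_N^T\mathbf{f}_{N,k+1}\|^2=\delta_{k,0}$ for the ESL, and evaluate $\|\mathbf{I}_M\otimes\mathbf{F}_N\|_4^4=M$ for the EISL. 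Your added consistency check and the explicit use of Kronecker multiplicativity for the $\ell_4$ norm are minor elaborations, not a different route.
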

\begin{proof}
    In such a case, we have $\mathbf{U} = \mathbf{I}_N$ and $\mathbf{V} = \mathbf{I}_M$. Taking $\mathbf{U}$ and $\mathbf{V}$ into \eqref{the EISL of stop-and-go model}, the EISL of SC is
\begin{align}
    &\nonumber \rm{EISL}_{\rm{FST}}^{\rm{SC}} \\
    &=\nonumber M^2N^2-MN + \left(\kappa - 2\right)MN\left(\left\|\mathbf{I}_M\otimes\mathbf{F}_N\mathbf{I}_N\right\|_4^4 - 1\right) \\
    & = M^2N^2-MN + \left(\kappa - 2\right)MN\left(M - 1\right).
\end{align}
Taking $\mathbf{U}$ and $\mathbf{V}$ into \eqref{the average sidelobe of stop-and-go model}, we have
\begin{align}
    &\nonumber \left\| \left| \mathbf{V} \right|^2 \mathbf{f}_{M,q+1}^\ast \right\|^2 \left\| \left|\mathbf{U}^H \mathbf{F}_N^H \right|^2 \mathbf{f}_{N,k+1} \right\|^2 \\
    &= \left\| \mathbf{f}_{M,q+1}^\ast\right\|^2  \left\| \frac{1}{N} \mathbf{1}_N\mathbf{1}_N^T\mathbf{f}_{N,k+1}\right\|^2 = \delta_{k,0},
\end{align}
which leads to \eqref{ESL_FST_SC}.
\end{proof}
The average squared zero-Doppler slice of SC is
    \begin{equation}\label{zero_Doppler_SC_FST}
    \mathbb{E}\left( \left| \mathcal{A}_{\rm{FST}}^{\rm{SC}} (k,0) \right|^2 \right) = MN.
    \end{equation}
The average squared zero-delay slice of SC is
    \begin{equation}\label{zero_delay_SC_FST}
    \mathbb{E}\left( \left| \mathcal{A}_{\rm{FST}}^{\rm{SC}} (0,q) \right|^2 \right) = (\kappa - 1)MN.
    \end{equation}
The ESL of remaining sidelobes of SC can be expressed as
    \begin{equation}\label{remaining_sidelobe_SC_FST}
    \mathbb{E}\left( \left| \mathcal{A}_{\rm{FST}}^{\rm{SC}} (k,q) \right|^2 \right) = MN,\quad (k,q)\ne(0,0).
    \end{equation}

By comparing \eqref{ESL_FST_SC} with its DP-AF counterpart in \eqref{SC_DP_AF}, it can be observed that the two AFs for the SC have exactly the same shape in an average sense. In particular, they both exhibit lower ESL in the zero-delay slice, whereas the ESL values of all other delay-Doppler bins remain the same.

\subsubsection{Average Squared FST-AF for OTFS}

\begin{corollary}
    The average squared FST-AF of OTFS and the corresponding EISL are respectively given by
    \begin{align}
        &\mathbb{E}\left( \left| \mathcal{A}_{\rm{FST}}^{\rm{OTFS}} (k,q) \right|^2 \right) = MN + MN(MN+\kappa-2)\delta_{k,0}\delta_{q,0},\label{FST_AF_OTFS}\\
        &{\rm{EISL}}_{\rm{FST}}^{\rm{OFTS}} = MN(MN-1).
    \end{align}
\end{corollary}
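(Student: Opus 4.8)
The plan is to specialize Propositions~\ref{prop_ESL_FST_AF} and~\ref{prop_EISL_FST_AF} to the OTFS basis, for which the 2D model in \eqref{2D signal of four typical signals} gives $\mathbf{U} = \mathbf{I}_N$ and $\mathbf{V} = \mathbf{F}_M^H$. Everything then reduces to evaluating the two norm factors appearing in \eqref{the average sidelobe of stop-and-go model} and the quartic norm in \eqref{the EISL of stop-and-go model} at this choice. The single structural fact that drives the computation is the constant-modulus property of the DFT matrices: both $\left|\mathbf{U}^H\mathbf{F}_N^H\right|^2 = \left|\mathbf{F}_N^H\right|^2 = \tfrac{1}{N}\mathbf{1}_{N,N}$ and $\left|\mathbf{V}\right|^2 = \tfrac{1}{M}\mathbf{1}_{M,M}$ are rank-one all-ones matrices.

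First I would handle the fast-time factor by writing $\left|\mathbf{U}^H\mathbf{F}_N^H\right|^2\mathbf{f}_{N,k+1} = \tfrac{1}{N}\mathbf{1}_N\bigl(\mathbf{1}_N^T\mathbf{f}_{N,k+1}\bigr)$, where the scalar $\mathbf{1}_N^T\mathbf{f}_{N,k+1} = \tfrac{1}{\sqrt{N}}\sum_{m=0}^{N-1}e^{-j2\pi mk/N}$ is a geometric sum equal to $\sqrt{N}\,\delta_{k,0}$. This yields $\left\|\left|\mathbf{U}^H\mathbf{F}_N^H\right|^2\mathbf{f}_{N,k+1}\right\|^2 = \delta_{k,0}$, and the identical argument along the slow-time axis gives $\left\|\left|\mathbf{V}\right|^2\mathbf{f}_{M,q+1}^\ast\right\|^2 = \delta_{q,0}$. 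Substituting the product $\delta_{k,0}\delta_{q,0}$ into \eqref{the average sidelobe of stop-and-go model} and merging the two Kronecker-delta terms then gives $\mathbb{E}(|\mathcal{A}_{\rm{FST}}^{\rm{OTFS}}(k,q)|^2) = MN + MN(MN+\kappa-2)\delta_{k,0}\delta_{q,0}$, which is \eqref{FST_AF_OTFS}.

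For the EISL I would evaluate $\left\|\mathbf{V}^T\otimes\mathbf{F}_N\mathbf{U}\right\|_4^4 = \left\|\mathbf{F}_M^\ast\otimes\mathbf{F}_N\right\|_4^4$. Since the Kronecker product of two constant-modulus matrices is again constant-modulus, each of its $M^2N^2$ entries has magnitude $(MN)^{-1/2}$, so the quartic norm equals $M^2N^2\cdot(MN)^{-2} = 1$. Consequently the bracket $\left\|\mathbf{V}^T\otimes\mathbf{F}_N\mathbf{U}\right\|_4^4 - 1$ in \eqref{the EISL of stop-and-go model} vanishes, the kurtosis-dependent term disappears, and ${\rm{EISL}}_{\rm{FST}}^{\rm{OTFS}} = M^2N^2 - MN = MN(MN-1)$.

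There is no real obstacle here: the corollary is a direct substitution. The only care needed is the bookkeeping of the two geometric sums and the recognition that the flatness of both DFT factors is exactly what annihilates the $(\kappa-2)$ correction. I would emphasize that the resulting EISL is independent of $\kappa$ and, since a quartic norm of $1$ is the smallest value attainable by any unitary transform, this is precisely the feature that will make OTFS optimal for super-Gaussian constellations, complementing the OFDM result in \eqref{EISL_FST_OFDM}.
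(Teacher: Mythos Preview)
Your proposal is correct and follows essentially the same approach as the paper: specialize Propositions~\ref{prop_EISL_FST_AF} and~\ref{prop_ESL_FST_AF} to $\mathbf{U}=\mathbf{I}_N$, $\mathbf{V}=\mathbf{F}_M^H$, use the constant-modulus property of the DFT so that $\left|\mathbf{V}\right|^2=\tfrac{1}{M}\mathbf{1}_M\mathbf{1}_M^T$ and $\left|\mathbf{U}^H\mathbf{F}_N^H\right|^2=\tfrac{1}{N}\mathbf{1}_N\mathbf{1}_N^T$, and observe that the resulting inner products collapse to $\delta_{k,0}\delta_{q,0}$ while $\left\|\mathbf{F}_M^\ast\otimes\mathbf{F}_N\right\|_4^4=1$. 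Your write-up is in fact slightly more explicit than the paper's on the geometric-sum and quartic-norm bookkeeping, and the closing remark about super-Gaussian optimality correctly anticipates Theorems~2 and~3.
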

\begin{proof}
    For OTFS modulation we have $\mathbf{U} = \mathbf{I}_N$ and $\mathbf{V} = \mathbf{F}_M^H$, substituting which into \eqref{the EISL of stop-and-go model} yields
\begin{align}
    &\nonumber {\rm{EILS}}_{\rm{FST}}^{\rm{OTFS}} \\
    &=\nonumber M^2N^2-MN + \left(\kappa - 2\right)MN\left(\left\|\mathbf{F}_M^\ast\otimes\mathbf{F}_N\mathbf{I}_N\right\|_4^4 - 1\right)\\
    & = MN(MN-1).    
\end{align}
Furthermore, it can be verified that
\begin{align}
    &\nonumber \left\| \left| \mathbf{V} \right|^2 \mathbf{f}_{M,q+1}^\ast \right\|^2 \left\| \left|\mathbf{U}^H \mathbf{F}_N^H \right|^2 \mathbf{f}_{N,k+1} \right\|^2 \\
    &= \left\| \frac{1}{M} \mathbf{1}_{M}\mathbf{1}_{M}^T \mathbf{f}_{M,q+1}^\ast\right\|^2  \left\| \frac{1}{N} \mathbf{1}_{N}\mathbf{1}_{N}^T \mathbf{f}_{N,k+1}\right\|^2  = \delta_{k,0}\delta_{q,0}.
\end{align}
Plugging the above equation into \eqref{the average sidelobe of stop-and-go model} leads to \eqref{FST_AF_OTFS}.
\end{proof}
An interesting observation is that, for OTFS, the ESL is a constant $MN$ regardless of the employed constellation, since the kurtosis of the constellation contributes solely to the mainlobe.

\subsubsection{Average Squared FST-AF for AFDM}
\begin{proposition}\label{FST_AF_AFDM_prop}
    The average squared FST-AF of AFDM and the corresponding EISL are respectively given by
    \begin{align}
        &\nonumber \mathbb{E}\left( \left| \mathcal{A}_{\rm{FST}}^{\rm{AFDM}} (k,q) \right|^2 \right)  = \\
        &\quad M^2N^2\delta_{k,0}\delta_{q,0}+ MN+ (\kappa -2)MN \cdot \delta_{  \langle 2Nc_1 k \rangle_{N}, 0},\label{FST_AF_AFDM}\\
        & {\rm{EISL}}_{\rm{FST}}^{\rm{AFDM}} = (M\phi-1)(\kappa-1)MN + (N-\phi)M^2N,\label{EISL_FST_AFDM}
    \end{align}
where $\phi = \gcd\left(2Nc_1, N\right)$ is the greatest common divisor between $ 2Nc_1k $ and $N$.
\end{proposition}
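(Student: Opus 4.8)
The plan is to specialize the generic FST-AF results of Proposition~\ref{prop_ESL_FST_AF} to the AFDM bases $\mathbf{U}=\mathbf{\Lambda}_{c_1}^{H}\mathbf{F}_N^H\mathbf{\Lambda}_{c_2}^H$ and $\mathbf{V}=\mathbf{I}_M$, reducing the entire computation to a single chirp (Gauss) sum. First I would substitute these bases into \eqref{the average sidelobe of stop-and-go model}. The slow-time factor is immediate: since $\mathbf{V}=\mathbf{I}_M$ gives $\left|\mathbf{V}\right|^2=\mathbf{I}_M$, we have $\left\|\left|\mathbf{V}\right|^2\mathbf{f}_{M,q+1}^\ast\right\|^2=\left\|\mathbf{f}_{M,q+1}\right\|^2=1$, so the constellation-dependent term carries no $q$-dependence. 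Everything then hinges on evaluating the fast-time factor $\left\|\left|\mathbf{U}^H\mathbf{F}_N^H\right|^2\mathbf{f}_{N,k+1}\right\|^2$.

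The key structural observation is that the left chirp in $\mathbf{U}^H\mathbf{F}_N^H=\mathbf{\Lambda}_{c_2}\mathbf{F}_N\mathbf{\Lambda}_{c_1}\mathbf{F}_N^H$ is diagonal with unit-modulus entries, hence leaves the entrywise magnitudes invariant: $\left|\mathbf{U}^H\mathbf{F}_N^H\right|=\left|\mathbf{F}_N\mathbf{\Lambda}_{c_1}\mathbf{F}_N^H\right|$. A direct entry computation shows that $\mathbf{M}:=\mathbf{F}_N\mathbf{\Lambda}_{c_1}\mathbf{F}_N^H$ has $(a,b)$-th entry depending only on $\langle a-b\rangle_N$, i.e. $\mathbf{M}$ is circulant with generator $g(d)=\tfrac1N\sum_{n} e^{-j2\pi c_1 n^2}e^{-j2\pi dn/N}$. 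Consequently $\left|\mathbf{M}\right|^2$ is also circulant and is therefore diagonalized by the DFT, with $\mathbf{f}_{N,k+1}$ an eigenvector. Thus $\left\|\left|\mathbf{U}^H\mathbf{F}_N^H\right|^2\mathbf{f}_{N,k+1}\right\|^2=|\lambda_k|^2$, where $\lambda_k$ is the $k$-th DFT coefficient of the sequence $\{|g(d)|^2\}_{d\in\mathbb{Z}_N}$.

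The technical heart, and the step I expect to be the main obstacle, is evaluating $\lambda_k=\sum_d|g(d)|^2 e^{j2\pi kd/N}$. Writing $|g(d)|^2$ as a double sum over indices $n,m$ and performing the $d$-summation collapses it to the constraint $m\equiv n-k\ (\mathrm{mod}\ N)$, leaving the single chirp sum $\tfrac1N\sum_n e^{-j2\pi c_1(n^2-m^2)}$ with $m=\langle n-k\rangle_N$. The delicate point is the modular wrap-around in $m^2$: here the AFDM constraint $2Nc_1\in\mathbb{Z}$ is essential, since it forces the wrap-around cross-terms $e^{j2\pi(2Nc_1)(n-k)}$ to equal unity, so that the summand reduces (up to a $k$-dependent unit-modulus phase) to $e^{-j2\pi(2c_1 k)n}$. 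The resulting geometric sum equals $N\,\delta_{\langle 2Nc_1 k\rangle_N,0}$, whence $|\lambda_k|^2=\delta_{\langle 2Nc_1 k\rangle_N,0}$. Substituting back into \eqref{the average sidelobe of stop-and-go model} yields \eqref{FST_AF_AFDM}. This mirrors the chirp-diagonalization already used for the DP-AF case in Proposition~\ref{afdm_sp_af_squared}.

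Finally, I would obtain \eqref{EISL_FST_AFDM} by summing \eqref{FST_AF_AFDM} over the $N\times M$ delay-Doppler grid and subtracting the mainlobe \eqref{FST_mainlobe}, rather than via the $\ell_4$-norm route of Proposition~\ref{prop_EISL_FST_AF}. The constant term $MN$ contributes $M^2N^2$, the Kronecker term contributes another $M^2N^2$ at $(0,0)$, and the chirp term contributes $(\kappa-2)MN\cdot M\cdot\#\{k\in\mathbb{Z}_N:\langle 2Nc_1 k\rangle_N=0\}$, where the $q$-summation merely produces the factor $M$. The only remaining ingredient is the elementary number-theoretic count $\#\{k\in\mathbb{Z}_N:\langle 2Nc_1 k\rangle_N=0\}=\gcd(2Nc_1,N)=\phi$, obtained by reducing $2Nc_1 k\equiv 0\ (\mathrm{mod}\ N)$ to $k\equiv 0\ (\mathrm{mod}\ N/\phi)$. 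Subtracting the mainlobe value $M^2N^2+(\kappa-1)MN$ and regrouping gives $M^2N\phi(\kappa-2)+M^2N^2-(\kappa-1)MN$, which rearranges to $(M\phi-1)(\kappa-1)MN+(N-\phi)M^2N$, as claimed.
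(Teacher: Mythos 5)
Your proposal is correct and follows essentially the same route as the paper's proof in Appendix~I: specialize \eqref{the average sidelobe of stop-and-go model} to the AFDM bases, exploit the circulant structure of $\mathbf{F}_N\mathbf{\Lambda}_{c_1}\mathbf{F}_N^H$ so that $\mathbf{f}_{N,k+1}$ is an eigenvector of its entrywise square, reduce the eigenvalue to the periodic chirp autocorrelation evaluated via $2Nc_1\in\mathbb{Z}$, and count the $\phi=\gcd(2Nc_1,N)$ solutions of $\langle 2Nc_1 k\rangle_N=0$. The only cosmetic differences are that you compute the eigenvalues as the DFT of $\{|g(d)|^2\}$ rather than naming $\tilde{\mathbf{z}}$ as the autocorrelation of $\mathbf{c}_1$, and you obtain the EISL by direct grid summation rather than the paper's sidelobe-counting bookkeeping; both yield identical arithmetic.
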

\begin{proof}
    See Appendix \ref{proof of AFDM of 'stop-and-go' model}.
\end{proof}
The average squared zero-Doppler slice of SC is
    \begin{equation}\label{zero_Doppler_AFDM_FST}
    \mathbb{E}\left( \left| \mathcal{A}_{\rm{FST}}^{\rm{AFDM}} (k,0) \right|^2 \right) = MN+ (\kappa -2)MN \cdot \delta_{  \langle 2Nc_1 k \rangle_{N}, 0}.
    \end{equation}
The average squared zero-delay slice of AFDM is
    \begin{equation}\label{zero_delay_AFDM_FST}
    \mathbb{E}\left( \left| \mathcal{A}_{\rm{FST}}^{\rm{AFDM}} (0,q) \right|^2 \right) = (\kappa - 1)MN.
    \end{equation}
The ESL of remaining sidelobes of AFDM ($k\ne 0, q\ne 0$) can be expressed as
    \begin{equation}\label{remaining_sidelobe_AFDM_FST}
    \mathbb{E}\left( \left| \mathcal{A}_{\rm{FST}}^{\rm{AFDM}} (k,q) \right|^2 \right) =  MN+ (\kappa -2)MN \cdot \delta_{  \langle 2Nc_1 k \rangle_{N}, 0}.
    \end{equation}

\begin{figure}[!t]
    \label{4_waveforms_fst_af}
    \centering
    \vspace{-0.3cm}
        \subfigure[OFDM]{
        \includegraphics[width = 0.45\columnwidth]{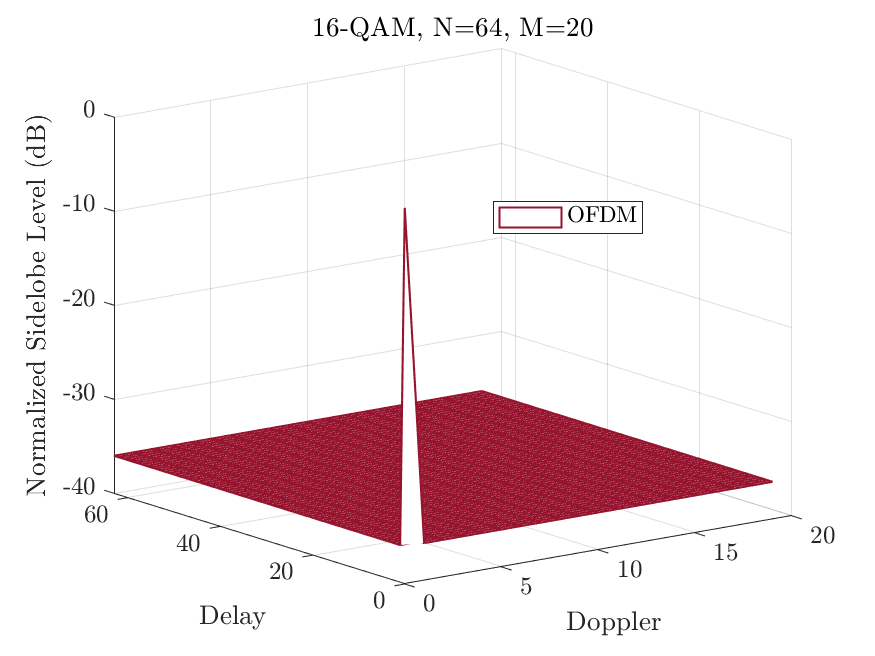} 
        \label{fig: FST-AF of 16QAM under OFDM compare}    
    }
    \subfigure[SC]{
        \includegraphics[width=0.45\columnwidth]{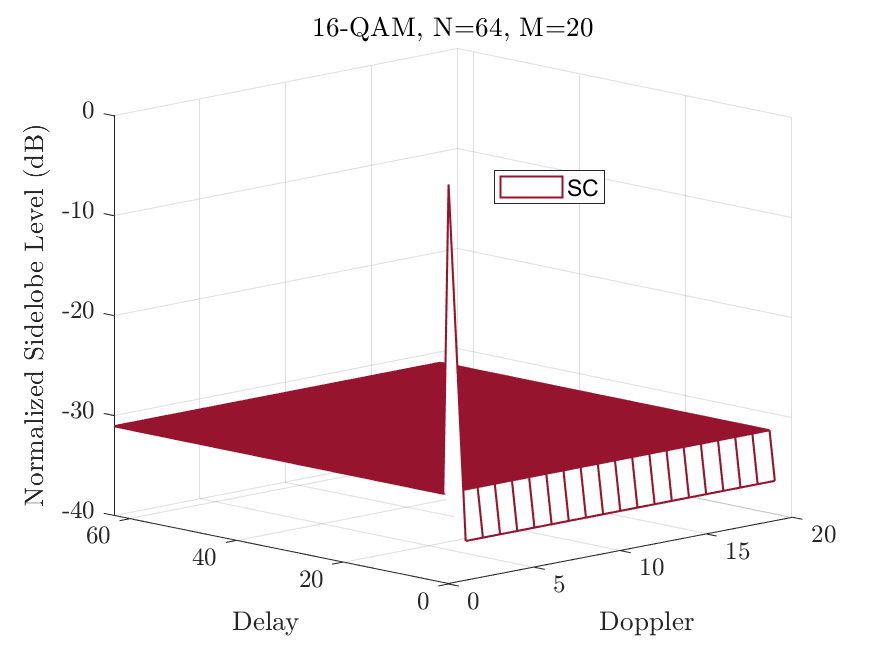}
        \label{fig: FST-AF of 16QAM under SC compare}    
    }

    \subfigure[OTFS]{
        \includegraphics[width = 0.45\columnwidth]{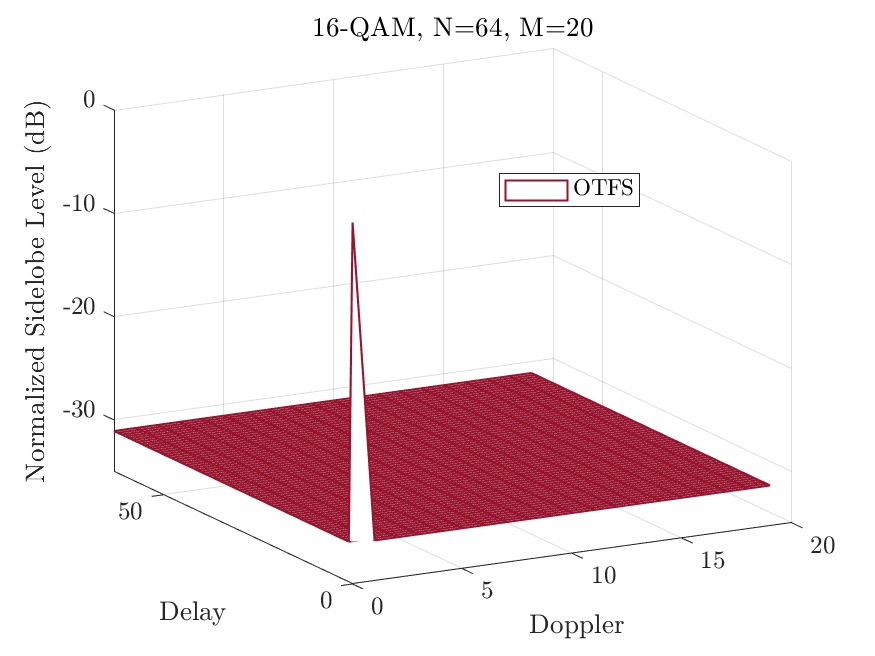} 
        \label{fig: FST-AF of 16QAM under OTFS compare}    
        }
    \subfigure[AFDM]{
        \includegraphics[width=0.45\columnwidth]{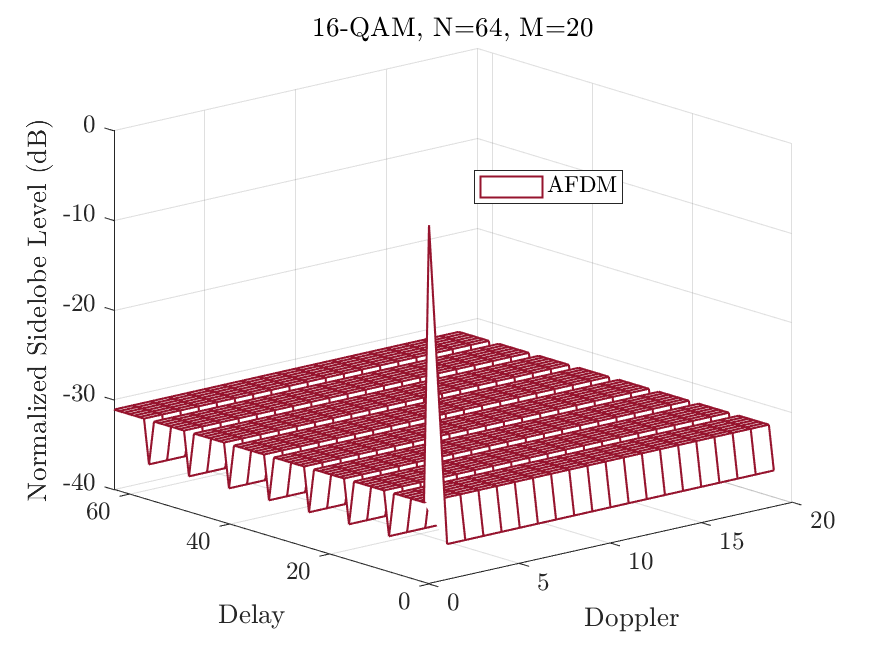}
        \label{fig: FST-AF of 16QAM under AFDM compare}    
        }
    \caption{The average squared FST-AF of OFDM, SC, OTFS and AFDM waveforms with $N = 64$ and $M=20$ under 16-QAM constellation.}
    \label{fig: the FST-AF performances compare}
\end{figure}

\textbf{Remark 5}. Fig. \ref{fig: the FST-AF performances compare} illustrates the resulting average squared FST-AF for the four modulation schemes under 16-QAM constellation with $N = 64$ and $M = 20$. For AFDM, we set $c_1 = \frac{1}{16}$ and $c_2 = \frac{1}{8}$. Unlike the DP-AF case, the FST-AF does not exhibit invariance in its normalized EISL. This indicates that the waveform can, in principle, be optimized to reduce the EISL as well as the ESL globally. In what follows, we establish the optimality of OFDM and OTFS for sub-Gaussian and super-Gaussian constellations, respectively.

%The zero-doppler cut $(m = 0, n \ne 0)$ of OTFS is
%\begin{equation}
%    \begin{aligned}
%        &\mathbb{E} \left( |a_{n,0}|^2\right) \\
%        &= MN + (\kappa-2)MN \left\| \frac{1}{MN} \left(\mathbf{1}_{M} \otimes \mathbf{1}_{N} \right)\left(\mathbf{f}_{M,0}^\ast\otimes\mathbf{f}_{N,n}\right)\right\|^2\\
%        &= MN
%    \end{aligned}
%\end{equation}

%The zero-delay cut $(m \ne 0, n=0)$ of OTFS is
%\begin{equation}
%    \begin{aligned}
%        &\mathbb{E} \left( |a_{0,m}|^2\right) \\
%        &= MN + (\kappa-2)MN \left\| \frac{1}{MN} \left(\mathbf{1}_{M} \otimes \mathbf{1}_{N} \right)\left(\mathbf{f}_{M,m}^\ast\otimes\mathbf{f}_{N,0}\right)\right\|^2\\
%        &= MN
%    \end{aligned}
%\end{equation}

%The average sidelobe $(m \ne 0, n \ne 0)$ of OTFS is
%\begin{equation}
%    \begin{aligned}
%        &\mathbb{E} \left( |a_{n,m}|^2\right) \\
%        &= MN + (\kappa-2)MN \left\| \frac{1}{MN} \left(\mathbf{1}_{M} \otimes \mathbf{1}_{N} \right)\left(\mathbf{f}_{M,m}^\ast\otimes\mathbf{f}_{N,n}\right)\right\|^2\\
%        &= MN
%    \end{aligned}
%\end{equation}

\subsection{Optimal Signaling Schemes under FST-AF Regime}
\begin{thm} OFDM and OTFS achieve the lowest ${\rm EISL}_{\rm{FST}}$ under sub-Gaussian and super-Gaussian constellations, respectively.
\end{thm}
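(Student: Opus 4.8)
The plan is to isolate the only waveform-dependent quantity in the EISL expression of Proposition~\ref{prop_EISL_FST_AF} and reduce the problem to a norm optimization over the unitary group. From \eqref{the EISL of stop-and-go model}, the term $M^2N^2 - MN$ is a constant independent of $(\mathbf{U},\mathbf{V})$, so that
\begin{equation*}
    {\rm EISL}_{\rm{FST}} = M^2N^2 - MN + (\kappa-2)MN\left(\left\|\mathbf{W}\right\|_4^4 - 1\right),
\end{equation*}
where $\mathbf{W} := \mathbf{V}^T \otimes \mathbf{F}_N\mathbf{U}$. Since $\mathbf{V}^T$ and $\mathbf{F}_N\mathbf{U}$ are both unitary, their Kronecker product satisfies $\mathbf{W}\in\mathcal{U}(MN)$. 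The sign of the coefficient $(\kappa-2)MN$ then dictates the optimization direction: for sub-Gaussian constellations $(\kappa<2)$ minimizing the EISL is equivalent to \emph{maximizing} $\|\mathbf{W}\|_4^4$, whereas for super-Gaussian constellations $(\kappa>2)$ it is equivalent to \emph{minimizing} $\|\mathbf{W}\|_4^4$.

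The key step is to establish universal bounds on the quartic norm of an arbitrary unitary matrix. Writing $p_{ij} := |w_{ij}|^2$, unitarity gives $\sum_j p_{ij} = 1$ for every row $i$, so each row lies on the probability simplex. Since $t\mapsto t^2$ is convex, for each fixed $i$ the row sum $\sum_j p_{ij}^2$ attains its minimum $1/(MN)$ at the uniform distribution and its maximum $1$ at a vertex of the simplex. Summing over the $MN$ rows yields
\begin{equation*}
    1 \le \left\|\mathbf{W}\right\|_4^4 = \sum_{i}\sum_{j} p_{ij}^2 \le MN,
\end{equation*}
with the lower bound attained iff every entry of $\mathbf{W}$ has magnitude $1/\sqrt{MN}$ (a flat unitary), and the upper bound attained iff $\mathbf{W}$ is a generalized permutation (monomial) matrix.

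It then remains to verify that OFDM and OTFS realize the two extreme values. For OFDM, $\mathbf{U}=\mathbf{F}_N^H$ and $\mathbf{V}=\mathbf{I}_M$ give $\mathbf{F}_N\mathbf{U}=\mathbf{I}_N$ and hence $\mathbf{W}=\mathbf{I}_{MN}$, a permutation matrix attaining $\|\mathbf{W}\|_4^4 = MN$; substituting back recovers \eqref{EISL_FST_OFDM} and shows that OFDM globally minimizes ${\rm EISL}_{\rm{FST}}$ in the sub-Gaussian regime. For OTFS, $\mathbf{U}=\mathbf{I}_N$ and $\mathbf{V}=\mathbf{F}_M^H$ give $\mathbf{W}=\mathbf{F}_M^\ast\otimes\mathbf{F}_N$, whose entries all have magnitude $1/\sqrt{MN}$; this flat unitary attains $\|\mathbf{W}\|_4^4 = 1$, recovering ${\rm EISL}_{\rm{FST}}^{\rm{OTFS}} = MN(MN-1)$ and establishing that OTFS globally minimizes ${\rm EISL}_{\rm{FST}}$ in the super-Gaussian regime.

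The main obstacle I anticipate is the tightness of the two bounds rather than the inequalities themselves: the convexity argument delivers the bounds immediately, but one must confirm that the extremal configurations (flat and monomial) remain compatible with the Kronecker structure $\mathbf{W}=\mathbf{V}^T\otimes\mathbf{F}_N\mathbf{U}$, i.e. that they are realized by \emph{admissible} factors $(\mathbf{U},\mathbf{V})$ rather than by an unconstrained $MN\times MN$ unitary. This is settled cleanly here, since $\mathbf{I}_{MN}$ and $\mathbf{F}_M^\ast\otimes\mathbf{F}_N$ are both attainable through the stated OFDM and OTFS choices; hence the global optima over the full class of orthogonal waveforms coincide exactly with the two endpoints of $\|\mathbf{W}\|_4^4$.
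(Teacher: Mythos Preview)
Your proposal is correct and follows essentially the same route as the paper: both reduce the problem to optimizing $\|\mathbf{V}^T\otimes\mathbf{F}_N\mathbf{U}\|_4^4$ over unitary factors, then identify the monomial and flat-unitary extremes with OFDM and OTFS respectively. The only difference is that the paper outsources the bounds $1\le\|\mathbf{W}\|_4^4\le MN$ to a citation, whereas you supply the elementary convexity/simplex argument explicitly and also spell out that the extremal $\mathbf{W}$ remain compatible with the Kronecker factorization---a point the paper leaves implicit.
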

\begin{proof}
For sub-Gaussian constellations ($\kappa < 2$), the ${\rm EISL}_{\rm{FST}}$ minimization problem is equivalent to
\begin{equation}
    \max\limits_{\mathbf{U} \in \mathcal{U}(N),\mathbf{V} \in \mathcal{U}(M)}\;\; \left\|\mathbf{V}^T\otimes\mathbf{F}_N\mathbf{U}\right\|_4^4,
\end{equation}
where the maximum is attained if
$\mathbf{V}^T\otimes\mathbf{F}_N\mathbf{U} = \mathbf{I}_{NM}$ \cite{liu2024ofdmachieveslowestranging}. This can be realized by letting
\begin{equation}\label{OFDM_UV}
    \mathbf{V} = \mathbf{I}_M, \quad \mathbf{U} = \mathbf{F}_N^H,
\end{equation}
which corresponds to the OFDM waveform.

For super-Gaussian constellations ($\kappa > 2$), the ${\rm EISL}_{\rm{FST}}$ minimization problem becomes
\begin{equation}
    \min\limits_{\mathbf{U} \in \mathcal{U}(N),\mathbf{V} \in \mathcal{U}(M)}\;\; \left\|\mathbf{V}^T\otimes\mathbf{F}_N\mathbf{U}\right\|_4^4,
\end{equation}
where the minimum is attained if all the entries of $\mathbf{V}^T\otimes\mathbf{F}_N\mathbf{U}$ have constant modulus \cite{liu2024ofdmachieveslowestranging}. In such a case, an achieving strategy would be
\begin{equation}\label{OTFS_UV}
    \mathbf{V} = \mathbf{F}_M^H,\quad \mathbf{U} = \mathbf{I}_N,
\end{equation}
which is an OTFS waveform.
Completing the proof.
\end{proof}

Furthermore, a stronger result for the optimality of the OFDM and OTFS can be established based on the following theorem.
\begin{thm}\label{thm_OFDM_OTFS_OPT}
    OFDM and OTFS achieve the lowest ESL at each delay-Doppler bin over the FST-AF for sub- and super-Gaussian constellations, respectively.
\end{thm}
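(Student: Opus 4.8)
The plan is to reduce the bin-wise optimality claim to a sharp two-sided bound on the single waveform-dependent term in Proposition \ref{prop_ESL_FST_AF}. Writing the ESL at any sidelobe bin $(k,q)\neq(0,0)$ as $MN + (\kappa-2)MN\,P(k,q)$, where $P(k,q) = a(q)\,b(k)$ with $a(q) = \| |\mathbf{V}|^2 \mathbf{f}_{M,q+1}^\ast \|^2$ and $b(k) = \| |\mathbf{U}^H\mathbf{F}_N^H|^2 \mathbf{f}_{N,k+1} \|^2$, I note that the only part depending on the modulation bases factorizes into a $\mathbf{V}$-dependent factor and a $\mathbf{U}$-dependent factor, both nonnegative. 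Consequently the sign of the coefficient $\kappa-2$ alone decides whether per-bin ESL minimization amounts to maximizing or minimizing $P(k,q)$, and the whole problem collapses to extremizing $a(q)$ and $b(k)$ separately.

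First I would establish the sharp bounds $0\le a(q)\le 1$ and $0\le b(k)\le 1$, hence $0\le P(k,q)\le 1$. The lower bounds are trivial. For the upper bounds the key observation is that, since $\mathbf{V}$ and $\mathbf{U}^H\mathbf{F}_N^H$ are unitary, the entrywise squared-magnitude matrices $|\mathbf{V}|^2$ and $|\mathbf{U}^H\mathbf{F}_N^H|^2$ are doubly stochastic. Combining this with the fact that every entry of a normalized DFT column has modulus $1/\sqrt{M}$ (resp. $1/\sqrt{N}$), an entrywise triangle inequality gives $\bigl|(|\mathbf{V}|^2\mathbf{f}_{M,q+1}^\ast)_j\bigr| \le \frac{1}{\sqrt M}\sum_i (|\mathbf{V}|^2)_{ji} = \frac{1}{\sqrt M}$, and summing the squared magnitudes over the $M$ entries yields $a(q)\le 1$; the bound on $b(k)$ follows identically. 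This makes the bound exact, not merely an order estimate.

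Second, I would split on the sign of $\kappa-2$. For sub-Gaussian constellations ($\kappa<2$) the coefficient is negative, so minimizing the ESL at each bin is equivalent to maximizing $P(k,q)$; since $P(k,q)\le 1$ and the OFDM choice $\mathbf{V}=\mathbf{I}_M$, $\mathbf{U}=\mathbf{F}_N^H$ makes $|\mathbf{V}|^2=\mathbf{I}_M$ and $|\mathbf{U}^H\mathbf{F}_N^H|^2=\mathbf{I}_N$, hence $a(q)=b(k)=1$ at every bin (as computed in the derivation of \eqref{ESL_FST_OFDM}), OFDM attains the per-bin minimum $(\kappa-1)MN$. For super-Gaussian constellations ($\kappa>2$) the coefficient is positive, so the task is to minimize $P(k,q)$; the OTFS choice $\mathbf{V}=\mathbf{F}_M^H$, $\mathbf{U}=\mathbf{I}_N$ makes $|\mathbf{V}|^2=\frac{1}{M}\mathbf{1}_M\mathbf{1}_M^T$, which projects $\mathbf{f}_{M,q+1}^\ast$ onto the all-ones direction and annihilates it for $q\neq 0$, and similarly for the $\mathbf{U}$ factor, so $P(k,q)=\delta_{k,0}\delta_{q,0}$ vanishes at every sidelobe bin (as in \eqref{FST_AF_OTFS}), attaining the per-bin minimum $MN$. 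Because both extremizers saturate the bound simultaneously at every $(k,q)$, the optimality is genuinely bin-wise and not only in the aggregate EISL sense.

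The step I expect to be the crux is the tight upper bound $P(k,q)\le 1$: only because equality is attainable can I conclude that OFDM is exactly optimal rather than near-optimal, and the doubly-stochastic structure of $|\mathbf{V}|^2$ and $|\mathbf{U}^H\mathbf{F}_N^H|^2$ is precisely what contracts the constant-modulus DFT columns to unit norm at best. The factorization $P=a(q)b(k)$ is then the enabling feature that lets OFDM saturate both factors (and OTFS zero out both) uniformly across all bins, so that no trade-off across $(k,q)$ arises.
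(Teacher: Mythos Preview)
Your proposal is correct and reaches the same conclusion as the paper, but via a genuinely different and more elementary route. The paper proves the upper bound $a(q)\le 1$ (and likewise $b(k)\le 1$) by splitting $\mathbf{f}_{M,q+1}^\ast$ into real and imaginary parts, invoking the Hardy--Littlewood--P\'olya characterization that multiplication by a doubly stochastic matrix produces a majorized vector, and then applying Schur-convexity of the squared $\ell_2$ norm to conclude $\|\,|\mathbf{V}|^2\mathbf{f}_{M,q+1}^\ast\,\|^2 \le \|\mathbf{f}_{M,q+1}^\ast\|^2 = 1$. You instead exploit the constant-modulus structure of the DFT column directly: each entry of $|\mathbf{V}|^2\mathbf{f}_{M,q+1}^\ast$ is bounded by $M^{-1/2}$ via the triangle inequality and the row-sum-one property, so the squared norm cannot exceed $1$. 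For the super-Gaussian side, the paper again appeals to majorization to establish the lower bound $\|\,|\mathbf{V}|^2\mathbf{f}_{M,q+1}^\ast\,\|^2 \ge \|\tfrac{1}{M}\mathbf{1}_M\mathbf{1}_M^T\mathbf{f}_{M,q+1}^\ast\|^2$, whereas you simply observe $P(k,q)\ge 0$ and note that OTFS attains zero at every sidelobe bin. Your argument is shorter and avoids majorization machinery, at the cost of relying on the specific constant-modulus structure of DFT vectors; the paper's majorization argument is more general in that it would yield $\|K\mathbf{x}\|\le\|\mathbf{x}\|$ for any doubly stochastic $K$ and any real vector $\mathbf{x}$, regardless of whether the entries of $\mathbf{x}$ share a common modulus.
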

\begin{proof}
See Appendix \ref{proof_OFDM_OTFS_OPT}.
\end{proof}

\textbf{Remark 6}. We note that the above optimality of OFDM and OTFS can be regarded as a two-dimensional extension of the findings in \cite{liu2024ofdmachieveslowestranging}, which show that OFDM and SC yield the lowest average ranging sidelobe levels under sub-Gaussian and super-Gaussian constellations, respectively. The analogy stems from the fact that, within the FST-AF framework, OTFS can be seen as the Fourier dual of OFDM under the 2D signaling schemes, just as SC is the 1D Fourier dual of OFDM.

\section{Conclusions}\label{sec_5}
This paper investigated the ambiguity properties of random communication-centric ISAC waveforms through a unified treatment of the discrete periodic ambiguity function (DP-AF) and the fast-slow-time ambiguity function (FST-AF). By modeling signals generated from arbitrary orthonormal modulation bases with i.i.d. constellation symbols, we derived general closed-form expressions for the expected sidelobe level (ESL) and the expected integrated sidelobe level (EISL). These formulas were then specialized to four representative modulation schemes, namely, SC, OFDM, OTFS, and AFDM, to quantify how waveform structure and constellation statistics shape the ambiguity behavior.

For the DP-AF, we showed that the normalized EISL is invariant across all orthogonal modulation bases and constellations, as a direct consequence of the volume invariance of Woodward's classic AF for deterministic signals. Beyond this integrated metric, a group-theoretic analysis based on the finite-dimensional Weyl–Heisenberg representation revealed stronger structural limitations. In particular, under sub-Gaussian constellations, no orthogonal waveform can attain the minimum ESL over any compact two-dimensional delay-Doppler region. The lowest sidelobes can appear only along a one-dimensional cut, such as zero-delay or zero-Doppler slices, or at a finite set of well-separated delay-Doppler bins, with at most $N-1$ such minima. These results rule out both globally and locally optimal low-ambiguity zones in the DP-AF framework, even though the EISL itself is fixed.

Under the FST-AF formulation, the situation is more favorable for waveform optimization. The EISL and ESL now depend on both the modulation basis and the kurtosis of constellations. Within this regime, OFDM is optimal for sub-Gaussian constellations, while OTFS is optimal for super-Gaussian ones. Both waveforms minimize not only the EISL but also the ESL at every delay-Doppler bin. This two-dimensional duality between OFDM and OTFS parallels earlier one-dimensional ranging results for OFDM and SC, and highlights how Fourier-domain structure and symbol statistics jointly determine ambiguity performance.

Overall, the presented framework clarifies the fundamental limits and opportunities for using random communication waveforms for sensing. Future work may extend these methodologies to ISAC signaling schemes with nonlinear modulation and non-i.i.d. constellation symbols.

\appendices
\section{Proof of Proposition \ref{prop_ESL_DP_AF}} \label{proof of squared AF}
Let $\mathbf{B} = \mathbf{U}^H = \left[\mathbf{b}_1,\mathbf{b}_2,\ldots,\mathbf{b}_N\right]$, we have
\begin{align}\label{2D AF -2}
    &\nonumber \left| \mathcal{A}_{\rm{DP}}(k,q) \right|^2 
    = \left| \mathbf{x}^H \mathbf{J}_{N,k}^T\mathbf{D}_{N,q}^\ast  \mathbf{x} \right|^2 = \left| \mathbf{s}^H\mathbf{B} \mathbf{J}_{N,k}^T\mathbf{D}_{N,q}^\ast \mathbf{B}^H\mathbf{s} \right|^2\\
    & \nonumber = \left| \sum_{n=1}^N \mathbf{s}^H\mathbf{b}_{\langle N-k+n\rangle_N} \mathbf{b}_n^H\mathbf{s}\cdot e^{-j \frac{2\pi}{N} q (n-1)} \right|^2\\
    & = \sum_{n,m} \mathbf{b}_n^H \mathbf{s} \mathbf{s}^H \mathbf{b}_{\langle N-k+n\rangle_N}  \mathbf{b}_{\langle N-k+m\rangle_N}^H \mathbf{s} \mathbf{s}^H \mathbf{b}_m  e^{-j \frac{2\pi}{N} q (n-m)}.
\end{align}
By denoting $\tilde{\mathbf{s}} = \mathrm{vec}(\mathbf{s} \mathbf{s}^H) = \mathbf{s}^\ast \otimes \mathbf{s}$, it follows that
\begin{align}\label{expanding 1}
    &\mathbf{b}_n^H \mathbf{s} \mathbf{s}^H \mathbf{b}_{\langle N-k+n\rangle_N} = \left(\mathbf{b}_{\langle N-k+n\rangle _N}^{T} \otimes \mathbf{b}_n^H\right)\tilde{\mathbf{s}},\\
    &\mathbf{b}_{\langle N-k+m\rangle_N}^H \mathbf{s} \mathbf{s}^H \mathbf{b}_m = \tilde{\mathbf{s}}^H\left(\mathbf{b}_{\langle N-k+m\rangle _N}^{\ast} \otimes \mathbf{b}_m\right).
\end{align}
This implies that
\begin{equation}
\begin{aligned}
    &\mathbb{E}\left(\mathbf{b}_n^H \mathbf{s} \mathbf{s}^H \mathbf{b}_{\langle N-k+n\rangle_N}  \mathbf{b}_{\langle N-k+m\rangle_N}^H \mathbf{s} \mathbf{s}^H \mathbf{b}_m\right)\\
    &= \left(\mathbf{b}_{\langle N-k+n\rangle _N}^{T} \otimes \mathbf{b}_n^H\right)\mathbb{E}\left(\tilde{\mathbf{s}}{\tilde{\mathbf{s}}}^H\right)\left(\mathbf{b}_{\langle N-k+m\rangle _N}^{\ast} \otimes \mathbf{b}_m\right).
\end{aligned}
\end{equation}
By using \cite[Proposition 1]{liu2024ofdmachieveslowestranging}, $\mathbb{E}\left(\tilde{\mathbf{s}}{\tilde{\mathbf{s}}}^H\right)$ may be decomposed as
\begin{equation} \label{S}
        \mathbb{E}\left(\tilde{\mathbf{s}}{\tilde{\mathbf{s}}}^H\right) = \mathbf{I}_{N^2} + \mathbf{S}_1 + \mathbf{S}_2 ,
    \end{equation}
where 
    \begin{align}
        \mathbf{S}_1 &= \mathrm{Diag} \left( \left[\kappa - 2, \mathbf{0}_N^T, \kappa - 2, \mathbf{0}_N^T,..., \kappa - 2 \right]^T \right), \\
        \mathbf{S}_2 &= \left[\mathbf{h}, \mathbf{0}_{N^2 \times N},\mathbf{h},..., \mathbf{h}, \mathbf{0}_{N^2 \times N}, \mathbf{h} \right],
    \end{align}    
with $\mathbf{0}_{N^2\times N}$ being the all-zero matrix of size $N ^2 \times N$, and 
\begin{equation}
    \mathbf{h} = \left[ 1,\mathbf{0}_N^T,1,...,1,\mathbf{0}_N^T,1 \right]^T.
\end{equation}
By noting $\mathbf{b}_n^H \mathbf{b}_m = \delta_{m,n}$, we have
\begin{align} 
    \nonumber&(\mathbf{b}_{\langle N-k+n\rangle_N}^{T} \otimes \mathbf{b}_n^H) \mathbf{I}_{N^2} (\mathbf{b}_{\langle N-k+m\rangle_N}^{*} \otimes \mathbf{b}_m) \\
    &= \mathbf{b}_{\langle N-k+n \rangle _N}^{T} \mathbf{b}_{\langle N-k+m\rangle_N}^{*}  \mathbf{b}_n^H \mathbf{b}_m 
    = \delta_{m,n},  \label{term1} \\
    \nonumber&(\mathbf{b}_{\langle N-k+n \rangle _N}^{T} \otimes \mathbf{b}_n^H) \mathbf{S}_1 (\mathbf{b}_{\langle N-k+m \rangle _N}^{*} \otimes \mathbf{b}_m) \\
    &= (\kappa - 2)  \mathbf{1}^T (\mathbf{b}_{\langle N-k+n \rangle _N} \odot \mathbf{b}_n^{*} \odot \mathbf{b}_{\langle N-k+m \rangle _N}^{*} \odot \mathbf{b}_m) ,\label{term2} \\   
    \nonumber&(\mathbf{b}_{\langle N-k+n \rangle _N}^{T} \otimes \mathbf{b}_n^H) \mathbf{S}_2 (\mathbf{b}_{\langle N-k+m\rangle _N}^{*} \otimes \mathbf{b}_m) \\
    &= \mathbf{1}^T (\mathbf{b}_{\langle N-k+n \rangle _N} \odot \mathbf{b}_n^{*}) \cdot \mathbf{1}^T (\mathbf{b}_{\langle N-k+m \rangle _N}^{*} \odot \mathbf{b}_m)
    = \delta_{k,0} , \label{term3}   
\end{align}
Combing \eqref{term1}, \eqref{term2} and \eqref{term3}, we arrive at
\begin{align}\label{average sidelobe of AF}
    &\nonumber \mathbb{E}(\left| \mathcal{A}_{\rm{DP}}(k,q) \right|^2)
    =  \sum_{n,m} \delta_{m,n} e^{-j\frac{2\pi}{N}q(n-m)} + (\kappa - 2) \times\\
    &\nonumber \sum_{n,m} \mathbf{1}^T_N(\mathbf{b}_{\langle N-k+n\rangle_N} \odot \mathbf{b}_n^{*} \odot \mathbf{b}_{\langle N-k+m\rangle_N}^{*} \odot \mathbf{b}_m) e^{-j\frac{2\pi}{N}q(n-m)}  \\
    &+ \sum_{n,m}\delta_{k,0} e^{-j\frac{2\pi}{N}q(n-m)}.  
\end{align}
The first term of \eqref{average sidelobe of AF} is
\begin{align}\label{first term}
    \sum_{n,m}  \delta_{m,n} e^{-j\frac{2\pi}{N}q(n-m)} = N.
\end{align}
The third term of \eqref{average sidelobe of AF} is
\begin{align}\label{third term}
    \sum_{n,m}\delta_{k,0} e^{-j\frac{2\pi}{N}q(n-m)} = N^2 \delta_{k,0} \delta_{q,0}.
\end{align}
Furthermore, by noting $\mathbf{B} = \mathbf{U}^H$ and thereby $b_{\ell,n} = u_{n,\ell}^\ast$, one may obtain
\begin{align}\label{2nd_term}
    &\nonumber\sum_{n,m} \mathbf{1}^T_N(\mathbf{b}_{\langle N-k+n\rangle_N} \odot \mathbf{b}_n^{*} \odot \mathbf{b}_{\langle N-k+m\rangle_N}^{*} \odot \mathbf{b}_m) e^{-j\frac{2\pi}{N}q(n-m)}\\
    & \nonumber = \sum_{\ell=1}^N\left|\sum_{n=1}^Nb_{\ell,\langle N-k+n\rangle_N}b_{\ell,n}^\ast e^{-j\frac{2\pi}{N}qn}\right|^2\\
    & \nonumber = \sum_{\ell=1}^N\left|\sum_{n=1}^Nu_{\langle N-k+n\rangle_N,\ell}^\ast u_{n,\ell} e^{-j\frac{2\pi}{N}qn}\right|^2\\
    & = \sum_{\ell=1}^N\left|\mathbf{u}_\ell^H\mathbf{J}_{N,k}^T\mathbf{D}_{N,q}^\ast\mathbf{u}_\ell\right|^2 = \sum_{n=1}^N\left|\mathbf{u}_n^H\mathbf{D}_{N,q}\mathbf{J}_{N,k}\mathbf{u}_n\right|^2.
\end{align}
Substituting \eqref{2nd_term}, \eqref{third term}, and \eqref{first term} into \eqref{average sidelobe of AF} leads to \eqref{the average sidelobe}, which completes the proof.

\section{Proof of Theorem \ref{thm_no_LAZ}}\label{appendix_thm_1}
We prove this theorem using group-theoretic arguments. To proceed, we establish the following lemma.
\begin{lemma}
Let $\mathbf{U} \in \mathcal{U}(N)$ be an arbitrary unitary matrix. Then, the set $\mathcal{G}_\mathbf{U}$ constitutes an Abelian subgroup of $\mathcal{H}(N)$.
\end{lemma}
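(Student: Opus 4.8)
The plan is to exploit the fact that the defining condition of $\mathcal{G}_{\mathbf{U}}$—namely that $\mathbf{U}^H\mathbf{G}_{k,q}\mathbf{U}$ be diagonal—is precisely the statement that the corresponding Weyl–Heisenberg element becomes diagonal under conjugation by $\mathbf{U}$. Since conjugation $\phi(\mathbf{M}):=\mathbf{U}^H\mathbf{M}\mathbf{U}$ is a group isomorphism of $\mathcal{U}(N)$, and since the family $\mathcal{D}$ of diagonal unitary matrices is itself a subgroup of $\mathcal{U}(N)$, I expect $\mathcal{G}_{\mathbf{U}}$ to coincide with $\mathcal{H}(N)\cap\phi^{-1}(\mathcal{D})$, an intersection of two subgroups, hence a subgroup of $\mathcal{H}(N)$. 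Commutativity will then be inherited from $\mathcal{D}$.

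First I would observe that the scalar prefactor $e^{j\ell 2\pi/N}$ plays no role in the membership condition: because $\mathbf{U}^H(e^{j\ell 2\pi/N}\mathbf{G}_{k,q})\mathbf{U}=e^{j\ell 2\pi/N}\,\mathbf{U}^H\mathbf{G}_{k,q}\mathbf{U}$, the conjugate is diagonal if and only if $\mathbf{U}^H\mathbf{G}_{k,q}\mathbf{U}$ is, so membership in $\mathcal{G}_{\mathbf{U}}$ is governed solely by the $\mathbf{G}_{k,q}$ factor. I would then verify the subgroup axioms directly. The identity $\mathbf{I}_N=\mathbf{G}_{0,0}$ lies in $\mathcal{G}_{\mathbf{U}}$ since $\mathbf{U}^H\mathbf{I}_N\mathbf{U}=\mathbf{I}_N$ is diagonal. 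For closure, given $\mathbf{A},\mathbf{B}\in\mathcal{G}_{\mathbf{U}}$, their product lies in $\mathcal{H}(N)$ because $\mathcal{H}(N)$ is a group; and since $\mathbf{U}^H\mathbf{A}\mathbf{B}\mathbf{U}=(\mathbf{U}^H\mathbf{A}\mathbf{U})(\mathbf{U}^H\mathbf{B}\mathbf{U})$ is a product of two diagonal matrices, it is diagonal, so $\mathbf{A}\mathbf{B}\in\mathcal{G}_{\mathbf{U}}$. For inverses, $\mathbf{A}^{-1}\in\mathcal{H}(N)$ and $\mathbf{U}^H\mathbf{A}^{-1}\mathbf{U}=(\mathbf{U}^H\mathbf{A}\mathbf{U})^{-1}$ is the inverse of a diagonal unitary matrix, which is again diagonal; hence $\mathbf{A}^{-1}\in\mathcal{G}_{\mathbf{U}}$.

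Commutativity follows from the same observation: for any $\mathbf{A},\mathbf{B}\in\mathcal{G}_{\mathbf{U}}$ the conjugates $\mathbf{U}^H\mathbf{A}\mathbf{U}$ and $\mathbf{U}^H\mathbf{B}\mathbf{U}$ are diagonal and therefore commute, so $\mathbf{U}^H\mathbf{A}\mathbf{B}\mathbf{U}=\mathbf{U}^H\mathbf{B}\mathbf{A}\mathbf{U}$, whence $\mathbf{A}\mathbf{B}=\mathbf{B}\mathbf{A}$ after conjugating back. The only point requiring care—and the closest thing to an obstacle—is the bookkeeping around the fact that a product of two elements of $\mathcal{G}_{\mathbf{U}}$ is \emph{a priori} only known to be a general Weyl–Heisenberg element $e^{j\ell_3 2\pi/N}\mathbf{G}_{k_3,q_3}$; I must confirm that the diagonalization condition survives the group law. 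The conjugation argument supplies this immediately, since it never needs the explicit phase cocycle produced by the non-commutation relation \eqref{non_commutation_property}. Everything else reduces to a routine check of the subgroup axioms.
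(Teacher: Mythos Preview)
Your proposal is correct and follows essentially the same route as the paper: both verify the subgroup axioms directly via the conjugation identity $\mathbf{U}^H\mathbf{A}\mathbf{B}\mathbf{U}=(\mathbf{U}^H\mathbf{A}\mathbf{U})(\mathbf{U}^H\mathbf{B}\mathbf{U})$ and deduce commutativity from the fact that diagonal matrices commute. Your framing of $\mathcal{G}_{\mathbf{U}}$ as $\mathcal{H}(N)\cap\phi^{-1}(\mathcal{D})$ is a slightly more conceptual wrapper, but the underlying steps match the paper's argument.
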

\begin{proof}
    Let $\mathbf{A}, \mathbf{B} \in \mathcal{G}_{\mathbf{U}}$. Since $\mathbf{U}^H \mathbf{A} \mathbf{U}$ and $\mathbf{U}^H \mathbf{B} \mathbf{U}$ are diagonal, they commute. Hence,
    \begin{align}
        \mathbf{A}\mathbf{B
        } &\nonumber = \mathbf{U}\mathbf{U}^H\mathbf{A}\mathbf{U}\mathbf{U}^H\mathbf{B}\mathbf{U}\mathbf{U}^H = \mathbf{U}\mathbf{U}^H\mathbf{B}\mathbf{U}\mathbf{U}^H\mathbf{A}\mathbf{U}\mathbf{U}^H\\&=\mathbf{B}\mathbf{A}
    \end{align}
    which proves that all elements in $\mathcal{G}_{\mathbf{U}}$ commute. Next, we verify the subgroup properties of $\mathcal{G}_{\mathbf{U}}$.
    \subsubsection{Closure}
    For any $\mathbf{A},\mathbf{B}\in\mathcal{G}_{\mathbf{U}}$,
    \begin{equation}
        \mathbf{U}^H\mathbf{A}\mathbf{B}\mathbf{U} = \mathbf{U}^H\mathbf{A}\mathbf{U}\mathbf{U}^H\mathbf{B}\mathbf{U},
    \end{equation}
    which is again diagonal. Since $\mathbf{A}\mathbf{B} \in \mathcal{H}(N)$, it follows that $\mathbf{A}\mathbf{B} \in \mathcal{G}_{\mathbf{U}}$.
    \subsubsection{Associativity}
    Associativity follows directly from the associativity of matrix multiplication.
    \subsubsection{Identity and Inverse}
    The identity element $\mathbf{I}_N\in\mathcal{G}(N)$ is diagonalized by any $\mathbf{U}$, and hence belongs to $\mathcal{G}_{\mathbf{U}}$. Moreover, note that all elements in $\mathcal{H}(N)$ are of full rank. Given $\mathbf{A}\in\mathcal{G}_{\mathbf{U}}$, we have
    \begin{equation}
        \mathbf{U}^H\mathbf{A}^{-1}\mathbf{U} = \left(\mathbf{U}^H\mathbf{A}\mathbf{U}\right)^{-1}
    \end{equation}
    is also a diagonal matrix. Therefore, for any element of $\mathcal{G}_{\mathbf{U}}$, its inverse element also belongs to $\mathcal{G}_{\mathbf{U}}$. 
    
    Thus, $\mathcal{G}_{\mathbf{U}}$ is a subgroup of $\mathcal{H}(N)$, and all of its elements commute. Therefore, $\mathcal{G}_{\mathbf{U}}$ is an Abelian subgroup of $\mathcal{H}(N)$.
\end{proof}
We now show that for any $(a,b) \in \mathbb{Z}_{N}^2$, it is impossible that
\begin{equation}\label{3_elements}
    (a,b), \quad(a,\langle b+1\rangle_N), \quad(\langle a+1 \rangle_N,b)
\end{equation}
belong to $\mathcal{I}_\mathbf{U}$ simultaneously. We prove this by contradiction. Suppose that the three elements all lie in $\mathcal{I}_\mathbf{U}$. Then the matrices in the set
\begin{equation}
    \mathcal{Q}:=\left\{e^{j\ell\frac{2\pi}{N}}\mathbf{G}_{a,b},\;\;e^{j\ell\frac{2\pi}{N}}\mathbf{G}_{a,\langle b+1\rangle_N},\;\;e^{j\ell\frac{2\pi}{N}}\mathbf{G}_{\langle a+1 \rangle_N,b}\right\}_{\ell\in\mathbb{Z}_N}
\end{equation}
are all diagonalizable by $\mathbf{U}$, implying that $\mathcal{Q}\subset\mathcal{G}_{\mathbf{U}}$. 

Since $\mathcal{G}_\mathbf{U}$ is a subgroup of $\mathcal{H}(N)$, it is closed under inversion. Note that 
\begin{align}
   &e^{-j(a+1)\frac{2\pi}{N}}\mathbf{G}_{a,b} = e^{j(N-a-1)\frac{2\pi}{N}}\mathbf{G}_{a,b}\in\mathcal{Q}\subset\mathcal{G}_{\mathbf{U}},\\
   &e^{-j\frac{2\pi}{N}} \mathbf{G}_{a,b} = e^{j(N-1)\frac{2\pi}{N}} \mathbf{G}_{a,b}\in\mathcal{Q}\subset\mathcal{G}_{\mathbf{U}}.
\end{align}
Therefore,
\begin{equation}
    e^{j(a+1)\frac{2\pi}{N}}\mathbf{G}_{a,b}^{-1}\in \mathcal{G}_\mathbf{U},\;\;e^{j\frac{2\pi}{N}} \mathbf{G}_{a,b}^{-1}\in \mathcal{G}_\mathbf{U}.
\end{equation}
By recalling the non-commutation relation \eqref{non_commutation_property}, for any elements $\mathbf{G}_{m,n}$ and $\mathbf{G}_{p,q}$, it holds that
\begin{align}
    \mathbf{G}_{m,n}\mathbf{G}_{p,q} = \mathbf{D}_{N,n}\mathbf{J}_{N,m}\mathbf{D}_{N,q}\mathbf{J}_{N,p} = e^{\frac{j2mq\pi}{N}}\mathbf{G}_{m+p,n+q}.
\end{align}
This indicates that
\begin{equation}
    e^{j\frac{2\pi}{N}}\mathbf{G}_{0,1} =  e^{j(a+1)\frac{2\pi}{N}}\mathbf{G}_{a,b}^{-1}\mathbf{G}_{a,\langle b+1\rangle_N}\in\mathcal{G}_{\mathbf{U}},
\end{equation}
and
\begin{equation}
     e^{j\frac{2\pi}{N}}\mathbf{G}_{1,0} =  e^{j\frac{2\pi}{N}} \mathbf{G}_{a,b}^{-1}\mathbf{G}_{\langle a+1\rangle_N,b}\in\mathcal{G}_{\mathbf{U}}.
\end{equation}
Since $e^{j\frac{2\pi}{N}}\mathbf{G}_{0,1},e^{j\frac{2\pi}{N}}\mathbf{G}_{1,0}$ generate $\mathcal{H}\left(N\right)$, and $\mathbf{G}_{0,0}\in\mathcal{G}_{\mathbf{U}}\subset\mathcal{H}\left(N\right)$ as the identity element, it follows that all the elements of $\mathcal{H}(N)$ can be obtained by the multiplications of $e^{j\frac{2\pi}{N}}\mathbf{G}_{0,1},e^{j\frac{2\pi}{N}}\mathbf{G}_{1,0}$, and $\mathbf{G}_{0,0}$. Hence one must have
\begin{equation}
    \mathcal{G}_{\mathbf{U}} = \mathcal{H}(N).
\end{equation}
However, $\mathcal{H}(N)$ is not Abelian, whereas $\mathcal{G}_{\mathbf{U}}$ is shown to be an Abelian group, leading to a contradiction. As a consequence, the three elements in \eqref{3_elements} cannot all belong to $\mathcal{I}_\mathbf{U}$ simultaneously. This completes the proof.

\section{Proof of Proposition \ref{prop_size_S}}\label{proof_prop_size_S}
Let us consider the space of all $N\times N$ complex matrices, denoted as $\mathbb{C}^{N\times N}$, which may be viewed as an $N^2$-dimensional Hilbert space equipped with the inner product:
\begin{equation}
    \langle\mathbf{A},\mathbf{B}\rangle = \operatorname{Tr}\left(\mathbf{A}^H\mathbf{B}\right),\quad \mathbf{A},\mathbf{B}\in\mathbb{C}^{N\times N}.
\end{equation}
It is straightforward to verify that, for all $\left(k,q\right),\left(m,n\right)\in\mathbb{Z}_{N}^2$,
\begin{equation}\label{G_orthogonality}
    \langle\mathbf{G}_{k,q},\mathbf{G}_{m,n}\rangle = \delta_{k,m}\delta_{q,n}.
\end{equation}
Hence, $\{\mathbf{G}_{k,q}\}_{(k,q)\in\mathbb{Z}_N^2}$ forms an orthogonal basis of $\mathbb{C}^{N\times N}$. Since $\mathbf{U}$ is unitary, it can be readily shown that
\begin{equation}
    \langle\mathbf{U}^H\mathbf{G}_{k,q}\mathbf{U},\mathbf{U}^H\mathbf{G}_{m,n}\mathbf{U}\rangle = \delta_{k,m}\delta_{q,n},\;\; (k,q),(m,n)\in\mathbb{Z}_N^2,
\end{equation}
which indicates that $\{\mathbf{U}^H\mathbf{G}_{k,q}\mathbf{U}\}_{(k,q)\in\mathbb{Z}_N^2}$ is also an orthogonal basis of $\mathbb{C}^{N\times N}$.

Let $\mathcal{D}(N)\subset\mathbb{C}^{N\times N}$ denote the subspace of all diagonal matrices, which has a dimension of $N$. By the definition of $\mathcal{G}_{\mathbf{U}}$, we have
\begin{equation}
\mathbf{U}^H\mathbf{G}_{k,q}\mathbf{U}\in \mathcal{D}(N), \quad (k,q)\in\mathcal{I}_{\mathbf{U}}.
\end{equation}
Moreover, it follows from \eqref{G_orthogonality} that
\begin{equation}
    \langle\mathbf{U}^H\mathbf{G}_{k,q}\mathbf{U},\mathbf{U}^H\mathbf{G}_{m,n}\mathbf{U}\rangle = \delta_{k,m}\delta_{q,n},\;\;(k,q),(m,n)\in\mathcal{I}_{\mathbf{U}}.
\end{equation}
Thus the set $\{\mathbf{U}^H\mathbf{G}_{k,q}\mathbf{U}\}_{(k,q)\in\mathcal{I}_{\mathbf{U}}}$ is a collection of mutually orthogonal non-zero vectors contained in $\mathcal{D}(N)$. In any $N$-dimensional inner product space, there can be at most $N$ orthogonal non-zero vectors. Therefore we have $\mathrm{Card\left(\mathcal{I}_{\mathbf{U}}\right)} = \mathrm{Card}\left(\{\mathbf{U}^H\mathbf{G}_{k,q}\mathbf{U}\}_{(k,q)\in\mathcal{I}_{\mathbf{U}}}\right)\leq N$.

\section{Proof of Proposition \ref{dispersion_properties}}\label{dispersion_proof}
By recalling the non-commutation relation \eqref{non_commutation_property}, for any elements $\mathbf{G}_{m,n}$ and $\mathbf{G}_{p,q}$, it holds that
\begin{align}
    &\mathbf{G}_{m,n}\mathbf{G}_{p,q}  = e^{\frac{j2mq\pi}{N}}\mathbf{G}_{m+p,n+q},\\
    &\mathbf{G}_{p,q}\mathbf{G}_{m,n}  =  e^{\frac{j2np\pi}{N}}\mathbf{G}_{m+p,n+q}.
\end{align}
This implies that $\mathbf{G}_{m,n}$ and $\mathbf{G}_{p,q}$ commute if and only if the symplectic form
\begin{equation}\label{sympletic_form}
    \omega\left\{(m,n),(p,q)\right\} = \langle np-mq\rangle_{N} = 0.
\end{equation} 

Let $\mathbf{a} = \left(a_1,a_2\right),\mathbf{b} = \left(b_1,b_2\right),\mathbf{c} = \left(c_1,c_2\right)$ be three elements in the index set $\mathcal{I}_{\mathbf{U}}$. By the closure of $\mathcal{G}_{\mathbf{U}}$, it holds immediately that $\mathbf{b}-\mathbf{a}\in\mathcal{I}_{\mathbf{U}}$ and $\mathbf{c}-\mathbf{a}\in\mathcal{I}_{\mathbf{U}}$. Recalling that $\mathcal{G}_{\mathbf{U}}$ is Abelian, the points $\mathbf{b}-\mathbf{a}$ and $\mathbf{c}-\mathbf{a}$ must satisfy \eqref{sympletic_form}. Thus, the area of the triangle formed by the three points $\mathbf{a},\mathbf{b},\mathbf{c}$ may be computed by
\begin{equation}
    \Delta_{\mathbf{abc}} = \frac{1}{2}\left|\omega\left(\mathbf{b}-\mathbf{a},\mathbf{c}-\mathbf{a}\right)\right| = \frac{kN}{2}, \quad k\in\mathbb{Z}_N.
\end{equation}
Hence, if $k = 0$, the three points are collinear. Otherwise, they form a triangle with an area at least $\frac{N}{2}$.

\section{Proof of Proposition \ref{the average squared DP-AF of OTFS}} \label{the proof of OTFS}
For OTFS waveform, its modulation basis $\mathbf{U} = \mathbf{F}_{N_1}^H \otimes \mathbf{I}_{N_2}$ may be expanded as
\begin{equation}\label{the U of OTFS}
\begin{aligned}
     \mathbf{U} 
%    &= \left[ \begin{matrix}
%        f_{1,1}^\ast \mathbf{I}_{N_2} & f_{1,2}^\ast \mathbf{I}_{N_2} & ... & f_{1,{N_1}}^\ast \mathbf{I}_{N_2} \\
%        f_{2,1}^\ast \mathbf{I}_{N_2} & f_{2,2}^\ast \mathbf{I}_{N_2} & ... & f_{2,{N_1}}^\ast \mathbf{I}_{N_2} \\
%        .&.&.&. \\
%        .&.&.&. \\
%        f_{{N_1},1}^\ast \mathbf{I}_{N_2} & f_{{N_1},2}^\ast \mathbf{I}_{N_2} & ... & f_{{N_1},{N_1}}^\ast \mathbf{I}_{N_2} \\
%    \end{matrix} \right]\\
    = \frac{1}{\sqrt{N_1}} \left[ \begin{matrix}
        \mathbf{I}_{N_2} & \mathbf{I}_{N_2} & \dots & \mathbf{I}_{N_2} \\
        \mathbf{I}_{N_2} & e^{j \frac{2\pi}{N_1}} \mathbf{I}_{N_2} & \dots & e^{j \frac{2\pi}{N_1}({N_1}-1)} \mathbf{I}_{N_2} \\
        \vdots&\vdots&\vdots&\vdots\\
        \mathbf{I}_{N_2} & e^{j \frac{2\pi}{N_1}({N_1}-1)} \mathbf{I}_{N_2} & \dots & e^{j \frac{2\pi}{N_1}({N_1}-1)^2} \mathbf{I}_{N_2} \\
    \end{matrix} \right],  
\end{aligned}
\end{equation}
where
\begin{equation}\label{column of OTFS}
\begin{aligned}
     &\mathbf{u}_n = \frac{1}{\sqrt{N_1}}\mathbf{J}_{N,\langle n-1\rangle_{N_2}} \times\\
     & \left[ 1, \mathbf{0}_{N_2-1}^T, e^{j \frac{2 \pi}{N_1}(v-1)}, \mathbf{0}_{N_2-1}^T, ..., e^{j \frac{2 \pi}{N_1}(v-1)(N_1-1)}, \mathbf{0}_{N_2-1}^T\right]^T,   
\end{aligned}
\end{equation}
with $v = \lceil \frac{n}{N_2} \rceil$. Furthermore, note that
\begin{align}\label{otfs_waveform_term1}
    &\nonumber \sum_{n=1}^N \left| \mathbf{u}_n^H \mathbf{D}_{N,q} \mathbf{J}_{N,k} \mathbf{u}_n\right|^2 \\
%    &\nonumber = \sum_{i=1}^N \left| \mathbf{f}_{N,q+1}^H \mathrm{Diag}(\mathbf{u}_i^\ast)  \mathbf{J}_{N,k} 
% \mathbf{u}_i\right|^2\\
    %&\nonumber = N \sum_{i=1}^N \mathbf{f}_{N,q+1}^H \mathrm{Diag}(\mathbf{u}_n^\ast)  \mathbf{J}_{N,k}  \mathbf{u}_n \mathbf{u}_n^H \mathbf{J}_{N,k}^T \mathrm{Diag}(\mathbf{u}_n) \mathbf{f}_{N,q+1}\\
    & = N \mathbf{f}_{N,q+1}^H \left[ \sum_{n=1}^N \mathrm{Diag}(\mathbf{u}_n^\ast)  \mathbf{J}_{N,k} \mathbf{u}_n \mathbf{u}_n^H \mathbf{J}_{N,k}^T \mathrm{Diag}(\mathbf{u}_n) \right] \mathbf{f}_{N,q+1}.
\end{align}   
According to the structure of $\mathbf{u}_n$, we have
\begin{equation}\label{Diag_J_un}
\begin{aligned}
    \mathrm{Diag}(\mathbf{u}_n^\ast)  \mathbf{J}_{N,k}
     \mathbf{u}_n  = e^{-j \frac{2\pi}{N_1} (v-1)\frac{k}{N_2}} \mathrm{Diag}(\mathbf{u}_n^\ast) 
     \mathbf{u}_n \delta_{\langle k \rangle_{N_2},0}.
\end{aligned}
\end{equation}
Plugging \eqref{Diag_J_un} into \eqref{otfs_waveform_term1} yields
\begin{align}
    &\nonumber \mathrm{Diag}(\mathbf{u}_n^\ast)  \mathbf{J}_{N,k} \mathbf{u}_n \mathbf{u}_n^H \mathbf{J}_{N,k}^T \mathrm{Diag}(\mathbf{u}_n) \\
    & = \mathrm{Diag}(\mathbf{u}_n^\ast) \mathbf{u}_n \mathbf{u}_n^H  \mathrm{Diag}(\mathbf{u}_n) \delta_{\langle k \rangle_{N_2},0},
\end{align}
where
\begin{align}
     &\nonumber \mathrm{Diag}(\mathbf{u}_n^\ast) \mathbf{u}_n \\
     &\nonumber= \frac{1}{N_1}\mathbf{J}_{N,\langle n-1\rangle_{N_2}} \left[ 1, \mathbf{0}_{N_2-1}^T, 1, \mathbf{0}_{N_2-1}^T, ..., 1, \mathbf{0}_{N_2-1}^T\right]^T \\
     &\nonumber = \frac{1}{N_1}\mathbf{J}_{N,\langle n-1\rangle_{N_2}} \left( \mathbf{1}_{N_1} \otimes [1, \mathbf{0}_{N_2-1}^T]^T \right)\\
     & = \frac{1}{N_1} \left( \mathbf{1}_{N_1} \otimes \mathbf{J}_{N_2,\langle n-1\rangle_{N_2}}[1, \mathbf{0}_{N_2-1}^T]^T \right).
\end{align}
Accordingly, the following result follows directly
\begin{align}
    &\nonumber \mathrm{Diag}(\mathbf{u}_n^\ast) \mathbf{u}_n \mathbf{u}_n^H  \mathrm{Diag}(\mathbf{u}_n)\\
    & = \frac{1}{N_1^2}\mathbf{1}_{N_1}\mathbf{1}_{N_1}^T \otimes \mathrm{Diag}\left( \mathbf{J}_{N_2,\langle n-1\rangle_{N_2}}[1, \mathbf{0}_{N_2-1}^T]^T \right).
\end{align}
It then follows that the summation term is given by
\begin{equation}\label{otfs_waveform_term2}
\begin{aligned}
    &\sum_{n=1}^N \mathrm{Diag}(\mathbf{u}_n^\ast)  \mathbf{J}_{N,k}
     \mathbf{u}_n\mathbf{u}_n^H \mathbf{J}_{N,k}^T \mathrm{Diag}(\mathbf{u}_n)\\
    &= \frac{1}{N_1} (\mathbf{1}_{N_1,N_1} \otimes \mathbf{I}_{N_2})  \delta_{\langle k \rangle_{N_2},0}. 
\end{aligned}
\end{equation}
Plugging \eqref{otfs_waveform_term2} into \eqref{otfs_waveform_term1} yields
\begin{align}\label{otfs_waveform_term3}
    &\nonumber N \mathbf{f}_{N,q+1}^H \left[ \sum_{n=1}^N \mathrm{Diag}(\mathbf{u}_n^\ast)  \mathbf{J}_{N,k} \mathbf{u}_n \mathbf{u}_n^H \mathbf{J}_{N,k}^T \mathrm{Diag}(\mathbf{u}_n) \right] \mathbf{f}_{N,q+1} \\
    &=\frac{N}{N_1} \mathbf{f}_{N,q+1}^H \left(\mathbf{1}_{N_1,N_1} \otimes \mathbf{I}_{N_2} \right)   \mathbf{f}_{N,q+1} \delta_{\langle k \rangle_{N_2},0},
\end{align}
where the $n$th element of $\mathbf{f}_{N,q+1}^H  (\mathbf{1}_{N_1}\mathbf{1}_{N_1}^T \otimes \mathbf{I}_{N_2})$ is calculated as
\begin{align}
    &\nonumber \frac{e^{j \frac{2\pi}{N}q \langle n-1 \rangle_{N_2}}}{\sqrt{N}}\sum_{m=1}^{N_1} e^{j \frac{2\pi}{N}q(m-1)N_2}  = \frac{N_1 e^{j \frac {2\pi}{N}q \langle n-1 \rangle_{N_2}} \delta_{{\langle q \rangle}_{N_1},0}}{\sqrt{N}}.
\end{align}
Thus, the value of $\mathbf{f}_{N,q+1}^H  (\mathbf{1}_{N_1,N_1} \otimes \mathbf{I}_{N_2})\mathbf{f}_{N,q+1}$ is
\begin{align}\label{otfs_waveform_term4}
    &\nonumber \sum_{n=1}^N \frac{N_1}{\sqrt{N}} e^{j \frac {2\pi}{N}q \langle n-1 \rangle_{N_2}} \delta_{{\langle q \rangle}_{N_1},0} \times \frac{1}{\sqrt{N} } e^{-j \frac {2\pi}{N}q (n-1)}\\
    &\nonumber = \sum_{n=1}^N \frac{1}{N_2} e^{j \frac {2\pi}{N}q \langle n-1 \rangle_{N_2}} e^{-j \frac {2\pi}{N}q (n-1)} \delta_{{\langle q \rangle}_{N_1},0}\\
    &= N_1 \delta_{{\langle q \rangle}_{N_1},0}.
\end{align}
Combing \eqref{otfs_waveform_term4}, \eqref{otfs_waveform_term3}, and \eqref{the average sidelobe} leads to \eqref{DP_AF_OTFS}.

\section{Proof of Proposition \ref{afdm_sp_af_squared}} \label{the proof of AFDM}
Let $\mathbf{U} = \mathbf{\Lambda}_{c_1}^H \mathbf{F}_N^H \mathbf{\Lambda}_{c_2}^H$, the $n$-th column of $\mathbf{U}$ is
\begin{equation}\label{column of AFDM}
    \mathbf{u}_n = e^{j 2 \pi c_2 (n-1)^2}\mathbf{\Lambda}_{c_1}^H \mathbf{f}_{N,n}^\ast.
\end{equation}
Plugging \eqref{column of AFDM} into \eqref{the average sidelobe}, we can arrive at
    \begin{align}\label{AFDM-addition-proof}
        \nonumber&\sum_{n=1}^N \left| \mathbf{u}_n^H \mathbf{D}_{N,q} \mathbf{J}_{N,k} \mathbf{u}_n \right|^2 \\
        \nonumber&=\sum_{n=1}^N \left| \mathbf{f}_{N,i}^T \mathbf{\Lambda}_{c_1}  \mathbf{D}_{N,q} \mathbf{J}_{N,k} \mathbf{\Lambda}_{c_1}^H \mathbf{f}_{N,i}^\ast \right|^2\\
        \nonumber& =\sum_{n=1}^N \left|(\mathbf{c}_1 \odot \mathbf{f}_{N,i})^T  \mathbf{D}_{N,q} \mathbf{J}_{N,k} (\mathbf{c}_1 \odot \mathbf{f}_{N,i})^\ast \right|^2\\
        & =\sum_{n=1}^N \left|\mathbf{d}_n^T  \mathbf{D}_{N,q} \mathbf{J}_{N,k} \mathbf{d}_n^\ast\right|^2 =\sum_{n=1}^N \left| \mathbf{d}_n^H  \mathbf{D}_{N,q}^\ast \mathbf{J}_{N,k} \mathbf{d}_n\right|^2,
    \end{align}
where $\mathbf{d}_n = \mathbf{c}_n\odot \mathbf{f}_n$, and the $m$-th element of $\mathbf{d}_n$ is
\begin{equation}\label{the element of d}
\begin{aligned}
    d_{m,n} 
    %&= \frac{1}{\sqrt{N}} e^{-j \frac{2 \pi}{N} [(i-1) (m-1) + Nc_1(m-1)^2]}\\
     = \frac{1}{\sqrt{N}} e^{-j \frac{2 \pi}{N} [(n-1) (m-1) + \frac{p}{2}(m-1)^2]},
\end{aligned}
\end{equation}
where $p = 2Nc_1 \in\mathbb{Z}$. Furthermore, \eqref{AFDM-addition-proof} can be simplified to \eqref{calculation of di fq}, where $\lambda \in\mathbb{Z}_N$.
\begin{figure*}[htbp]
% ensure that we have normalsize text
\normalsize
% Store the current equation number.
\newcounter{MYtempeqncnt3}
\setcounter{MYtempeqncnt3}{\value{equation}}
% Set the equation number to one less than the one
% desired for the first equation here.
% The value here will have to changed if equations
% are added or removed prior to the place these
% equations are referenced in the main text.
\setcounter{equation}{151}
\begin{align}\label{calculation of di fq}
    &\nonumber \sum_{n=1}^N \left| \mathbf{d}_n^H  \mathrm{Diag}(\mathbf{f}_{N,q+1}) \mathbf{J}_{N,k} \mathbf{d}_n \right|^2  = \sum_{n=1}^N \left| \sum_{m=1}^N f_{m,q+1} d_{m,n}^\ast d_{\langle m-k \rangle_N,n} \right|^2\\     
    &\nonumber = \frac{1}{N^3} \sum_{n=1}^N \left|  \sum_{m=1}^N  e^{-j \frac{2 \pi}{N} q (m-1)} e^{j \frac{2 \pi}{N} [(n-1) (m-1) + \frac{p}{2}(m-1)^2]}  e^{-j \frac{2 \pi}{N} [(n-1) (\langle m-k \rangle_N-1) + \frac{p}{2}(\langle m-k \rangle_N-1)^2]}\right|^2\\
    &\nonumber = \frac{1}{N^3} \sum_{n=1}^N \left| \sum_{m=1}^N e^{-j \frac{2 \pi}{N} q (m-1)} \cdot e^{j \frac{2 \pi}{N} \left[(n-1) (m-\langle m-k \rangle_N) \right]} \cdot e^{ j \frac{2 \pi}{N} \left[ \frac{p}{2}(m-1)^2 - \frac{p}{2}(\langle m-k \rangle_N-1)^2 \right]} \right|^2\\
    &\nonumber = \frac{1}{N^3} \sum_{n=1}^N \left| \sum_{m=1}^N e^{-j \frac{2 \pi}{N} q (m-1)} \cdot e^{ j \frac{\pi}{N} p \left[ (m-1)^2 - (m-k+\lambda N-1)^2 \right]} \right|^2  = \frac{1}{N^2} \left| \sum_{m=1}^N  e^{-j \frac{2 \pi}{N} q (m-1)} \cdot e^{ -j \frac{2\pi}{N} (m-1) \left[(2Nc_1 (\lambda N-k) \right]} \right|^2\\
    &= \frac{1}{N^2} \left| \sum_{m=1}^N e^{ -j \frac{2\pi}{N} (m-1) \left[(2Nc_1 (\lambda N-k) +q\right]} \right|^2 = \delta_{\langle 2Nkc_1 -q\rangle_N,0}.
\end{align}
\vspace*{3pt} % 调整线与公式之间的距离
\hrulefill % 添加一条水平线
\end{figure*}
Substituting \eqref{calculation of di fq} into \eqref{the average sidelobe} leads to the average squared DP-AF of AFDM in \eqref{dp_af_afdm}.

\section{Proof of Proposition \ref{prop_EISL_FST_AF}} \label{proof of the EISL of FST-AF}
For convenience, let us denote
\setcounter{equation}{152}
\begin{equation}
     \operatorname{vec}\left(\mathbf{X}_{\rm FT}\right) = \left(\mathbf{V}^T\otimes\mathbf{F}_N\mathbf{U}\right){\mathbf{s}} \triangleq \mathbf{Q}^H{\mathbf{s}},
\end{equation}
where ${\mathbf{s}} = \operatorname{vec}\left(\mathbf{S}\right)\in\mathbb{C}^{NM}$, and $\mathbf{Q} = \left[{\mathbf{q}}_1,{\mathbf{q}}_2,\ldots,{\mathbf{q}}_{NM}\right]$ is a size-$NM$ unitary matrix. By recalling \eqref{FST_AF_volume}, we have
\begin{align}
     &\mathbb{E}\left(\left\|\mathbf{A}_{\rm{FST}}\right\|_F^2\right) \nonumber = MN\mathbb{E}\left(\left\|\mathbf{F}_N\mathbf{U}\mathbf{S}\mathbf{V}\right\|_4^4\right)\\ & =MN\mathbb{E}\left(\left\|\mathbf{Q}^H{\mathbf{s}}\right\|_4^4\right)
    = MN\sum\limits_{n=1}^{NM} \mathbb{E}\left(\left|\mathbf{q}_n^H{\mathbf{s}}\right|^4\right).
    %&\nonumber = MN\sum\limits_{n=1}^{NM} \left(\mathbf{q}_n^T\otimes\mathbf{q}_n^H\right)\tilde{\mathbf{S}}\left(\mathbf{q}_n^\ast\otimes\mathbf{q}_n\right),
\end{align}
Expanding $\left|\mathbf{q}_n^H{\mathbf{s}}\right|^2$ yields
\begin{align}
    \left|\mathbf{q}_n^H{\mathbf{s}}\right|^2 = \mathbf{q}_n^H{\mathbf{s}} {\mathbf{s}}^H \mathbf{q}_n \nonumber&= (\mathbf{q}_n^T \otimes \mathbf{q}_n^H) \operatorname{vec}({\mathbf{s}} {\mathbf{s}}^H) \\
    & = \operatorname{vec}^H ({\mathbf{s}} {\mathbf{s}}^H) (\mathbf{q}_n^\ast \otimes \mathbf{q}_n).
\end{align}
Accordingly,
\begin{align}
    &\mathbb{E}\left(\left\|\mathbf{A}\right\|_F^2\right) = MN\sum\limits_{n=1}^{NM} \left(\mathbf{q}_n^T\otimes\mathbf{q}_n^H\right)\tilde{\mathbf{S}} \left(\mathbf{q}_n^\ast\otimes\mathbf{q}_n\right),
\end{align}
where $\tilde{\mathbf{S}} = \mathbb{E}\left\{\operatorname{vec}\left({\mathbf{s}}{\mathbf{s}}^H\right)\operatorname{vec}^H\left({\mathbf{s}}{\mathbf{s}}^H\right)\right\} $, which can be again decomposed as \cite{liu2024ofdmachieveslowestranging}
    \begin{equation} \label{S}
        \tilde{\mathbf{S}}  = \mathbf{I}_{M^2 N^2} + \tilde{\mathbf{S}}_1 + \tilde{\mathbf{S}}_2.
    \end{equation}
Here, 
    \begin{align}
        \tilde{\mathbf{S}}_1 &= \mathrm{Diag} \left( \left[\kappa - 2, \mathbf{0}_{MN}^T, \kappa - 2, \mathbf{0}_{MN}^T,..., \kappa - 2 \right]^T \right), \\
        \tilde{\mathbf{S}}_2 &= \left[\tilde{\mathbf{h}}, \mathbf{0}_{M^2N^2 \times MN},\tilde{\mathbf{h}},..., \tilde{\mathbf{h}}, \mathbf{0}_{M^2N^2 \times MN}, \tilde{\mathbf{h}} \right],
    \end{align}    
where 
    \begin{equation}
        \tilde{\mathbf{h}} = \left[ 1,\mathbf{0}_{MN}^T,1,...,1,\mathbf{0}_{MN}^T,1 \right]^T.
    \end{equation}
%\begin{align}
%\tilde{\mathbf{S}} &\nonumber = \mathbb{E}\left\{\operatorname{vec}\left(\tilde{\mathbf{s}}\tilde{\mathbf{s}}^H\right)\operatorname{vec}^H\left(\tilde{\mathbf{s}}\tilde{\mathbf{s}}^H\right)\right\}\\
%& = \left[ {\begin{array}{*{20}{c}}
%  {{\mu _4}}&{{\mathbf{0}}_{NM}^T}&1&{{\mathbf{0}}_{NM}^T}&1& \cdots &1 \\ 
%  {{{\mathbf{0}}_{NM}}}&{{{\mathbf{I}}_{NM}}}&{{{\mathbf{0}}_{NM}}}&{{{\mathbf{0}}_{NM}}}&{{{\mathbf{0}}_{NM}}}& \cdots &{{{\mathbf{0}}_{NM}}} \\ 
%  1&0&{{\mu _4}}& \ldots &1& \ldots &1 \\ 
%  {{{\mathbf{0}}_{NM}}}&{{{\mathbf{0}}_{NM}}}&{{{\mathbf{0}}_{NM}}}&{{{\mathbf{I}}_{NM}}}& \vdots & \cdots & \vdots  \\ 
%  1&0&1&0&{{\mu _4}}& \cdots &1 \\ 
%   \vdots & \vdots & \vdots & \vdots & \vdots & \ddots &{{{\mathbf{0}}_{NM}}} \\ 
%  1&0&1&0&1& \cdots &{{\mu _4}} 
%\end{array}} \right].
%\end{align}
By leveraging the same technique as in Appendix \ref{proof of squared AF}, we have
\begin{align}
\left(\mathbf{q}_n^T\otimes\mathbf{q}_n^H\right)\tilde{\mathbf{S}}\left(\mathbf{q}_n^\ast\otimes\mathbf{q}_n\right) = 2 + \left(\kappa - 2\right)\left\|\mathbf{q}_n\right\|_4^4.
\end{align}
It follows that
\begin{align}\label{E_A_FST}
    \mathbb{E}\left(\left\|\mathbf{A}_{\rm{FST}}\right\|_F^2\right) = 2M^2N^2 + \left(\kappa - 2\right)MN\left\|\mathbf{Q}\right\|_4^4.
\end{align}
Substituting \eqref{E_A_FST} and the average mainlobe level \eqref{FST_mainlobe} into \eqref{EISL_FST_definition} yields \eqref{the EISL of stop-and-go model}.
% The EISL can then be expressed as
% \begin{align}
%     &\nonumber {\rm EISL} = \mathbb{E}\left(\left\|\mathbf{A}\right\|_F^2\right) - \mathbb{E}\left(\left|a_{1,1}\right|^2\right)\\
%     &\nonumber = M^2N^2-MN + \left(\kappa - 2\right)MN\left(\left\|\mathbf{Q}\right\|_4^4 - 1\right)\\
%     & =  M^2N^2-MN + \left(\kappa - 2\right)MN\left(\left\|\mathbf{V}^T\otimes\mathbf{F}_N\mathbf{U}\right\|_4^4 - 1\right).
% \end{align}

\section{Proof of Proposition \ref{prop_ESL_FST_AF}} \label{proof ot the average sidelobe of RDM}
According to \eqref{A_FST}
\begin{align}
    a_{n,m} &\nonumber = \sqrt{MN}\mathbf{f}_{N,n}^H\left|\mathbf{F}_N\mathbf{U}\mathbf{S}\mathbf{V}\right|^2\mathbf{f}_{M,m}\\
    &\nonumber = \sqrt{MN}\left(\mathbf{f}_{M,m}^T\otimes\mathbf{f}_{N,n}^H\right)\operatorname{vec}\left(\left|\mathbf{F}_N\mathbf{U}\mathbf{S}\mathbf{V}\right|^2\right)\\
    & \nonumber= \sqrt{MN}\left(\mathbf{f}_{M,m}^T\otimes\mathbf{f}_{N,n}^H\right)\left|\operatorname{vec}\left(\mathbf{F}_N\mathbf{U}\mathbf{S}\mathbf{V}\right)\right|^2 \\
    & = \sqrt{MN}\left(\mathbf{f}_{M,m}^T\otimes\mathbf{f}_{N,n}^H\right)\left|\mathbf{Q}^H{{\mathbf{s}}}\right|^2.
\end{align}
It follows that
\begin{align}
    \nonumber \mathbb{E}\left(\left|a_{n,m}\right|^2\right) &= MN\left(\mathbf{f}_{M,m}^T\otimes\mathbf{f}_{N,n}^H\right) \cdot\\& \mathbb{E}\left(\left|\mathbf{Q}^H{{\mathbf{s}}}\right|^2{\left|\mathbf{Q}^H{{\mathbf{s}}}\right|^2}^H\right)\left(\mathbf{f}_{M,m}^\ast\otimes\mathbf{f}_{N,n}\right).
\end{align}
Now let us analyze the matrix $\mathbb{E}\left(\left|\mathbf{Q}^H{{\mathbf{s}}}\right|^2{\left|\mathbf{Q}^H{{\mathbf{s}}}\right|^2}^H\right)$, whose $\left(l,k\right)$-th element reads
\begin{align}
&\nonumber\mathbb{E}\left(\left|\mathbf{q}_l^H{{\mathbf{s}}}\right|^2{\left|\mathbf{q}_k^H{{\mathbf{s}}}\right|^2}^H\right) = (\mathbf{q}_l^T\otimes \mathbf{q}_l^H)\tilde{\mathbf{S}}(\mathbf{q}_k^\ast\otimes \mathbf{q}_k)\\
&= \delta_{l,k} + 1 + \left(\kappa - 2\right)\left\|\mathbf{q}_l\odot\mathbf{q}_k\right\|^2.
\end{align}
Let us denote $\mathbf{w} = \sqrt{MN}\left(\mathbf{f}_{M,m}^\ast\otimes\mathbf{f}_{N,n}\right)$, and represent its $k$-th entry as $w_k$, we have
\begin{align}
    &\nonumber \mathbb{E}\left(\left|a_{n,m}\right|^2\right) = \sum_{l,k}^{NM}\left(\delta_{l,k} + 1 + \left(\kappa - 2\right)\left\|\mathbf{q}_l\odot\mathbf{q}_k\right\|^2\right)w_l w_k^\ast\\
    &\nonumber = \sum_{l=1}^{NM}\left|w_l\right|^2 + \left|\sum_{l=1}^{NM}w_l\right|^2 + \sum_{l,k}^{NM}\left(\kappa - 2\right)\left\|\mathbf{q}_l\odot\mathbf{q}_k\right\|^2 w_l w_k^\ast\\
    & = MN + \left|\mathbf{w}^H\mathbf{1}_{NM}\right|^2 + \left(\kappa - 2\right)\left\|\left|\mathbf{Q}\right|^2\mathbf{w}\right\|^2.
\end{align}
It can be readily shown
\begin{align}
    \mathbf{w}^H\mathbf{1}_{NM} &\nonumber = \sqrt{MN}\mathbf{f}_{N,n}^H\mathbf{1}_N\mathbf{1}_M^T\mathbf{f}_{M,m} = NM\delta_{1,n}\delta_{1,m},
\end{align}
which suggests that
\begin{align}
    \mathbb{E}\left(\left|a_{n,m}\right|^2\right) &= \nonumber MN + M^2N^2\delta_{1,n}\delta_{1,m} \\ & + \left(\kappa - 2\right)MN\left\|\left|\mathbf{Q}\right|^2\left(\mathbf{f}_{M,m}^\ast\otimes\mathbf{f}_{N,n}\right)\right\|^2,
\end{align}
where $n = 1,2,\cdots,N$ and $m = 1,2,\cdots,M$. By noting that
\begin{align}
    &\nonumber \left\| \left|\mathbf{Q}\right|^2\left(\mathbf{f}_{M,m}^\ast\otimes\mathbf{f}_{N,n}\right) \right\|^2 = \left\| \left| \mathbf{V}^\ast \otimes \mathbf{U}^H \mathbf{F}_N^H \right|^2\left(\mathbf{f}_{M,m}^\ast\otimes\mathbf{f}_{N,n}\right) \right\|^2\\
    &\nonumber = \left\| \left( \left| \mathbf{V}^\ast \right|^2 \otimes \left|\mathbf{U}^H \mathbf{F}_N^H \right|^2\right) \left(\mathbf{f}_{M,m}^\ast\otimes\mathbf{f}_{N,n}\right) \right\|^2\\
    % &\nonumber = \left\| \left( \left| \mathbf{V}^\ast \right|^2 \mathbf{f}_{M,m}^\ast \right) \otimes \left( \left|\mathbf{U}^H \mathbf{F}_N^H \right|^2 \mathbf{f}_{N,n} \right) \right\|^2\\
    & = \left\| \left| \mathbf{V} \right|^2 \mathbf{f}_{M,m}^\ast \right\|^2 \left\| \left|\mathbf{U}^H \mathbf{F}_N^H \right|^2 \mathbf{f}_{N,n} \right\|^2,
\end{align}
the average squared FST-AF maybe rearranged as \eqref{the average sidelobe of stop-and-go model}.

%Accordingly, the expected sidelobe level of the $\left(l,k\right)$-th delay-Doppler bin is given by
%\begin{align}
%    &\nonumber\mathcal{A}_{l,k}=  MN + M^2N^2\delta_{0,l}\delta_{0,k} \\ & + \left(\kappa - 2\right)MN\left\|\left|\mathbf{V}^\ast\otimes\mathbf{U}^H\mathbf{F}_N^H\right|^2\left(\mathbf{f}_{M,k+1}^\ast\otimes\mathbf{f}_{N,l+1}\right)\right\|^2.
%\end{align}

\section{Proof of Proposition \ref{FST_AF_AFDM_prop}} \label{proof of AFDM of 'stop-and-go' model}
% \eqref{the EISL of stop-and-go model}, we obtain 
% \begin{align}
%     \left\|\mathbf{V}^T\otimes\mathbf{F}_N\mathbf{U}\right\|_4^4 
%     &\nonumber = M \left\|\mathbf{F}_N \mathbf{\Lambda}_{c_1}^{H} \mathbf{F}_N^H \mathbf{\Lambda}_{c_2}^H \right\|_4^4\\
%     &= M \left\|\mathbf{F}_N \mathbf{\Lambda}_{c_1}^{H} \mathbf{F}_N^H \right\|_4^4.
% \end{align}
% Let us denote $\mathbf{W} = \mathbf{F}_N \mathbf{\Lambda}_{c_1}^{\ast} \mathbf{F}_N^H$, which is apparently a circulant matrix. Due to the circularity of $\mathbf{W}$, we have
% \begin{align}
%     \left\| \mathbf{W} \right \|_4^4 = N \left\| \mathbf{w}_1 \right \|_4^4,
% \end{align}
% where $\mathbf{w}_1$ is the first column of $\mathbf{W}$, expressed as
% \begin{align}
%     \mathbf{w}_1 = \frac{1}{\sqrt{N}} \mathbf{F}_N \mathbf{c}_1^\ast.
% \end{align}
% Hence, the the ${\rm EISL}_{\rm{FST}}$ of AFDM is
% \begin{align}
%     &\nonumber {\rm{EISL}}_{\rm{FST}}^{\rm{AFDM}}\\
%     & = M^2N^2 - MN + (\kappa -2)MN (\frac{M}{N} \left\| \mathbf{F}_N \mathbf{c}_1^\ast\right\|_4^4 - 1).
% \end{align}
By plugging the AFDM basis $\mathbf{U} = \mathbf{\Lambda}_{c_1}^{H} \mathbf{F}_N^H \mathbf{\Lambda}_{c_2}^H$ and $\mathbf{V} = \mathbf{I}_M$ into \eqref{the average sidelobe of stop-and-go model}, we have
\begin{align}
    &\nonumber \left\| \left| \mathbf{V} \right|^2 \mathbf{f}_{M,q+1}^\ast \right\|^2 \left\| \left|\mathbf{U}^H \mathbf{F}_N^H \right|^2 \mathbf{f}_{N,k+1} \right\|^2\\
    &\nonumber = \left\| \mathbf{f}_{M,q+1}^\ast \right\|^2 \left\|  \left| \mathbf{\Lambda}_{c_2} \mathbf{F}_N \mathbf{\Lambda}_{c_1} \mathbf{F}_N^H\right|^2 \mathbf{f}_{N,k+1} \right\|^2\\
    &\nonumber = \left\|  \left| \mathbf{\Lambda}_{c_2} \mathbf{F}_N \mathbf{\Lambda}_{c_1} \mathbf{F}_N^H\right|^2 \mathbf{f}_{N,k+1} \right\|^2\\
    &= \left\|  \left| \mathbf{F}_N \mathbf{\Lambda}_{c_1} \mathbf{F}_N^H\right|^2 \mathbf{f}_{N,k+1} \right\|^2.
\end{align}
The average squared FST-AF for AFDM is
\begin{align}\label{the average sidelobe of AFDM ''stop-and-go''}
    &\nonumber\mathbb{E}\left( \left| \mathcal{A}_{\rm{FST}}^{\rm{AFDM}} (k,q) \right|^2 \right) = M^2N^2\delta_{k,0}\delta_{q,0}+ MN\\
    &\quad\quad\quad\quad+ \left(\kappa - 2\right)MN  \left\|  \left| \mathbf{F}_N \mathbf{\Lambda}_{c_1} \mathbf{F}_N^H\right|^2 \mathbf{f}_{N,k+1} \right\|^2.
\end{align}
To proceed, let us discuss the cases of $c_1 = 0$ and $c_1\ne 0$, respectively.
\subsubsection{$c_1 = 0$}
The EISL of the FST-AF for AFDM  is
\begin{align}
    &\nonumber M^2 N^2 - MN + (\kappa -2)MN(MN-1) \\
    &= MN(MN-1)(\kappa -1).    
\end{align}
%and
%\begin{equation}
%\begin{aligned}
%    \left|\mathbf{V}^\ast\otimes\mathbf{U}^H\mathbf{F}_N^H\right|^2 
%    %&= \left| \mathbf{I}_M \otimes \Lambda_{c_2} \mathbf{F}_N \Lambda_{c_1} \mathbf{F}_N^H\right|^2 \\
%    &= \left| \mathbf{I}_M \otimes \Lambda_{c_2}\right|^2   = \mathbf{I}_{M} \otimes \mathbf{I}_N
%\end{aligned}
%\end{equation}
Accordingly, the ESL of AFDM is
    \begin{equation}
    \begin{aligned}
    \mathbb{E}\left( \left| \mathcal{A}_{\rm{FST}}^{\rm{AFDM}} (k,q) \right|^2 \right)
    %&= \nonumber MN + \left(\kappa - 2\right)MN\left\| (\mathbf{I}_{M} \otimes \mathbf{I}_N) \left(\mathbf{f}_{M,m}^\ast\otimes\mathbf{f}_{N,n}\right)\right\|^2\\
    %&= \nonumber MN + \left(\kappa - 2\right)MN  \left\|\mathbf{f}_{N,n}\right\|^2\\
    &= (\kappa -1)MN, \quad k,q \ne 0,
    \end{aligned}
    \end{equation}
which is equivalent to those of the OFDM. This is because AFDM reduces to OFDM when $c_1 = 0$.

\subsubsection{$c_1 \ne 0$}
%We need to prove 
%\begin{equation}
%    \left\|  \left| \mathbf{F}_N \Lambda_{c_1} \mathbf{F}_N^H\right|^2 \mathbf{f}_{N,n} \right\|^2 = \delta \left( \langle 2Nc_1 (n-1) \rangle_{N} \right)
%\end{equation}
Let $\mathbf{Z} = \mathbf{F}_N \mathbf{\Lambda}_{c_1} \mathbf{F}_N^H$. The first row of $\mathbf{\left| Z \right|}^2$ can be calculated as
\begin{equation}\label{the first row of Z^2}
    \frac{1}{N} \left[ \begin{matrix}
         \left|\mathbf{f}_{N,1}^H \mathbf{c}_1\right|^2, \left|\mathbf{f}_{N,2}^H \mathbf{c}_1\right|^2 , ... , \left|\mathbf{f}_{N,N}^H \mathbf{c}_1\right|^2
    \end{matrix}\right],
\end{equation}
Since $\left| \mathbf{Z} \right|^2$ is also circulant, it can be decomposed as
\begin{equation}
    \left| \mathbf{Z} \right|^2 = \left| \mathbf{F}_N \mathbf{\Lambda}_{c_1} \mathbf{F}_N^H \right|^2 =  \mathbf{F}_N \mathrm{Diag}(\tilde{\mathbf{z}}) \mathbf{F}_N^H.
\end{equation}
Consequently we have
\begin{equation}
\begin{aligned}
    %&\left\|  \left| \mathbf{F}_N \Lambda_{c_1} \mathbf{F}_N^H \right|^2 \mathbf{f}_{N,n} \right\|^2  = 
    \left\|  \left| \mathbf{Z} \right|^2 \mathbf{f}_{N,k+1} \right\|^2
    = \left\|  \mathbf{F}_N \mathrm{Diag}(\tilde{\mathbf{z}}) \mathbf{F}_N^H \mathbf{f}_{N,k+1} \right\|^2 
    = |\tilde{z}_{k+1}|^2,    
\end{aligned}
\end{equation}
where $k \in\mathbb{Z}_N$. According to \eqref{the first row of Z^2}, $\tilde{\mathbf{z}}$ may be written as
\begin{align}
    \tilde{\mathbf{z}}& \nonumber= \sqrt{N} \mathbf{F}_N  \frac{1}{N} \left[\begin{matrix}
        \left|\mathbf{f}_{N,1}^H \mathbf{c}_1\right|^2,
        \left|\mathbf{f}_{N,2}^H \mathbf{c}_1\right|^2, ...,
        \left|\mathbf{f}_{N,N}^H \mathbf{c}_1\right|^2
    \end{matrix}\right]^T \\
    &\nonumber = \frac{1}{\sqrt{N}}  \mathbf{F}_N \left\{ (\mathbf{F}_N^H \mathbf{c}_1) \odot (\mathbf{F}_N^H \mathbf{c}_1)^\ast \right\}
\end{align}
which suggests that $\tilde{\mathbf{z}}$ is the periodic auto-correlation of $\mathbf{c}_1$. Therefore, the $k$th element of $\tilde{\mathbf{z}}$ is 
\begin{equation}
    \tilde{z}_{k+1} = \frac{1}{N} \mathbf{c}_1^T \mathbf{J}_{N,k} \mathbf{c}_1^\ast, \quad k\in\mathbb{Z}_N.
\end{equation}
Consequently,
\begin{align}\label{the value of |yn|^2}
    &\nonumber \left|\tilde{z}_{k+1} \right|^2 = \frac{1}{N^2} \left| \sum_{n=1}^N c_n c_{\langle n -k \rangle_N}^\ast \right|^2\\
    &\nonumber = \frac{1}{N^2} \left| \sum_{i=1}^N e^{-j 2 \pi c_1 \left[ (n-1)^2 - (\ell N + n -k -1)^2 \right]} \right|^2\\
    &\nonumber = \frac{1}{N^2} \left| \sum_{n=1}^N e^{j 2 \pi c_1 \left[ 2(n-1)(\ell  N - k) + (\ell  N - k)^2 \right]} \right|^2\\
    &\nonumber = \frac{1}{N^2} \left| \sum_{n=1}^N e^{j \frac{2 \pi}{N} \left[ 2N c_1 (n-1)(\ell  N - k)\right]} \right|^2\\
    & = \delta _{ \langle 2Nc_1 k \rangle_{N}, 0},
\end{align}
where $\ell\in\mathbb{Z}$ is an arbitrary integer. Taking \eqref{the value of |yn|^2} into \eqref{the average sidelobe of AFDM ''stop-and-go''}, we arrive at \eqref{FST_AF_AFDM}.

To evaluate the EISL of the FST-AF for AFDM, observe that the ESL equals $\left(\kappa - 1\right)MN$ whenever $ \delta _{ \langle 2Nc_1 k \rangle_{N}, 0} = 1$, and equals $MN$ otherwise. Let $\phi \in \mathbb{Z}$ denote $\gcd\left(2Nc_1, N\right)$. Then
\begin{align}
    \delta _{ \langle 2Nc_1 k \rangle_{N}, 0} = 1, \text{if and only if}\;\;k = \frac{nN}{\phi},
\end{align}
where $n = 0,1,\ldots,\phi-1$. Accordingly, among the $MN-1$ sidelobes, $M\phi - 1$ of them attain ESL of $\left(\kappa - 1\right)MN$, while the remaining $M(N - \phi)$ sidelobes take the value $MN$, which leads directly to \eqref{EISL_FST_AFDM}.

\section{Proof of Theorem \ref{thm_OFDM_OTFS_OPT}}\label{proof_OFDM_OTFS_OPT}
 Let us revisit \eqref{the average sidelobe of stop-and-go model}. For sub-Gaussian ($\kappa < 2$) constellations, minimizing the ESL at each delay-Doppler bin over FST-AF is equivalent to solving the following optimization problem:
\begin{align}
    \max\limits_{\mathbf{U} \in \mathcal{U}(N),\mathbf{V} \in \mathcal{U}(M)}\;\;\left\| \left| \mathbf{V} \right|^2 \mathbf{f}_{M,q+1}^\ast \right\|^2 \left\| \left|\mathbf{U}^H \mathbf{F}_N^H \right|^2 \mathbf{f}_{N,k+1} \right\|^2.
\end{align}
Since both $\mathbf{V}$ and $\mathbf{U}^H \mathbf{F}_N^H$ are unitary matrices, it is clear that both $\left| \mathbf{V}\right|^2$ and $\left|\mathbf{U}^H \mathbf{F}_N^H \right|^2$ are bistochastic matrices, i.e., each of their rows and columns sums to 1. The optimality of OFDM then follows from majorization theory and the Schur-convexity of the $\ell_2$ norm.

Given two real vectors $\mathbf{a},\mathbf{b} \in \mathbb{R}^{N}$, let $\mathbf{a}^\downarrow$ denote the vector obtained by sorting the entries of $\mathbf{a}$ in descending order, i.e., $a_1^\downarrow \ge a_2^\downarrow \ge \dots \ge a_N^\downarrow$, and similarly for $\mathbf{b}^\downarrow$. If
\begin{align}
\sum_{n=1}^N a_n = \sum_{n=1}^N b_n,\quad
\sum_{n=1}^k a_n^\downarrow \geq \sum_{n=1}^k b_n^\downarrow,
\end{align}
we say that $\mathbf{a}$ majorizes $\mathbf{b}$, written as $\mathbf{a} \succ \mathbf{b}$. A necessary and sufficient condition for $\mathbf{a} \succ \mathbf{b}$ is that there exists a bistochastic matrix $\mathbf{K}$ such that $\mathbf{b} = \mathbf{K}\mathbf{a}$.

Next, note that
\begin{align}
    \left\| \left| \mathbf{V} \right|^2 \mathbf{f}_{M,q+1}^\ast \right\|^2 = \left\| \left| \mathbf{V} \right|^2 \mathbf{f}_{M,q+1}^R \right\|^2 + \left\| \left| \mathbf{V} \right|^2 \mathbf{f}_{M,q+1}^I \right\|^2,
\end{align}
where $\mathbf{f}_{M,q+1}^R$ and $\mathbf{f}_{M,q+1}^I$ are the real and imaginary parts of $\mathbf{f}_{M,q+1}$. It follows that
\begin{align}
     \left| \mathbf{V} \right|^2\mathbf{f}_{M,q+1}^R\prec\mathbf{f}_{M,q+1}^R,\;\;\left| \mathbf{V} \right|^2\mathbf{f}_{M,q+1}^I\prec\mathbf{f}_{M,q+1}^I.
\end{align}
Because the $\ell_2$ norm is Schur-convex and hence monotonically ordered with respect to majorization, it follows that
\begin{align}
    \left\| \left| \mathbf{V} \right|^2 \mathbf{f}_{M,q+1}^R \right\|^2 \le \left\|  \mathbf{f}_{M,q+1}^R \right\|^2,\left\| \left| \mathbf{V} \right|^2 \mathbf{f}_{M,q+1}^R \right\|^2 \le \left\|  \mathbf{f}_{M,q+1}^R \right\|^2,
\end{align}
and hence
\begin{equation}
     \left\| \left| \mathbf{V} \right|^2 \mathbf{f}_{M,q+1}^\ast \right\|^2 \le \left\|  \mathbf{f}_{M,q+1}^\ast \right\|^2,
\end{equation}
where the equality holds if $\mathbf{V} = \mathbf{I}_N$. By the same reasoning, we obtain
\begin{equation}
    \left\| \left|\mathbf{U}^H \mathbf{F}_N^H \right|^2 \mathbf{f}_{N,k+1} \right\|^2 \leq  \left\| \mathbf{f}_{N,k+1} \right\|^2,
\end{equation}
where the equality is achieved if $\mathbf{U} = \mathbf{F}_N^H$. This corresponds to the OFDM waveform.

For super-Gaussian ($\kappa > 2$) constellations, minimizing the ESL under the FST-AF is equivalent to
\begin{align}
    \min\limits_{\mathbf{U} \in \mathcal{U}(N),\mathbf{V} \in \mathcal{U}(M)}\;\;\left\| \left| \mathbf{V}^\ast \right|^2 \mathbf{f}_{M,q+1}^\ast \right\|^2 \left\| \left|\mathbf{U}^H \mathbf{F}_N^H \right|^2 \mathbf{f}_{N,k+1} \right\|^2.
\end{align}
To proceed, note that the matrix $\frac{1}{N} \mathbf{1}_N \mathbf{1}_N^T$ is a uniform bistochastic matrix. For any size-$N$ bistochastic matrix $\mathbf{K}$, it satisfies
\begin{equation}
    \frac{1}{N} \mathbf{1}_N \mathbf{1}_N^T\mathbf{K} = \frac{1}{N} \mathbf{1}_N \mathbf{1}_N^T.
\end{equation}
Thus,
\begin{align}
    &\frac{1}{M} \mathbf{1}_M \mathbf{1}_M^T\left| \mathbf{V} \right|^2 \mathbf{f}_{M,q+1}^\ast = \frac{1}{M} \mathbf{1}_M \mathbf{1}_M^T \mathbf{f}_{M,q+1}^\ast,\\
    &\frac{1}{N} \mathbf{1}_N \mathbf{1}_N^T\left|\mathbf{U}^H \mathbf{F}_N^H \right|^2 \mathbf{f}_{N,k+1} =  \frac{1}{N} \mathbf{1}_N \mathbf{1}_N^T\mathbf{f}_{N,k+1}.
\end{align}
Since the $\ell_2$ norm is Schur-convex, applying the majorization theory again yields
 \begin{align}
    & \left\| \left| \mathbf{V} \right|^2 \mathbf{f}_{M,q+1}^\ast \right\| ^2 \geq \left\| \frac{1}{M} \mathbf{1}_M \mathbf{1}_M^T \mathbf{f}_{M,q+1}^\ast \right\|^2, \\
    & \left\| \left|\mathbf{U}^H \mathbf{F}_N^H \right|^2 \mathbf{f}_{N,k+1} \right\| ^2 \geq \left\| \frac{1}{N} \mathbf{1}_N \mathbf{1}_N^T \mathbf{f}_{N,k+1} \right\|^2,
\end{align}
where the equalities are achieved if $\mathbf{V} = \mathbf{F}_M^H$ and $\mathbf{U} = \mathbf{I}_N$, corresponding to the OTFS waveform, completing the proof.

\bibliographystyle{IEEEtran}
\bibliography{reference_zy}

\end{document}